\def\eqref#1{equation~\ref{#1}}
\def\1{\bm{1}}
\DeclareMathAlphabet{\mathsfit}{\encodingdefault}{\sfdefault}{m}{sl}
\SetMathAlphabet{\mathsfit}{bold}{\encodingdefault}{\sfdefault}{bx}{n}
\DeclareMathOperator*{\argmax}{arg\,max}
\DeclareMathOperator*{\argmin}{arg\,min}
\newcommand{\Ebb}{\mathbb{E}}
\newcommand{\Rbb}{\mathbb{R}}
\newcommand{\Pbb}{\mathbb{P}}
\newcommand{\Ncal}{\mathcal{N}}
\newcommand{\xbf}{\mathbf{x}}
\newcommand{\zbf}{\mathbf{z}}
\newcommand{\Xcal}{\mathcal{X}}
\newcommand{\Hcal}{\mathcal{H}}
\newcommand{\Zcal}{\mathcal{Z}}
\newcommand{\Ccal}{\mathcal{C}}
\newcommand{\Ucal}{\mathcal{U}}
\newcommand{\Ical}{\mathcal{I}}
\newcommand{\Dcal}{\mathcal{D}}
\newcommand{\Fcal}{\mathcal{F}}
\newcommand{\Scal}{\mathcal{S}}
\newcommand{\Ocal}{\mathcal{O}}
\newcommand{\Gcal}{\mathcal{G}}
\newcommand{\Rcal}{\mathcal{R}}
\newcommand{\Lcal}{\mathcal{L}}
\newcommand{\thetabf}{\boldsymbol\theta}
\newcommand{\inv}{^{\raisebox{.2ex}{$\scriptscriptstyle-1$}}}
\newtheorem{claim}{Claim}
\newtheorem{proposition}{Proposition}
\newtheorem{lemma}{Lemma}
\newtheorem{theorem}{Theorem}
\title{From Intervention to Domain Transportation: A Novel Perspective to Optimize Recommendation}
\author{Da Xu \\ 
Walmart Labs \\ 
Sunnyvale, CA 94086, USA \\ 
\texttt{DaXu5180@gmail.com} \\ 
\AND 
Yuting Ye \\ 
Division of Biostatistics \\ 
University of California, Berkeley \\
Berkeley, CA 94720, USA \\ 
\texttt{yeyt@berkeley.edu}
\AND 
Chuanwei Ruan \\ 
Instacart \\
San Francisco, CA 94107, USA \\ 
\texttt{Ruanchuanwei@gmail.com}}
\begin{document}

\maketitle

\begin{abstract}
The interventional nature of recommendation has attracted increasing attention in recent years. It particularly motivates researchers to formulate learning and evaluating recommendation as causal inference and data missing-not-at-random problems. However, few take seriously the consequence of violating the critical assumption of overlapping, which we prove can significantly threaten the validity and interpretation of the outcome. We find a critical piece missing in the current understanding of information retrieval (IR) systems: as interventions, recommendation not only affects the already observed data, but it also interferes with the target domain (distribution) of interest. We then rephrase optimizing recommendation as finding an intervention that best transports the patterns it learns from the observed domain to its intervention domain. Towards this end, we use domain transportation to characterize the learning-intervention mechanism of recommendation. We design a principled transportation-constraint risk minimization objective and convert it to a two-player minimax game.
We prove the consistency, generalization, and excessive risk bounds for the proposed objective, and elaborate how they compare to the current results. Finally, we carry out extensive real-data and semi-synthetic experiments to demonstrate the advantage of our approach, and launch online testing with a real-world IR system.

\end{abstract}

\section{Introduction}
\label{sec:introduction}
For information retrieval (IR) systems, the users' willingness to interact is often intervened by the content we show. Recommendations not only impact the potential response of the users, but also change the nature of the data collected for training machine learning models \citep{bottou2013counterfactual}. 
The interventional nature of recommendation, as well as the fact that many questions in IR are counterfactual, e.g., "what the response would have been if we recommended something else", makes it natural to rephrase recommendation in the context of causal inference or missing data problems \citep{rubin2005causal,little2002statistical}. Among the two disciplines, technical tools of importance weighting (IW) and domain adaptation (DA) have been commonly applied.
Recently, there has been widespread interest in adapting those tools for learning the best recommendation \citep{schnabel2016recommendations,bonner2018causal}, but we find a critical discussion missing from most existing literature in this direction. In many scientific disciplines, the effectiveness of IW and DA relies heavily on overlapping between source and target domain \citep{austin2015moving,david2010impossibility}. The underlying reason also happens to be the motivation of exploration-exploitation techniques including bandits and reinforcement learning \citep{auer2002nonstochastic,sutton1998reinforcement}: in the less-explored regions where few observations are collected, the confidence of selecting any hypothesis is discounted.

Unfortunately, modern IR systems are often cautious about exposing users to less relevant content for the sake of immediate satisfaction and revenue, even with the help of bandits or RL \citep{bonner2018causal}. In particular, many real-world recommenders are deterministic, e.g., an item is shown with probability either zero or one, which further limits the exploration and thus the coverage of collected data. In Section \ref{sec:prelim}, we rigorously prove for both IW- and DA-based learning the hardness or impossibility results caused by insufficient overlap. 
Intuitively, when a less-explored instance is passed to a candidate hypothesis, even if DA or IW can improve the accuracy of the prediction, they inevitably increase the uncertainty of that prediction as a consequence of the bias-variance tradeoff. To a certain point, the increased uncertainty from having too many instances in the weakly overlapped regions will fundamentally limit what any learning procedure can achieve.
Further, if the instance lies in the non-overlapped region, we can barely exceed regular extrapolation\footnote{The presumption for this argument is that the underlying causal mechanism is unknown, which applies to the settings we discuss in the paper.} \citep{xu2021how}. 

\begin{figure}[htb]
  \begin{center}
    \includegraphics[width=0.85\textwidth]{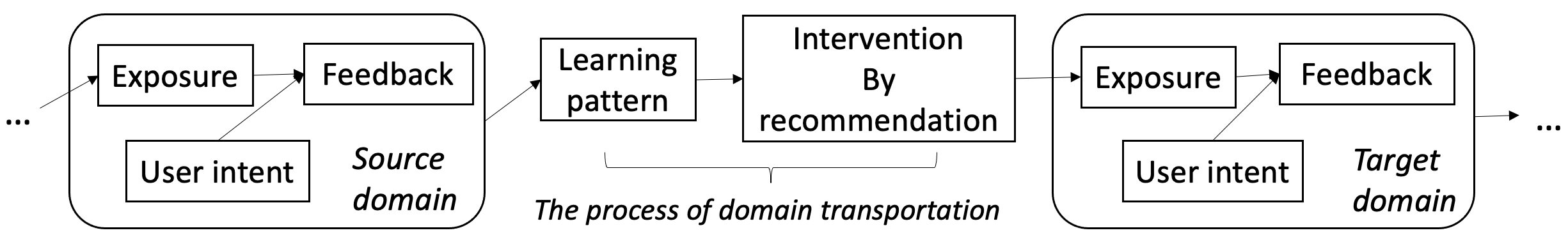}
  \end{center}
  \caption{\footnotesize A realistic working mechanism of IR systems that emphasize domain transportation.}
  \label{fig:mechanism}
  \vspace{-0.2cm}
\end{figure}

We resolve the fundamental limitation of insufficient overlapping in IR with the key insight that the interventional impact of recommendation is not exclusive to observed data, but also interferes with the target domain of interest --- the instances that will be affected by the recommendations we are about to make.
For instance, the placement of a recommendation on the webpage proposed to the user directly influences the occurrence of clicks.
Therefore, we alternatively consider optimizing recommendation as searching for an intervention that best transports the patterns it learns from the source domain to its intervention domain.
The procedure is best described by transport instead of transfer because it will interfere with how the learnt patterns are carried to the target domain.
This change of view describes a complex but realistic working mechanism that involves learning, intervention, and transportation, as depicted in Figure \ref{fig:mechanism}. 
On the one hand, we wish to learn patterns that can be transported to the target domain of interest. On the other hand, the patterns we identify lead to the intervention (in the form of recommendation) that generates the target domain. 
Following this working mechanism, we impose constraints on the domain transportation between the (reweighted) source domain and the future intervention domain. We point out that reweighting is essential for learning from interventional feedback data \citep{rosenbaum1983central}, especially for the working mechanism we identify here because the relationship between the source and target domain is more involved. 
Towards this end, we propose a novel solution via transportation-constraint risk minimization. We rigorously investigate for our objective:  
\begin{itemize}[leftmargin=*]
    \item the consistency, generalization, and excessive risk bounds that provide theoretical insights and guarantees;
    \item the equivalence to a two-player minimax game that can be efficiently solved by gradient descent-ascent algorithm;
    \item convergence to the standard IW-based solution for the ideal fully-overlapped bandit feedback;
\end{itemize}
We demonstrate the empirical performance of our method via comprehensive experiments via simulation, real-data analysis, as well as online experimentation and testing in a real-world IR system.

\section{Related Work}
\label{sec:related}
\cite{bottou2013counterfactual} first reveals the importance of characterizing the working mechanism of IR systems. However, they study specifically the advertising systems where the underlying causal structures are assumed given. Our work significantly extends their discussion by moving beyond causal mechanisms. Issues caused by ignoring the interventional nature of recommendation have been pointed out for numerous scenarios \citep{chen2020bias}. \cite{steck2010training} and \cite{liang2016causal} address optimizing recommendation in the context of missing data and causal inference, and the majority of recent follow-up work investigate particularly inverse propensity weighting (IPW) and learning domain invariant representation methods \citep{chen2020bias,saito2020asymmetric,yang2018unbiased,joachims2017unbiased,bonner2018causal,agarwal2019general,xu2020adversarial}. As mentioned in Section \ref{sec:introduction}, they largely ignore the overlapping issue even when the recommendation is deterministic. Our theoretical results on the interplay between overlapping and learning performance are novel and add to the current venue of revealing the tradeoff for IW- and DA-based learning \citep{byrd2019effect,johansson2019support,ben2012hardness}. See Appendix \ref{sec:append-literature} for further discussions on related work.

Existing solutions for insufficient overlapping focus primarily on reducing or calibrating the variance, e.g., as discussed in \cite{swaminathan2015self,wang2020transfer,johansson2020generalization}. They do not account for the unique working mechanism of IR where the domain discrepancy can be actively controlled. We devise the domain-transportation constraint using the Wasserstein distance originated from the optimal transport theory \citep{courty2016optimal,redko2017theoretical}. The idea of converting a Wasserstein-constraint objective to two-player minimax game is similar to that of the generative adversarial network \citep{gulrajani2017improved}, however, we do not make any generative assumption on the observed data.
Finally, \emph{learning from bandit feedback} and the associated \emph{counterfactual risk minimization} can be viewed as a special case of our framework \citep{swaminathan2015counterfactual}. We will use it as a test bed to examine the guarantees of the proposed approach.

\section{Preliminaries}
\label{sec:prelim}
We use upper-case letters to denote random variables and measures, bold-font letters for vectors and matrices. We denote by $\|\cdot\|_2$, $\|\cdot\|_{\infty}$, $\|\cdot\|_{L}$ the $\ell_2$, $\ell_{\infty}$ norm and the Lipschitz constant of a function. 

For clarity, we borrow the item recommendation setting to illustrate the IR system we discuss next. The set of all users and items are given by $\Ucal$ and $\Ical$. To be concise, we denote the features (or pre-trained embeddings) for users and items by $\xbf_u$ for $u\in\Ucal$ and $\xbf_i$ for $i\in\Ical$. We study primarily the practical implicit-feedback setting with $Y_{ui} \in \{0,1\}$, e.g., users express preferences implicitly via clicks.
Let $\Dcal$ be the population of all possible $(\xbf_u, \xbf_i)$ or $(u, i)$ depending on the context. 
The embeddings can also be free parameters under the representation mapping $f$ from $\Xcal$ or $\Ucal\cup\Ical$ to $\Rbb^d$. 
We let $P$ be \emph{any} base probability measure supported on the subsets of $\Dcal$, and let $P_n$ be the empirical versions according to the collected feedback of $\Dcal_n$ where $n$ is the sample size. We assume the weight of a user-item is positive and bounded by a constant $M$. Equipped with either individual weights $w_{ui} \in [0,M]$ or weighting function $w: \Ucal\times\Ical \, (\text{or }\Xcal\times\Xcal)\to [0,M]$, the reweighted measures are given by such as: $P_{w}(u,i) \propto w(u,i) \cdot P(u,i)$.

The candidate recommendation algorithm $f\in\Fcal$ outputs a score or probability for each user-item pair. Top-$k$ recommender often ranks the output and then renders the top-$k$ items to the user. We denote this procedure by the function: $\text{reco}(u,i;f) \in \{0, 1\}$. 
Define the new measure on $\Dcal$ induced by deploying $f$ on $\Dcal$ as: $P_{f}(u,i) \propto \text{reco}(u,i;f) \cdot P(u,i)$. If we choose the base measure $P$ as uniform, it is clear that $P_f$ is a uniform measure on the $(u,i)$ pairs that \emph{will be exposed}: $\{(u,i)\in\Dcal:\, \text{reco}(u,i;f)=1\}$. 
Finally, given a loss function $\ell$ for $f$, we define the corresponding \emph{risk} under $P$: $\Ebb_{P}R(f) = \Ebb_{(u,i) \sim P}\ell\big(f(u,i), y_{ui} \big)$, the \emph{weighted risk} under $P_w$: $\Ebb_{P_w}R(f) = \Ebb_{(u,i)\sim P} \ell\big(f(u,i), y_{ui} \big) \cdot w(u,i)$.

Our setting covers a wide range of recommendation problems from classical collaborative filtering to sequential recommendation (when user embedding is a function of the item sequence), including those using bandits for exploration where $f(u,i)$ becomes the exploration probability, and the \emph{base measure} is given by the inverse propensities. We defer the detail of that setting to Appendix \ref{sec:append-bandit}.


\textbf{Consequence of Insufficient Overlapping}.

We now rigorously justify our previous intuition that insufficient overlapping causes uncertainty that fundamentally limits what IW- and DA-based learning can achieve. While the results we show for IW are universal, i.e., it applies to any hypothesis class, we discuss particularly for DA the \emph{skip-gram negative sampling} (SGNS) algorithm \citep{mikolov2013efficient}. The reason is that SGNS is a widespread technique for learning user or item embedding in IR systems \citep{xu2021theoretical}, and its more straightforward formulation will allow us to present concise theoretical results to the readers. 

As a gentle introduction, IW-based learning uses the ratio $Q(u,i) / P(u,i)$ -- where $P$ and $Q$ represent the source and target domain -- to reweigh the loss of each instance in order to correct the distribution shift. On the other hand, DA-based learning assumes there exists some hypothesis that makes the \emph{optimal joint risk} of $\inf_{f\in\Fcal}[\Ebb_{Q}R(f) + \Ebb_{P}R(f)]$ small. For IW, we show the hardness of learning, i.e., the best target risk $\inf_f \Ebb_{Q}R(f)$ can be uncontrolled for any $\Fcal$ due to insufficient overlapping. For DA, we show the impossibility of learning, i.e., there does not exist a solution of SGNS that can make the \emph{optimal joint risk} small when $P$ and $Q$ are not aligned.

\begin{proposition}
\label{prop:IPW}
Let $P$ and $Q$ be the training and target distribution supported on $\Scal_P, \Scal_Q \subseteq \Dcal$, and $w(u,i) = \frac{Q(u,i)}{P(u,i)}$, for $(u,i)\in \Scal_P \cap \Scal_Q$. $\Dcal_n$ consists of training instances sampled i.i.d from $P$.

\textbf{Hardness of learning with IW}: if $\Scal_P = \Scal_Q$, for any $f\in\Fcal$, with probability at least $1-\delta$, it holds:
    \begin{equation*}
    \begin{split}
        \Ebb_{Q}R(f) \lesssim \Ebb_{P_{n,w}}R(f) + \frac{M\big(\log\frac{1}{\delta} + \log\Ncal_{\infty}(\frac{1}{n}, \Fcal\big)}{n} + \sqrt{\frac{Md_1(P\|Q)\big(\log\frac{1}{\delta} + \log\Ncal_{\infty}(\frac{1}{n}, \Fcal\big)}{n}}, \\
    \end{split}
    \end{equation*}
and $\Pbb\Big( \big|\Ebb_{Q}R(f) -  \Ebb_{P_{n,w}}R(f)\big| \gtrsim \sqrt{(d_1(P\|Q)-1)/n} \Big) > 0$ under mild condition, where $M:=\max\{w(u,i)\}$, $\Ncal_{\infty}(\frac{1}{n}, \Fcal\big):=\Ncal(\frac{1}{n},\Fcal,\ell_{\infty}^{2n})$ is the $\frac{1}{n}$-covering number for $\Fcal$ in $\|\cdot\|_{\infty}$ based on $2n$ i.i.d samples from $P$, and $d_1(P\|Q) = \int_{\Scal_Q} (dP / dQ) dP$ is a divergence measure.
Further, if $\Scal_P \neq \Scal_Q$, for any prior and posterior distributions $\pi$ and $\tau$ on $\Fcal$, for risks associated with the 0-1 loss, it holds with probability at least $1-\delta$:
    \begin{equation*}
        \Ebb_{f\sim \tau,Q}R(f) \lesssim \Ebb_{f\sim \tau,P_{n,w}}R(f) + \frac{d_{\text{KL}}(\pi\|\tau) +\log \frac{d_1(P\|Q)}{\delta}}{n} + \Ebb_{f\sim\tau,\Scal_Q\backslash \Scal_P}R(f) + \Ocal(D_n(\tau)),
    \end{equation*}
    where $d_{\text{KL}}$ is the Kullback-Leibler divergence, $d_1(P\|Q)$ is the same as above but defined only on the overlapped part, and $D_n(\tau)$ consists of disagreement terms detailed in Appendix \ref{sec:append-IPW}.
    
\textbf{Impossibility of learning for DA with SGNS}: let $R(f)$ be the associated risk for the SGNS algorithm, and $C_{u,i}(P)$, $C_{u,i}(Q)$ be the co-occurrence statistics under $P$, $Q$, as mentioned in \cite{xu2021theoretical}. Even if $\Scal_P=\Scal_Q$, it holds that:
\begin{equation*}
\label{eqn:SGNS-main}
    \inf_{f}[\Ebb_P R(f) + \Ebb_Q R(f)] \geq  \frac{1}{|\Scal_P|}\sum_{(u,i)\in\Scal_P}d_{KL}\big(C_{u,i}(P) \| C_{u,i}(Q) \big) + c,
\end{equation*}
where $c$ takes some positive value and equals to $\inf_{f}\Ebb_P R(f) + \inf_{f}\Ebb_Q R(f)$ when $P=Q$ a.s.
\end{proposition}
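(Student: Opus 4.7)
The proposition bundles three essentially independent claims, so I would attack them in three separate blocks joined by the common theme that each reduces to a pointwise calculation involving the likelihood ratio $dQ/dP$ and the per-instance loss.

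For the IW bound with $\Scal_P = \Scal_Q$, let $Z_{ui}(f) := w(u,i)\,\ell(f(u,i), y_{ui})$. By construction $\Ebb_P Z(f) = \Ebb_Q R(f)$ and $\|Z(f)\|_\infty \le M$, so the upper bound follows from Bernstein's inequality combined with a union bound over the $1/n$-covering of $\Fcal$. The variance satisfies $\mathrm{Var}_P[Z(f)] \lesssim M \cdot d_1(P\|Q)$, which yields the sub-Gaussian $\sqrt{M d_1 \log \mathcal{N}_\infty / n}$ term and the sub-exponential $M \log \mathcal{N}_\infty / n$ term. For the matching anti-concentration claim, I would fix an $f$ whose loss is non-degenerate on the support, observe that $\mathrm{Var}_P[w] = d_1 - 1$ is irreducible, and apply a Paley--Zygmund or second-moment CLT argument to extract the $\sqrt{(d_1-1)/n}$ deviation with positive probability.

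For the non-overlap case, I would split the target risk as $\Ebb_Q R(f) = \Ebb_{Q,\Scal_P\cap\Scal_Q} R(f) + \Ebb_{Q,\Scal_Q\setminus\Scal_P} R(f)$. The second piece is unestimable from $P$-samples and appears directly in the bound as the disagreement $\Ebb_{f\sim\tau,\Scal_Q\setminus\Scal_P} R(f)$. On the overlap I would apply a standard PAC-Bayes bound (Maurer's or Catoni's) to the bounded variables $Z_{ui}(f)$ restricted to $\Scal_P\cap\Scal_Q$; this produces the $d_{\mathrm{KL}}(\pi\|\tau)$ penalty and the $\log(d_1(P\|Q)/\delta)/n$ complexity rate, while the residual slack $D_n(\tau)$ absorbs the standard disagreement between the empirical IW risk and the posterior-averaged reweighted risk on the overlap.

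The SGNS claim is the main obstacle and I would save it for last. Pointwise, the SGNS log-loss at $(u,i)$ is a Bernoulli cross-entropy whose Bayes-optimal predictor under any base distribution is a known functional of the co-occurrence statistic $C_{u,i}(\cdot)$, via the PMI-type identification of \cite{xu2021theoretical}. I would minimise the summed objective $\Ebb_P R(f) + \Ebb_Q R(f)$ at each $(u,i)$ separately, plug the minimiser back in, and write the result as the sum of the two individual Bayes risks plus a nonnegative information excess; applying the log-sum inequality to this excess, while exploiting SGNS's sigmoid-plus-negative-sampling parametrisation rather than a naive linear combination, then produces the $d_{\mathrm{KL}}(C_{u,i}(P)\|C_{u,i}(Q))$ term. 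Averaging over $(u,i) \in \Scal_P$ yields the stated sum, and the constant $c$ collects the pointwise Bayes risks and collapses to $\inf_f \Ebb_P R(f) + \inf_f \Ebb_Q R(f)$ when $P = Q$ a.s. The delicate point is routing the calculation through the negative-sampling term so that the true (unbounded) KL rather than the bounded Jensen--Shannon divergence emerges, and I would rely on the explicit SGNS-loss decomposition of \cite{xu2021theoretical} to pin it down.
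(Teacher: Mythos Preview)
Your first two blocks are essentially on target. For the full-overlap IW bound the paper routes the uniform-over-$\Fcal$ step through the classical double-sampling (ghost-sample) symmetrization of Anthony--Bartlett rather than a direct Bernstein-plus-union-bound, but the two land on the same covering-number dependence and your variance identification $\mathrm{Var}_P[w]=d_1-1$ is precisely what drives the paper's lower bound (invoked as a Cortes--Mansour--Mohri maximum-variance lemma under the mild assumption that some $f$ has unit empirical loss). For the partial-overlap PAC-Bayes bound your split-then-PAC-Bayes is more direct than what the paper does, but you mis-describe $D_n(\tau)$: it is \emph{not} concentration slack. The paper first applies the Germain et al.\ decomposition $\Ebb_{f\sim\tau}\Ebb_Q R(f)=\tfrac12 d_Q(\tau)+e_Q(\tau)$ on the overlap, and $D_n(\tau)=\tfrac12|d_Q(\tau)-d_{P_w}(\tau)|+|e_Q(\tau)-e_{P_w}(\tau)|$ is a population-level domain-disagreement term between $Q$ and $P_w$; only afterwards does a change-of-measure PAC-Bayes argument on $\Ebb_{f\sim\tau}\Ebb_{P_w}R(f)$ produce the $d_{KL}(\pi\|\tau)$ and $\log(d_1/\delta)$ terms, with the $d_1$ factor entering by pulling the importance weight through a convex $\Delta$ before the binomial moment bound.

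The SGNS block has a genuine gap. If you minimise the summed loss pointwise and subtract the two individual Bayes risks, the excess you get is a symmetric quantity in $C_{u,i}(P)$ and $C_{u,i}(Q)$: for Bernoulli-type losses it is (twice) a Jensen--Shannon divergence, and even under the SDR reformulation $\min_q\{d_{KL}(q\|a)+d_{KL}(q\|b)\}$ one can check numerically (e.g.\ $a=0.1$, $b=0.9$) that this joint minimum is strictly smaller than $d_{KL}(a\|b)$. No log-sum manipulation of a symmetric excess will produce the asymmetric $d_{KL}(C_{u,i}(P)\|C_{u,i}(Q))$, and ``routing through the negative-sampling term'' does not change the symmetry of the pointwise problem. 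The paper sidesteps this entirely: it first lower-bounds the joint infimum trivially by $\inf_\Phi\Ebb_P R(\Phi)+\inf_\Phi\Ebb_Q R(\Phi)$, then invokes the sufficient-dimension-reduction identity of Xu et al.\ (2021) to rewrite each separate infimum as an average of binary KL terms $d_{KL}\big(p(O_{ui}\mid\Phi^{(\cdot)})\,\big\|\,C_{ui}(\cdot)\big)$ with the model in the \emph{first} slot, and finally chains these through the pointwise binary-KL inequality $d_{KL}(a\|c)+d_{KL}(c\|b)\ge d_{KL}(a\|b)+(c-a)\log\tfrac{c(1-b)}{(1-c)b}$. The asymmetric KL emerges from this chain step acting on the two separate optima, not from any analysis of the joint minimiser.
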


We defer all the proofs to Appendix \ref{sec:append-IPW}. For IW, we reveal the generalization bounds for both the frequentist setting (the first set of bounds) and the PAC-Bayesian setting (the second bound). While they emphasize conceptually on different learning guarantees \citep{germain2009pac}, and while the latter can handle the existence of non-overlap ($\Scal_P \neq \Scal_Q$), we find them both showing insufficient overlapping (characterized by the discrepancy terms $d_1$ and $d_{\text{KL}}$) can cause arbitrarily large gaps leading to poor generalizations to the target domain. For DA with SGNS, our lower bound reveals the impact of insufficient overlapping indirectly via a notion of co-occurrence statistics defined in \cite{xu2021theoretical}. In particular, if the source and target domains are well-aligned in terms of the co-occurrence statistics, the increased divergence term can make it impossible to find a hypothesis that simultaneously performs well in both domains. Although this result holds only for SGNS, it nevertheless suggests the limitations of DA caused by insufficient overlapping in IR.

\section{Model and Theory}
\label{sec:method}
As much as we would like to continue investigating the consequence of insufficient overlapping, for the purpose of this paper, we have shown the drawbacks of directly applying such as IW and DA to optimizing recommendation without accounting for the working mechanism of IR systems. In what follows, we describe a principled solution motivated by our domain transportation view, which requires actively identifying the suitable training domain via reweighting.
Notably, we no longer perceive reweighting as a change-of-measure tool commonly implied in the causal inference and missing data solutions. Instead, we use reweighing as a mechanism to help identify a particular training domain to ensure certain transportation properties (which we explain next) to the deployment domain\footnote{Suppose a candidate recommender $f$ is deployed to $\Dcal$. According to Section \ref{sec:prelim}, the domain of users and items who are affected by the recommendation is given by $P_f$ defined on $\{(u,i)\in\Dcal:\, \text{reco}(u,i;f)=1\}$. We refer to it as the \emph{deployment domain} to differentiate from the notion of \emph{target domain} which is conventionally intervention-free.}, which may concern only a subpopulation of all the potential user-item instances.

\subsection{Transportation-constrained risk minimization}
\label{sec:constraint-risk-minimization}

Throughout our discussion, we emphasize the transportability of patterns across domains, which can be difficult to define precisely. However, if we make a a mild assumption that all patterns can be recovered by Lipschitz-bound functions, it immediately becomes clear that the \emph{Wasserstein distance} can characterize our notion of transportability. To show this point, we first describe an imbalanced binary classification problem with two domains $P$ and $Q$. Let $\ell_1(u) = \frac{u}{p_1}$ and $\ell_{-1}(u) = \frac{u}{1-p_1}$ be the positive and negative loss under prediction score $u$, where $p_1$ is the prior probability that a sample comes from $P$. Interestingly, the $L$-Wasserstein distance between $P$ and $Q$ admits\footnote{In Appendix \ref{sec:append-distance-background}, we provide a more thorough introduction to Wasserstein distance for interested readers.}:
\begin{equation}
\label{eqn:wasserstein-distance}
-d_W(P, Q) = \inf_{g: \{\|g\|_L\leq L\}} p_1 \int_{\Dcal} \ell_1(g) dP + (1-p_1) \int_{\Dcal} \ell_{-1}(g) dQ,
\end{equation}
which is exactly the optimal risk of the binary classification problem described in the first place! Therefore, Wasserstein distance can be perceived as how well the classification patterns on $P$ transports to $Q$ by any means necessary, as long as they are Lipschitz-bounded. Since the constant of $L$ is simply a scaling factor, we simply continue with the $1$-Wasserstein distance. Using $d_W$ to device the domain transportation contraint between $P_f$ and the particular training domain $P_w$ we seek to construct with our weighting mechanism, the learning objective is directly given by:
\begin{equation}
\label{eqn:obj-orig}
\underset{f\in\Fcal, w\in\Hcal}{\text{minimize}}\, \, \Ebb_{P_w} R(f) \, \text{ s.t. } D(w,f) \leq \rho,
\end{equation}
where $D(w, f)$ is a shorthand for $d_W(P_w, P_f)$, and $\rho$ can be treated as a hyper-parameter. As a sanity check, when $\rho=0$, the feasible set is given by: $f(u,i)=w(u,i)$ for all $(u,i)$, so we are directly optimizing the associated risk of the deployment domain. However, this is a suboptimal practice for deterministic recommenders because in that case, only a small proportion of the feedback data will be considered and the result can be heavily biased toward the historical recommendation. Therefore, a suitable $\rho$ will give us the opportunity to better explore the whole feedback data. Nevertheless, if $\rho$ becomes too large, we are likely to overfit the feedback data due to the high capacity of $f$ and $w$ combined while the transportation constraint is too loose. 

The proposed objective is tailored specifically to the working mechanism of IR systems, and it is natural to question the validity and effectiveness. In what follows, we rigorously prove a series of guarantees associated with our novel learning objective, and discuss how it connect and compare to the existing results.

To begin with, we show that jointly optimizing $w$ leads to the sample-consistency guarantee. For clarity, we start with the feature-based learning where $f(u,i):=f(\xbf_u, \xbf_i)$, and defer the extension with the representation mapping to the next section. Let $\Zcal := \Dcal \times \{0,1\}$, and we denote by $\ell_f: \Zcal \to \Rbb^+$ the composition with the loss function such that for $\zbf_{ui}=(\xbf_u, \xbf_i, y_{ui})$, we have $\ell_f(u,i) := \ell\big(f(u,i), y_{ui}\big)$. We use $w_{ui}$ as a shorthand for the individual weight.

\begin{theorem}
\label{thm:obj-orig-finite-sample}
Denote by $D_n(w,f) = d_W(P_{n,w}, P_{n,f})$. Suppose that $\ell_f$ is L-Lipschitz w.r.t $\|\cdot\|_2$. Define $S_{G} = \sup_{g:\|g\|_L\leq1}\sum_{(u,i)\in\Dcal_n} g(u,i)$ and $M_{G}=\sup_{\|g\|_L\leq1,\, (u,i)\in\Dcal} g(u,i)$. Assume $\Zcal \in \Rbb^{d_Z}$ and $w\in\Hcal$. Given any $f \in \Fcal_{}$ and $\rho>0$, it holds with probability at least $1-\delta$ that:
\begin{equation*}
\label{eqn:obj-orig-finite-sample}
\begin{split}
\bigg|\min_{w:\, D(w,f)\leq\rho}\Ebb_{P_{w}} R(f)  - \min_{w:\, D_n(w,f_{})\leq\rho}\Ebb_{P_{n,w}} R(f_{})  \bigg| & \leq \frac{9M(\log \frac{1}{2\delta})}{n}  + 
B_n(\Hcal, \delta)L + C_n(\bar{w},\tilde{w},f,\delta).
\end{split}
\end{equation*}
Here, $\tilde{w} = \argmin_{D_n(w,f_{})\leq \delta - B_n(\Hcal, \delta)}\Ebb_{P_w}R(f)$, $\bar{w} = \argmin_{D_n(w,f_{})\leq \delta + B_n(\Hcal, \delta)}\Ebb_{P_w}R(f)$, $C_n(\bar{w},\tilde{w},f,\delta)= n^{-\frac{1}{2}}\sqrt{18M\max\{R_{\bar{w}}(f)^2, R_{\tilde{w}}(f)^2\}\log \frac{1}{2\delta}}$ and $R_w(f)^2 = \frac{1}{n}\sum_{\Dcal_n} w_{ui}^2 \ell_{f}(u,i)^2$. Also, $B_n(\Hcal, \delta)$ is given by $\Ocal\Big(\big(S_G n^{1/({d_Z}+1)}\big)^{\inv} + M_{G} n^{-1/2}\sqrt{\log \frac{1}{2\delta} + \Ncal_{\infty}(\frac{1}{n}, \Hcal)}\Big)$.
\end{theorem}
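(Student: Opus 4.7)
The plan is to bound $|\min_{D(w,f)\leq\rho}\Ebb_{P_w}R(f)-\min_{D_n(w,f)\leq\rho}\Ebb_{P_{n,w}}R(f)|$ by sandwiching both minima against intermediate quantities and establishing two separate high-probability events: uniform concentration of the Wasserstein constraint $|D_n(w,f)-D(w,f)|$ over $w\in\Hcal$, and a Bernstein-type concentration of the weighted empirical risk $\Ebb_{P_{n,w}}R(f)$ at the relevant weightings. Once I have $\sup_{w\in\Hcal}|D_n(w,f)-D(w,f)|\leq B_n(\Hcal,\delta)$, the empirical feasible set $\{w:D_n(w,f)\leq\rho\}$ lies inside $\{w:D(w,f)\leq\rho+B_n\}$ and contains $\{w:D(w,f)\leq\rho-B_n\}$. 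This dictates the choice of the auxiliary minimizers $\bar w$ (loosened population problem) and $\tilde w$ (tightened population problem) appearing in the theorem and enables a clean sandwich comparison of empirical and population constrained minima.

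For the uniform Wasserstein step, I would invoke Kantorovich--Rubinstein duality $d_W(\mu,\nu)=\sup_{\|g\|_L\leq 1}\int g\,d(\mu-\nu)$ to reduce $|D_n(w,f)-D(w,f)|$ to empirical-process deviations against the $1$-Lipschitz class on both $P_{n,w}-P_w$ and $P_{n,f}-P_f$. The deviation on $P_{n,f}-P_f$ contributes the standard empirical-$1$-Wasserstein rate $\sim(S_G n^{1/(d_Z+1)})^{-1}$ after a Weed--Bach-style intrinsic-dimension argument on the observed support. The deviation on $P_{n,w}-P_w$ must be made uniform over $w\in\Hcal$, which I would handle with a $1/n$-covering of $\Hcal$ in $\|\cdot\|_\infty$ combined with a Talagrand/McDiarmid bound using the sup-bound $M_G$ on Lipschitz integrands; this yields the $M_G n^{-1/2}\sqrt{\log(1/\delta)+\log\Ncal_\infty(1/n,\Hcal)}$ piece, and the two together aggregate into $B_n(\Hcal,\delta)$. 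The factor $L$ in the middle term of the final bound appears because, by Kantorovich duality applied to the $L$-Lipschitz loss, $|\Ebb_P\ell_f-\Ebb_Q\ell_f|\leq L\cdot d_W(P,Q)$; then a convex-combination argument interpolating between a feasible weight and the weight that recovers $P_f$ shows that expanding the Wasserstein constraint from $\rho$ to $\rho+B_n$ can move the constrained minimum by at most $L\cdot B_n$.

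For the risk concentration I would apply Bernstein's inequality to the bounded sum $\tfrac{1}{n}\sum_{(u,i)\in\Dcal_n}w_{ui}\ell_f(u,i)$, whose summands lie in $[0,M\|\ell\|_\infty]$ with second moment bounded by $R_w(f)^2$; evaluating at the two extremal weightings $\bar w,\tilde w$ produces the $9M\log(1/2\delta)/n$ Bernstein residual and the variance piece $C_n(\bar w,\tilde w,f,\delta)$ after taking the larger of $R_{\bar w}(f)^2,R_{\tilde w}(f)^2$. Combining the sandwich with a union bound over the Wasserstein and Bernstein events then finishes the proof: on the good event, bracket $\min_{D_n\leq\rho}\Ebb_{P_{n,w}}R(f)$ between $\Ebb_{P_{n,\bar w}}R(f)$ and $\Ebb_{P_{n,\tilde w}}R(f)$, swap empirical for population risks at $\bar w,\tilde w$ via Bernstein, and use the $L B_n$ slack to collapse both endpoints to $\min_{D\leq\rho}\Ebb_{P_w}R(f)$.

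The main obstacle is the uniform Wasserstein concentration. The intrinsic $n^{-1/(d_Z+1)}$ rate of the empirical $1$-Wasserstein distance in $d_Z$ dimensions is unavoidable, and making it hold uniformly over the entire weighting class $\Hcal$ requires simultaneously controlling the effective support through $S_G$, the sup-bound $M_G$ on Lipschitz integrands, and the metric entropy $\Ncal_\infty(1/n,\Hcal)$ without incurring an extra dimensional factor. Balancing these so that the parametric $n^{-1/2}\sqrt{\log\Ncal_\infty(1/n,\Hcal)}$ and nonparametric $(S_G n^{1/(d_Z+1)})^{-1}$ pieces aggregate cleanly into $B_n(\Hcal,\delta)$, while keeping the Kantorovich-induced $L$ factor isolated from the variance term $C_n$, is the most delicate part of the argument.
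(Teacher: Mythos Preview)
Your overall architecture---uniform Wasserstein concentration $\sup_{w\in\Hcal}|D_n(w,f)-D(w,f)|\leq B_n$, the resulting nesting of feasible sets, Bernstein concentration of the weighted empirical risk at the extremal weightings $\bar w,\tilde w$, and combining via a union bound---is exactly what the paper does. The Wasserstein step is their Lemma~A.5, the Bernstein step is their Lemma~A.1, and the sandwich is carried out just as you describe.

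The one substantive difference is how the $L\cdot B_n$ term is produced. You propose to obtain it from Kantorovich--Rubinstein ($|\Ebb_P\ell_f-\Ebb_Q\ell_f|\leq L\,d_W(P,Q)$) together with a convex interpolation between a feasible weight and the weight reproducing $P_f$. The paper instead invokes the Lagrangian/strong-duality representation of the constrained problem (their Proposition~A.1),
\[
\min_{D_n(w,f)\leq r}\Ebb_{P_{n,w}}R(f)=\sup_{\lambda\geq0}\Big\{-\lambda r+\Ebb_{\zbf\sim P_{n,f}}\inf_{\tilde\zbf}\big[\ell_f(\tilde\zbf)+\lambda\,c(\zbf,\tilde\zbf)\big]\Big\},
\]
and then uses Lemma~A.6, which shows the optimal dual multiplier satisfies $\lambda^\star\leq L$ whenever $\ell_f$ is $L$-Lipschitz. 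From this, replacing $r=\rho$ by $r=\rho+B_n$ shifts the value by at most $\lambda^\star B_n\leq L\,B_n$, immediately yielding the desired sensitivity without any interpolation construction. This is cleaner: your convex-combination route would still have to verify that the interpolated measure is of the form $P_{w'}$ for some $w'\in\Hcal$ (reweighted measures are normalized, so mixing weights does not directly mix the measures), whereas the duality argument sidesteps the issue entirely. Apart from this step, the two proofs coincide.
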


We relegate the proof of this theorem to Appendix \ref{sec:append-proof-thm1}. Crucially, other than showing that the sample-consistency result can still be achieved with the jointly-optimized weights, we point out that this bound is completely independent of distribution discrepancy terms that troubled IW and DA. Notice that the bound depends on the weights only through $R_{\bar{w}}(f)^2$ and $R_{\tilde{w}}(f)^2$ presented by the $C_n$ term. Given any $f\in\Fcal$, our result shows that both $\bar{w}$ and $\tilde{w}$ are the minimizers of a weighted risk term, i.e. $\Ebb_{P_w}R(f)\propto \sum_{\Dcal}w_{ui}\ell_f(u,i)$, under some constraint involving $\rho$. Therefore, thanks to the constraint of $D(w,f)\leq\rho$ and the joint optimization of $w$, $R_{\bar{w}}(f)^2$ and $R_{\tilde{w}}(f)^2$ will have more controlled behaviors compared with the domain discrepancy terms in Proposition \ref{prop:IPW}.

We now investigate the learning-theoretical guarantees of the proposed objective, particularly the generalization and excessive risk bounds. 
Suppose the underlying labeling function of $y_{ui}$ with respect to $(\xbf_u,\xbf_i)$ is deterministic. With a little abuse of notation, we use $\zbf \sim P_{n,f}$ to denote sampling $(\xbf_u,\xbf_i, y_{ui})$ via $(\xbf_u,\xbf_i)\sim P_{n,f}$ and the deterministic labelling function.

\begin{theorem}
\label{thm:obj-orig-generalization-excessive}
Suppose $\inf_{\tilde{\zbf} \in \Zcal} \{ \ell_{f}(\tilde{z}) + \lambda \|\zbf-\tilde{\zbf}\|_2 \} \leq M_{\Fcal}$ for all $\zbf\in\Zcal$ and $\lambda\geq 0, f\in\Fcal$. Let $\Rcal_n(\Fcal)$ be the empirical Rademacher complexity: $\Rcal_n(\Fcal) = \Ebb_{\sigma} \sup_{f\in\Fcal}\big|\frac{1}{n}\sum_{(u,i)\in \Dcal_n} \sigma_{u,i}f(u,i) \big|$, and $\sigma_{}$ takes $\{-1,1\}$ with equal probability. We consider individual weights $w_{ui}\in (0,1)$ for brevity.

\textbf{Generalization Error}: for any $f_{}\in\Fcal_{}$, it holds with probability at least $1-\delta$ that:
\begin{equation*}
\label{eqn:obj-orig-generalization}
\begin{split}
\min_{w: \, D(w,f_{})\leq\rho} \Ebb_{P_{w}} R(f_{}) \leq \min_{\lambda \geq 0}\Big\{-\lambda \rho & + \Ebb_{\zbf\sim P_{n,f}}\Big[\inf_{\tilde{\zbf} \in \Zcal} \ell_{f}(\tilde{z}) + \lambda \|\zbf-\tilde{\zbf}\|_2  \Big] +  M_{\Fcal}\sqrt{\frac{\log1/(2\delta)}{n}} \\
& + \sqrt{\frac{\log(\lambda+1)}{n}} +  \frac{C\int_{0}^{\infty}\sqrt{\log\Ncal_{\infty}(\epsilon, \Fcal_{})}d\epsilon}{n^{1/2}}  \bigg\}.
\end{split}
\end{equation*}

\textbf{Excessive risk}: define $\hat{f} = \argmin_{f_{},D_n(w,f_{})\leq\rho} \Ebb_{P_{n,w}}R(f)$ as the empirical optimal and $f^* = \argmin_{f_{},D(w,f_{})\leq\rho} \Ebb_{P_{w}}R(f)$ as the theoretical optimal. With probability at least $1-\delta$:
\begin{equation*}
\label{eqn:obj-orig-excessive}
\bigg|\min_{w:\,D(w,\hat{f})\leq\rho}\Ebb_{P_{w}} R(\hat{f})  - \min_{w:\, D(w,f^*)\leq\rho}\Ebb_{P_{w}} R(f^*)  \bigg| \leq 2\Rcal_n(\ell\circ\Fcal_{}) + \Ocal\bigg(\frac{L+\tilde{M}\sqrt{\log 1/\delta}}{n^{1/2}}\bigg).
\end{equation*}
\end{theorem}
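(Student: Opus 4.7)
The plan is to treat the two parts separately; both rely on converting the inner constrained minimization over $w$ into a Lagrangian dual representation before concentrating.

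\textbf{Part 1 (Generalization error).} First I would apply Wasserstein strong duality to the inner $w$-minimization (the ``friendly'' analog of the Sinha--Namkoong--Duchi adversarial duality). For fixed $f$, this yields
\begin{equation*}
\min_{w:\, D(w,f)\leq\rho}\Ebb_{P_w}R(f) \;=\; \sup_{\lambda\geq 0}\Big\{-\lambda\rho + \Ebb_{\zbf\sim P_{f}}[\psi_\lambda^f(\zbf)]\Big\}, \qquad \psi_\lambda^f(\zbf) := \inf_{\tilde{\zbf}\in\Zcal}\big\{\ell_f(\tilde{\zbf})+\lambda\|\zbf-\tilde{\zbf}\|_2\big\}.
\end{equation*}
Then I would use (i) the boundedness assumption $\psi_\lambda^f\leq M_{\Fcal}$ and (ii) the $\lambda$-Lipschitzness of $\psi_\lambda^f$ in $\zbf$ to pass from $\Ebb_{P_f}[\psi_\lambda^f]$ to $\Ebb_{P_{n,f}}[\psi_\lambda^f]$ via McDiarmid, paying the $M_{\Fcal}\sqrt{\log(1/(2\delta))/n}$ factor pointwise in $\lambda$. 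To make the deviation uniform in $\lambda\geq 0$ (so that the tightest $\lambda$ may be chosen post hoc), I would truncate $\lambda$ at an effective scale beyond which $\psi_\lambda^f$ saturates to $\ell_f$ and union-bound over a geometric grid, producing the extra $\sqrt{\log(\lambda+1)/n}$ term. A Dudley entropy integral over $\Fcal$, needed for uniform-in-$f$ control (essential for Part 2 and for the representation-mapping extension), supplies the $C\int_0^\infty\!\sqrt{\log\Ncal_\infty(\epsilon,\Fcal)}\,d\epsilon/\sqrt{n}$ term.

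\textbf{Part 2 (Excessive risk).} I would use the standard decomposition. With $V(f):=\min_{w:\, D(w,f)\leq\rho}\Ebb_{P_w}R(f)$ and $\hat V(f)$ its empirical analog,
\begin{equation*}
V(\hat f)-V(f^*) \;=\; \underbrace{\big[V(\hat f)-\hat V(\hat f)\big]}_{\leq\,\sup_{f}|V-\hat V|} \,+\, \underbrace{\big[\hat V(\hat f)-\hat V(f^*)\big]}_{\leq\, 0} \,+\, \underbrace{\big[\hat V(f^*)-V(f^*)\big]}_{\leq\,\sup_{f}|V-\hat V|},
\end{equation*}
so the task reduces to controlling $\sup_{f\in\Fcal}|V(f)-\hat V(f)|$. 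Writing $V$ and $\hat V$ as suprema of $\lambda$-indexed affine functionals of the underlying measures (by the same strong duality as in Part 1), a symmetrization argument combined with Talagrand's contraction lemma---applied to the $1$-Lipschitz envelope map $\ell_f \mapsto \psi_\lambda^f$---bounds this uniform supremum by $2\Rcal_n(\ell\circ\Fcal)$; bounded differences then contributes the $\tilde M\sqrt{\log(1/\delta)/n}$ residual.

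\textbf{Expected main obstacle.} The hardest step is the uniform-in-$\lambda$ concentration in Part 1, since $\lambda$ varies over the unbounded set $[0,\infty)$ and a naive union bound fails. My resolution is the truncation plus geometric-grid argument sketched above, which converts the unbounded supremum into the $\sqrt{\log(\lambda+1)/n}$ price. A secondary subtlety in Part 2 is arguing that the joint optimization over $(f,w)$ does not inflate the complexity beyond $\Rcal_n(\ell\circ\Fcal)$; this holds because, once $f$ is fixed, the inner $w$-optimizer is a deterministic functional of $f$ through the $\psi_\lambda^f$ transform, so Talagrand's contraction applies uniformly in $\lambda$ after the truncation step, leaving only the base loss class on which the Rademacher complexity is measured.
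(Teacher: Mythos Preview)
Your proposal is correct and follows essentially the same route as the paper. Both parts hinge on the same Wasserstein strong duality (the paper's Proposition A.1) to replace the constrained $w$-minimization by $\sup_{\lambda\geq 0}\{-\lambda\rho+\Ebb[\psi_\lambda^f]\}$, followed by symmetrization, McDiarmid, and a Dudley entropy integral; the union bound over $\lambda$ in Part 1 and the reduction of the excess risk to a uniform deviation in Part 2 are exactly what the paper does. Two small technical deviations worth flagging: (i) for the $\lambda$-uniformity in Part 1 the paper uses the arithmetic grid $\lambda_k=k$ with $t_k=t+\sqrt{\log k}$ (citing Koltchinskii's union-bound trick) rather than your geometric grid, and (ii) in Part 2 the paper does not invoke Talagrand contraction but instead bounds the optimal dual multiplier by $\hat\lambda,\lambda^*\leq L$ (their Lemma A.4, using the $L$-Lipschitzness of $\ell\circ\Fcal$) and then directly symmetrizes over the product class $[0,L]\otimes(\ell\circ\Fcal)$, which is what produces the explicit $L$ in the $\Ocal(\cdot)$ term. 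Your truncation-plus-contraction argument would recover the same bound and is arguably cleaner, but the paper's route via the Lipschitz bound on $\lambda^*$ is what makes the dependence on $L$ transparent.
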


We defer both proofs to Appendix \ref{sec:append-proof-thm2}. Theorem \ref{thm:obj-orig-generalization-excessive} establishes two critical learning guarantees for the proposed transportation-constraint risk minimization. In particular, the generalization error reveals how we resolve the previous drawback mentioned for using DA in IR system. Recall from Section \ref{sec:prelim} that the \emph{joint optimal risk} might be uncontrolled for DA since it can depend on how different the training and target domains are. In this regard, with $D(w,f)\leq\rho$ explicitly constrains the discrepancy between the training and deployment domain, we have the additional $-\lambda\rho$ term in the generalization error bound where the second term on the RHS can be think of as the empirical "transported" risk. In other words, the constraint ensures that the deployment risk will not deviate far from the training risk, and thus creates the possibility of learning transportable recommendation. Further, the excessive risk bound suggests that the transportation constraint does not cost the convergence rate comparing with the standard empirical risk minimization guarantee \citep{mohri2018foundations}.

\subsection{Towards transportation-regularized risk minimization }
\label{sec:regularized-risk-minimization}

To render the proposed objective computationally feasible, we apply a strong duality argument to show that there exists a regularized counterpart for (\ref{eqn:obj-orig}), where we defer the proof to Appendix \ref{sec:append-proof-claim1}.

\begin{claim}
Suppose that $\ell\circ\Fcal_{}$ is convex. It holds for any $f\in\Fcal_{}$ that:
\begin{equation*}
\label{eqn:duality}
\min_{P_w:\, d_W(P_w, P_f)\leq\rho}\Ebb_{P_w} R(f) = \max_{\lambda\geq 0}\Big\{ -\lambda\rho + \min_{P_w} \big\{\Ebb_{P_w}R(f) + \lambda\cdot d_W(P_w, P_f) \big\} \Big\}.
\end{equation*}
\end{claim}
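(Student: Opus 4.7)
The plan is to view the left-hand side as a constrained convex minimization over the space of probability measures $P_w$ (for fixed $f$), with linear objective $P_w \mapsto \Ebb_{P_w}R(f)$ and convex constraint $d_W(P_w, P_f) \le \rho$, and then invoke classical Lagrangian strong duality. The right-hand side is simply the Lagrangian dual of this program after pulling the constant $-\lambda\rho$ out of the inner minimization.

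First, I would introduce the Lagrangian $L(P_w, \lambda) := \Ebb_{P_w}R(f) + \lambda\bigl(d_W(P_w, P_f) - \rho\bigr)$ for $\lambda \ge 0$, and observe that $\sup_{\lambda \ge 0} L(P_w, \lambda)$ equals $\Ebb_{P_w}R(f)$ when $P_w$ is feasible and $+\infty$ otherwise. Hence the primal can be rewritten as $\inf_{P_w} \sup_{\lambda \ge 0} L(P_w, \lambda)$, and the claim reduces to justifying the exchange of $\inf$ and $\sup$.

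Next, I would verify the conditions that make strong duality hold. The objective $P_w \mapsto \Ebb_{P_w}R(f)$ is linear (hence convex) in $P_w$; the map $P_w \mapsto d_W(P_w, P_f)$ is convex in its first argument, a standard consequence of the linearity of the marginal constraints in the Kantorovich formulation of optimal transport; consequently, the feasible set is convex. Slater's condition is satisfied by taking $P_w = P_f$, which lies strictly in the feasible region whenever $\rho > 0$ since $d_W(P_f, P_f) = 0$. The hypothesis that $\ell\circ\Fcal$ is convex ensures $\Ebb_{P_w}R(f)$ is a bounded convex functional of $P_w$, which is exactly what the exchange argument requires.

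The main obstacle is executing the $\inf$-$\sup$ swap in the infinite-dimensional setting of probability measures. I would handle this by either (i) reducing to a finite-dimensional parameterization: when $P$ is the empirical measure on $\Dcal_n$, the reweighted measure $P_w$ is pinned down by the finite vector $(w_{ui})_{(u,i)\in\Dcal_n}$ and the problem becomes a finite convex program to which Slater's theorem applies directly; or (ii) invoking Sion's minimax theorem on the Wasserstein ball $\{P_w : d_W(P_w, P_f) \le \rho\}$, which is weakly compact by Prokhorov tightness (the Wasserstein ball around a fixed measure has uniformly integrable tails), together with the lower semicontinuity and convexity of $L(\cdot, \lambda)$ and concavity (linearity) of $L(P_w, \cdot)$. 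Once the swap is justified, isolating the $-\lambda\rho$ term from the inner minimization yields the stated identity.
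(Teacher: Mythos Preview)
Your approach is correct and shares the paper's core strategy: both frame the problem as convex programming over probability measures, note the convexity of $P_w \mapsto d_W(P_w, P_f)$, verify Slater's condition via $P_w = P_f$, and appeal to an abstract strong-duality result. The paper invokes Luenberger's generalized Lagrange duality (their Lemma~A.4) to handle the infinite-dimensional swap directly, where you propose either a finite-dimensional reduction or Sion's theorem. The paper then does more than the Claim strictly requires: it unpacks the inner minimization $\min_{P_w}\{\Ebb_{P_w}R(f)+\lambda\, d_W(P_w,P_f)\}$ through the Kantorovich coupling, shows via an interchange-of-infimum-and-integral argument (their Lemma~A.7) that it equals $\Ebb_{\zbf\sim P_f}\bigl[\inf_{\tilde\zbf}\{\ell_f(\tilde\zbf)+\lambda\,c(\zbf,\tilde\zbf)\}\bigr]$, and closes the loop using their Proposition~A.1 (the Gao--Kleywegt dual). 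This detour ties the regularized objective to the explicit dual form used elsewhere in the paper, but your direct Lagrangian route already suffices for the Claim as stated. One small caution on your option~(ii): applying Sion's theorem on the Wasserstein \emph{ball} yields $\sup_{\lambda}\min_{P_w\in\text{ball}}L$, whereas the Claim's right-hand side has an \emph{unconstrained} inner minimum over all $P_w$; to obtain the desired identity you should take the compact convex set to be the full set of probability measures on (bounded) $\Zcal$ rather than the ball.
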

The above duality result suggests we alternatively consider the regularized learning objective:
\begin{equation}
\label{eqn:new-obj}
\underset{f\in\Fcal_{}, w\in\Hcal}{\text{minimize}}\, \Ebb_{P_w}R(f) + \lambda \cdot d_W(P_w, P_f),
\end{equation}
where the relaxation on $\max_{\lambda\geq0}$ has been made. Here, $\lambda$ is the tuning parameter that regularizes the domain transportation between the reweighted training domain and the deployment domain. 
To give some insight for the transportation-regularized objective, we use learning from bandit feedback data as a test bed and examine how the optimal solution of (\ref{eqn:new-obj}) compares to the solution of the standard counterfactual risk minimization with complete overlapping \citep{swaminathan2015counterfactual}. Let the \emph{logging policy} by given by $\pi$. It holds in the bandit off-policy learning setting that:  $P_f(u,i)\propto \frac{f(u,i)}{\pi(u,i)}$, where $f(u,i)\in(0,1)$ now serves as the policy we try to optimize. The \emph{counterfactual risk} is given by: $\Ebb_{P_f}R(f) := \sum_{(u,i)\in\Dcal}\frac{f(u,i)}{\pi(u,i)} \ell(f(u,i), y_{ui})$, and we prove that the solution of (\ref{eqn:new-obj}) is consistent with that of the counterfactual risk minimization. 

\begin{theorem}
\label{thm:consistency}
Let $f^* = \argmin_{f\in\Fcal}\Ebb_{P_f}R(f)$ be the solution of the counterfactual risk minimization, and $\hat{f}, \hat{w}$ be the solution of the proposed objective in (\ref{eqn:new-obj}). For suitably large $\lambda$ s.t. $\lambda\geq\Ocal(L)\geq 1$, it holds with probability at least $1-\delta$ that:
\begin{equation*}
\label{eqn:consistency}
\begin{split}
\Big| \Ebb_{P_{f^*}} R(f^*) - \big( \Ebb_{P_{\hat{w}}}R(\hat{f}) + \lambda d_W(P_{\hat{w}}, P_{\hat{f}})\big) \Big| & \lesssim \frac{\lambda}{n^{1/(d+1)}} + \frac{\log 1/(2\delta)}{n} + \lambda M_{G}  \sqrt{\frac{\log 1/(2\delta)}{n}} \\
& + \sqrt{\frac{M\max\{R_{f^*/\pi}(f^*)^2, R_{\tilde{w}}(\tilde{f})^2\}\log \frac{1}{2\delta}}{n}},
\end{split}
\end{equation*}
where $M_{G}$, $R_w(f)^2$ are defined as in Theorem \ref{thm:obj-orig-finite-sample}, and $\pi$ is the logging policy.
\end{theorem}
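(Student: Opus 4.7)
The plan is to split the target quantity into three pieces: (i) an empirical-to-population gap at the solution $(\hat{f}, \hat{w})$; (ii) an optimality comparison carried out at the empirical level; (iii) an empirical-to-population gap at a carefully constructed witness point $(f^*, f^*/\pi)$. The key structural observation enabling the argument is that in the bandit setting, the choice $w(u,i) = f(u,i)/\pi(u,i)$ gives $P_w \equiv P_f$ at both the population and empirical level (with the uniform base measure), so that $d_W(P_w, P_f) = 0$ and $\Ebb_{P_w} R(f) = \Ebb_{P_f} R(f)$ exactly. This supplies a zero-Wasserstein-cost witness against which the proposed objective can be compared.

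For the direction $\Ebb_{P_{\hat{w}}}R(\hat{f}) + \lambda d_W(P_{\hat{w}}, P_{\hat{f}}) - \Ebb_{P_{f^*}}R(f^*) \leq \text{RHS}$, I would first pass from population to empirical at $(\hat{f}, \hat{w})$ via a Bernstein/Bennett bound on $\Ebb_{P_w}R(f) - \Ebb_{P_{n,w}}R(f)$, producing the first-order term $\sqrt{M R_{\tilde{w}}(\tilde{f})^2 \log(1/(2\delta))/n}$ and the second-order term $\log(1/(2\delta))/n$. In parallel, the discrepancy $|d_W(P_{\hat{w}}, P_{\hat{f}}) - d_W(P_{n,\hat{w}}, P_{n,\hat{f}})|$ is controlled by the triangle inequality combined with the empirical Wasserstein convergence rate $d_W(P_n, P) \lesssim n^{-1/(d+1)}$ (Dudley/Weed--Bach in dimension $d$), which yields the $\lambda / n^{1/(d+1)}$ term, while a uniform control of the dual Kantorovich potential produces the $\lambda M_G \sqrt{\log(1/(2\delta))/n}$ term. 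Next, empirical optimality of $(\hat{f}, \hat{w})$ bounds the empirical objective at $(\hat{f}, \hat{w})$ by that at $(f^*, f^*/\pi)$; by the witness property the latter simplifies exactly to $\Ebb_{P_{n, f^*}}R(f^*)$, and a final Bernstein step pushes this back to the population quantity $\Ebb_{P_{f^*}}R(f^*)$ with the companion variance $R_{f^*/\pi}(f^*)^2$, explaining the $\max\{\cdot,\cdot\}$ in the stated bound.

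For the reverse direction $\Ebb_{P_{f^*}}R(f^*) - (\Ebb_{P_{\hat{w}}}R(\hat{f}) + \lambda d_W(P_{\hat{w}}, P_{\hat{f}})) \leq \text{RHS}$, the main device is Kantorovich--Rubinstein duality: since $\ell_{\hat{f}}$ is $L$-Lipschitz in its input, $|\Ebb_{P_{\hat{w}}}R(\hat{f}) - \Ebb_{P_{\hat{f}}}R(\hat{f})| \leq L \cdot d_W(P_{\hat{w}}, P_{\hat{f}})$, so
\[
\Ebb_{P_{\hat{w}}}R(\hat{f}) + \lambda d_W(P_{\hat{w}}, P_{\hat{f}}) \;\geq\; \Ebb_{P_{\hat{f}}}R(\hat{f}) + (\lambda - L)\, d_W(P_{\hat{w}}, P_{\hat{f}}) \;\geq\; \Ebb_{P_{\hat{f}}}R(\hat{f})
\]
whenever $\lambda \geq L$, which is precisely the ``suitably large $\lambda$'' requirement. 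Counterfactual optimality $\Ebb_{P_{f^*}}R(f^*) \leq \Ebb_{P_{\hat{f}}}R(\hat{f})$ then makes the population-level gap in this direction non-positive, and the $1/\sqrt{n}$ contributions are only those already absorbed in the empirical-to-population transitions.

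The main obstacle is the Wasserstein term in finite samples: passing $d_W(P_{n,w}, P_{n,f}) \to d_W(P_w, P_f)$ uniformly over $(w, f) \in \Hcal \times \Fcal$ requires combining the dimension-dependent empirical Wasserstein rate $n^{-1/(d+1)}$ with a covering-number argument over $\Hcal$ and $\Fcal$ to control the Kantorovich dual, and tracking these approximations with enough care so that both the $\lambda / n^{1/(d+1)}$ and the $\lambda M_G \sqrt{\log(1/(2\delta))/n}$ terms emerge with the correct constants. Coordinating the Bernstein variance term so that it matches $\max\{R_{f^*/\pi}(f^*)^2, R_{\tilde{w}}(\tilde{f})^2\}$ --- i.e., the worse of the witness and the empirical-optimizer second moments --- is the main bookkeeping challenge.
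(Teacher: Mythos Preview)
Your proposal is correct and follows essentially the same route as the paper: the witness $(f^*,\,w^*=f^*/\pi)$ with zero Wasserstein cost handles one direction via empirical optimality together with Bernstein concentration and the empirical Wasserstein rate $n^{-1/(d+1)}$ (the paper's Lemmas~A.1 and~A.4), while Kantorovich--Rubinstein duality combined with $\lambda \geq L$ and counterfactual optimality of $f^*$ handles the other. The only cosmetic difference is that the paper runs the second direction by first writing $\Ebb_{P_f}R(f) \leq \Ebb_{P_w}R(f) + L\, d_W(P_w,P_f)$ for generic $(f,w)$, then passing to the empirical objective before specializing to the minimizer, whereas you stay at the population level for that step; the ingredients and resulting slack terms are identical.
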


Intuitively, since both $f$ and $\pi$ are probability distributions with the same support in the idea bandit feedback setting, a suitably large $\lambda$ will drive $P_w$ ultimately toward $P_f$ and lead to the consistency result. We defer the proof of Theorem \ref{thm:consistency} to Appendix \ref{sec:append-proof-thm3}. This example reveals the effectiveness of the proposed transportation-regularized risk minimization, which we show is a strict extension of the counterfactual risk minimization. For IR systems, the transportation-regularized risk minimization applies to a much broader setting including deterministic recommendation. Nonetheless, directly solving (\ref{eqn:new-obj}) is still challenging due to the complications in computing the Wasserstein distance. Inspired by how Wasserstein GAN transforms its objective to a two-player adversarial game \citep{gulrajani2017improved}, we can also convert our objective to a minimax optimization problem discussed below.

\subsection{Minimax Optimization}

\begin{algorithm}[H]
\label{algo:minimax}
\caption{Batch-wise GDA for Minimax Optimization}
\KwIn{$\tilde{\Dcal}(t)$: a batch of training sample of size $m$ at step $t$. $\eta$: the learning rate for ascent step. $\gamma$: shrinking parameter for the learning rate of descent step. $\lambda$: the regularization parameter.}
\For{each step $t=0,1,\ldots$}{
Compute $\text{reco}(u,i;f_{\theta_1})$ for $(u,i)\in\tilde{\Dcal}(t)$ (see Section \ref{sec:experiment} for the computation detail)\;
$\thetabf_1^{(t+1)} \leftarrow \thetabf_1^{(t)} - \frac{\eta}{\gamma} \nabla_{\thetabf_1} \Lcal_{\lambda}\big([\thetabf_1^{(t)}, \thetabf_2^{(t)}], \thetabf_3^{(t)}\big)$; $\thetabf_2^{(t+1)} \leftarrow \thetabf_2^{(t)} - \frac{\eta}{\gamma} \nabla_{\thetabf_2} \Lcal_{\gamma}\big([\thetabf_1^{(t)}, \thetabf_2^{(t)}], \thetabf_3^{(t)}\big)$\;
$\thetabf_3^{(t+1)} \leftarrow \thetabf_3^{(t)} + \eta \nabla_{\thetabf_3} \Lcal_{\lambda}\big([\thetabf_1^{(t)}, \thetabf_2^{(t)}], \thetabf_3^{(t)}\big)$.
}
\end{algorithm}

The key to our implementation is again perceiving Wasserstein distance as a binary classification problem with Lipschitz-bounded classifier as we discussed in Section \ref{sec:constraint-risk-minimization}. Towards this end, we denote by $g\in\Gcal$ any suitable classifier for computing $d_W$ according to (\ref{eqn:wasserstein-distance}). Without loss of generality, we let $w$, $f$ and $g$ be parameterized by $\thetabf_1$, $\thetabf_2$ and $\thetabf_3$. It is straightforward to verify that the regularized objective in (\ref{eqn:new-obj}) transforms exactly into: $\min_{[\thetabf_1, \thetabf_2]}\max_{\thetabf_3} \Lcal_{\lambda}([\thetabf_1, \thetabf_2], \thetabf_3)$, where $\Lcal_{\lambda}$ given by:
\begin{equation}
\label{eqn:minimax-obj}
\sum_{(u,i)\in\Dcal} w_{ui}(\thetabf_2) \ell_{ui}\big(f_{\thetabf_1}\big) + \lambda \Big(\sum_{(u,i)\in\Dcal}w_{ui}(\thetabf_2) \cdot g_{ui}(\thetabf_3) - \sum_{(u,i)\in\Dcal} \text{reco}(u,i;f_{\thetabf_1}) \cdot g_{ui}(\thetabf_3) \Big).
\end{equation}

In view of (\ref{eqn:minimax-obj}) as a two-player game, the first player optimizes $f$ and $w$ to minimize both the training risk and domain transportation, while the second player tries to increase the gap between the two domains. The global equilibrium is reached when neither player can change the objective without knowing each other's strategy, which means $f$ and $w$ cannot simultaneously improve the empirical risk and the domain transportation between $P_w$ and $P_f$. It suggests we have found the recommendation hypothesis that best transports the learnt patterns to its intervention domain.
We propose using the two-time-scale gradient descent-ascent algorithm (GDA) to solve the minimax optimization. 

GDA is very efficient for solving objectives like (\ref{eqn:minimax-obj}) with both empirical and theoretical evidence \citep{lin2020gradient}, and it has been widely applied to optimize generative adversarial networks. Since our solution will be deployed to real-world systems, we pay more attention to the convergence and stability of GDA. In appendix \ref{sec:append-GDA}, we rigorously prove that for the bounded Lipschitz functions we consider, GDA will converge to the global Nash equilibrium with a proper learning-rate schedule. Further, it is unlikely to bouncing around near the optimum. Now that we have the performance guarantee, we apply the batch-wise GDA described in Algorithm \ref{algo:minimax} to solve our objective.

\section{Experiment and Result}
\label{sec:experiment}
We present in this section the experiment settings, implementations, empirical results, as well as the analysis. The complete experimental results and the real-world online testing performance are deferred to Appendix \ref{sec:append-experiment}. All the implementation codes are provided in the supplement material.

\textbf{Computing the $\text{reco}(u,i; f)$ function}. For the deterministic top-K recommendation, $\text{reco}(u,i;f)$ requires computing and sorting the scores of $\{f(u,i_1),\ldots,f(u,i_{|\Ical|})\}$ for the user $u$. This is expensive during training, so we refer to the widespread practice of downsampling. In particular, for each $(u',i')$, we randomly sample $m<|\Ical|$ \emph{irrelevant} items and see if $i'$ is among the top-K \citep{krichene2020sampled}. While the sampled outcome may have certain bias, it significantly speeds up the training process while maintaining a decent accuracy under moderate $m$, e.g. $m=100$. 

\textbf{Benchmark datasets.} We refer to the \emph{MovieLens-1M}, \emph{LastFM}, and \emph{GoodReads} datasets that are extensively employed by the IR literature. We treat them as implicit feedback data, and the detailed description and processing are deferred to Appendix \ref{sec:append-data}.

\textbf{Simulation studies.}. For implicit feedback, the response (click) is generated from the exposure and relevance status via: $\Pbb(Y_{ui}=1) = \Pbb(O_{ui}=1)\cdot\Pbb(R_{ui}=1)$, where $O_{ui}$ indicates the exposure and $R_{ui}$ is the relevance. The relevance score is critical for unbiased offline evaluation, but it is unknown for real-world datasets. We thus conduct simulation so we know what the ground truth relevance is. To make sure the distribution of the simulated data conforms to the real-world feedback data, we first extract the exposure and relevance signals from public datasets (e.g. the three datasets described above) via models such as matrix factorization, and then generate the new feedback data by adding noise to the extracted relevance signal. The detailed mechanisms are provided in Appendix \ref{sec:append-simulation-setting}. 


\begin{figure}[htb]
    \centering
    \includegraphics[width=0.92\linewidth]{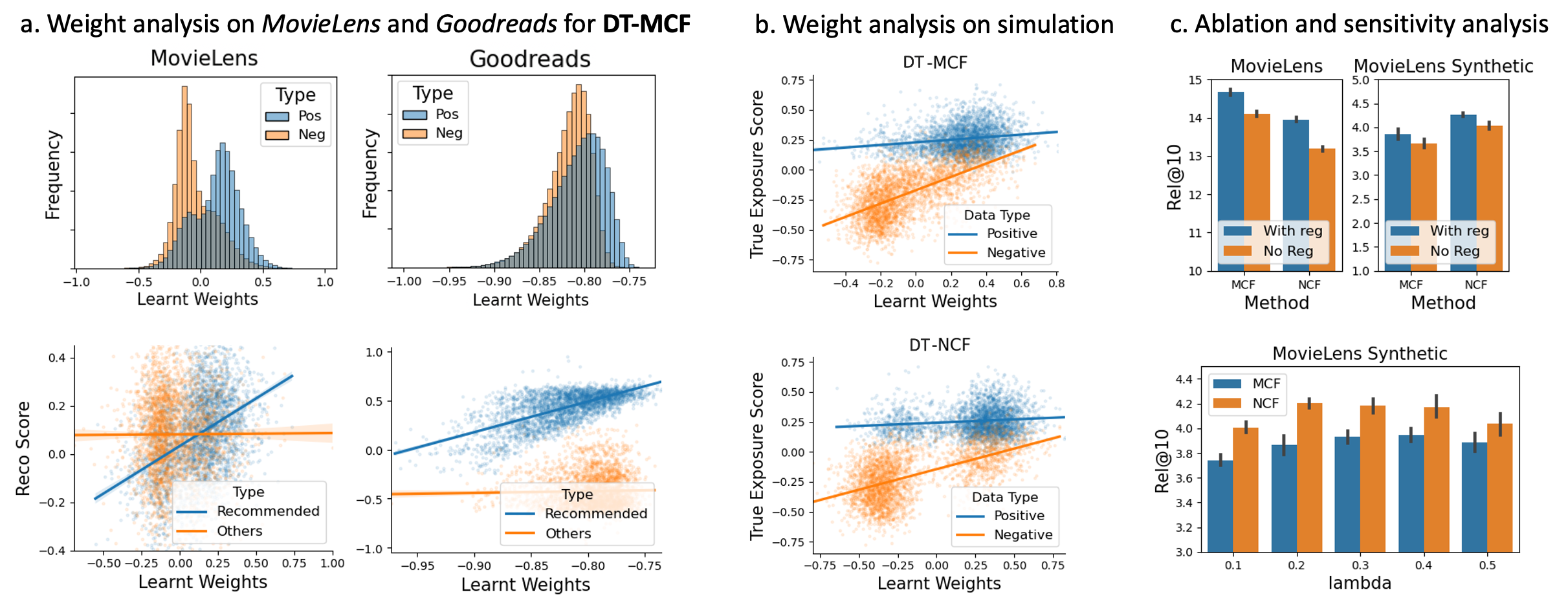}
    \caption{\small (a). Weight analysis on \emph{MovieLens} and \emph{GoodReads}. Upper: distribution of $w_{ui}$ on the positive and negative feedback data. Lower: relation between $w_{ui}$ and $f_{ui}$ under $P_f$ (the recommended $(u,i)$ under $f$) and others; (b). Weight analysis on \emph{MovieLens} simulation: the relations between $w_{ui}$ and $\Pbb(O_{ui}=1)$ on the positive and negative synthetic feedback data for MCF (upper) and NCF (lower); (c). Ablation study for the transportation regularization and sensitivity analysis on $\lambda$ for \emph{MovieLens} and its synthetic data.}
    \label{fig:main}
    \vspace{-0.15cm}
\end{figure}

\textbf{Models and baselines methods.} The functions of $f$, $w$ and $g$ in the minimax objective (\ref{eqn:minimax-obj}) can be any recommendation algorithm that suits the context, such as matrix-factorization collaborative filtering (\textbf{MCF}), neural collaborative filtering (\textbf{NCF}) \citep{he2017neural}, and sequential recommendation model with the prevalent attention mechanism (\textbf{Attn}) as in \cite{kang2018self}. We index our domain domain transportation approach by such as \textbf{DT-X}, with \textbf{X} stands for the models of $f$, $w$ and $g$. We will experiment with a broad range of combinations for $f$, $w$ and $g$ as we present in Appendix \ref{sec:append-results}. As for the baseline methods, other than the standard popularity-based recommendation (\textbf{Pop}) and MCF, we select particularly the approaches that adapts causal inference or missing data techniques, including the IPW-debaised MF method (\textbf{IPW-MF}) from \cite{saito2020unbiased}, the user-exposure aware MF (\textbf{ExpoMF}) from \cite{liang2016causal}, and the adversarial-counterfactual MF method (\textbf{AC-MF}) from \cite{xu2020adversarial}. The model configurations and training details are deferred to Appendix \ref{sec:append-model}.

\begin{table}[!]
  \centering
      \caption{\small Real-data and semi-synthetic experiment \emph{testing} testing results for the MovieLens-1M, LaftFM and GoodReads dataset. All the results are computed via ten runs and the Hit and NDCG metrics are multiplied by 100 for presentation. The results for the \emph{LastFM} and \emph{GoodReads} datasets, and sequential recommendation (\emph{Attn}) are deferred to Appendix D.3. The metric of \textbf{Rel@10}, \textbf{Hit@10} and \textbf{NDCG@10} are short for the top-10 relevance score, hitting rate and normalized discounted cumulative gain. The metrics on synthetic data are computed by using the true relevance model. We highlight the \textbf{best} and \underline{second best} results. \textbf{Attn} is not experimented on simulation because the ordering of the generated data is random.}    \label{tab:main}
    \resizebox{1.01\columnwidth}{!}{%
    \begin{tabular}{|c c c c c | c c | c c | c c |}
        \hline \hline
       & Pop & IPW-MF & ExpoCF & AC-MF & MCF & \textbf{DT-MCF} & NCF & \textbf{DT-NCF} & Attn & \textbf{DT-Attn} \\
      \hline
        \multicolumn{11}{c}{\textbf{\textit{MovieLens-1M}}} \\
        Rel@10 & 9.05 (.02)  & 14.00 (.05)  &  {14.04} (.08)  & 14.05 (.07)  &  14.02 (.03)  &  \underline{14.65} (.07)  &  {13.77} (.04)  &  13.98 (.06) & 14.26 (.10) & \textbf{16.32} (.09) \\
        Hit@10 & 43.67 (.10)  & {60.48} (.12)  &  60.21 (.14)  & 61.19 (.11)  &  60.27 (.15)  &  \underline{61.95} (.12)  &  {59.95} (.13)  & 61.17 (.15) & 72.23 (.19) & \textbf{74.82} (.24) \\
        NDCG@10 & 22.73 (.03)  & 31.17 (.09)  &  {30.98} (.10)  & 31.72 (.11)  &  30.91 (.11)  &  \underline{32.83} (.08)  &  {31.07} (.10)  &  32.27 (.07) & 42.48 (.10) & \textbf{43.09} (.14) \\
        \hline 
        \multicolumn{11}{c}{\textbf{\textit{MovieLens-1M Simulation}}} \\
        Rel@10$^*$ & 2.85 (.02)  & 3.77 (.02)  &  3.73 (.03)  & 3.79 (.03)  &  3.75 (.02)  &  \underline{3.98} (.02)  &  3.96 (.02)  & \textbf{4.11} (.03) & -  & - \\
        Hit@10 & 56.12 (.07)  & 75.02 (.11)  &  74.87 (.09)  & 74.98 (.13)  &  75.64 (.09)  &  \underline{76.66} (.12)  &  75.69 (.09)  & \textbf{76.87} (.11) & - & - \\
        NDCG@10 & 29.71 (.05)  & 37.79 (.06)  &  37.86 (.07)  & 37.77 (.07)  &  37.70 (.05)  &  \underline{38.57} (.08)  &  38.43 (.05)  &  \textbf{38.72} (.07) & - & - \\
        \hline 
        \multicolumn{11}{c}{\textbf{\textit{LastFM}}} \\
        Rel@10 & 1.63 (.01)  & 6.41 (.04)  &  6.44 (.03)  & 6.48 (.03)  &  6.38 (.03)  &  \textbf{6.59} (.02) & 6.20 (.05) & \underline{6.56} (.04) & 5.71 (.04) & 5.82 (.05)  \\
        Hit@10 & 25.22 (.01)  & 79.21 (.31)  &  79.62 (.27)  & 79.64 (.22)  &  79.02 (.22)  &  \textbf{81.81} (20) & 77.20 (.35) & \underline{80.83} (.33) & 67.48 (.35) & 68.74 (.37)  \\
        NDCG@10 & 15.35 (.01)  & 50.63 (.19)  &  50.76 (.22)  & 51.05 (.19)  &  50.06 (.13)  &  \underline{52.96} (.14) & 50.55 (.18) & 52.03 (.26) & {56.12} (.22) & \textbf{56.60} (.19) \\
        \hline 
        \multicolumn{11}{c}{\textbf{\textit{GoodReads}}} \\
        Rel@10 & 4.98 (.01)  & 5.94 (.08)  &  5.93 (.10)  & 5.97 (.07)  &  5.90 (.08)  &  \textbf{6.14} (.07) &  5.84 (.02) & \underline{6.07} (.09)  &  4.57 (.04)  & 4.68 (.10)   \\
        Hit@10 & 43.37 (.06)  & 59.04 (.10)  &  59.28 (.14)  & 59.33 (.12)  &  58.32 (.12)  &  \textbf{60.95} (.11) & 58.07 (.04)  & \underline{59.73} (.34)  &  48.97 (.26)  & 49.30 (.25) \\
        NDCG@10 & 26.16 (.03)  & 35.19 (.07)  &  35.22 (.08)  & 35.45 (.11)  &  34.69 (.10)  &  \textbf{36.45} (.09) & 34.90 (.02)  & \underline{36.27} (.15)  &  25.76 (.17)  & 26.04 (.13) \\
        \hline \hline
    \end{tabular}
    }
    \vspace{-0.3cm}
\end{table}

\textbf{Training \& evaluation}. We use the two-timescale GDA as described in Algorithm \ref{algo:minimax} with a pre-selected shrinkage parameter, and we will conduct its sensitivity analysis in Appendix \ref{sec:append-analysis}. The configurations for both our model and the baselines are described in Appendix \ref{sec:append-model}. We adopt the convention where the second-to-last interaction is used for validation, and the last interaction is used for testing. For evaluation, in addition to the regular \emph{Hit@10} and \emph{NDCG@10} as in \cite{rendle2020neural}, we further consider the relevance score of the top-10 recommendation, i.e. $\sum_{\text{top-10}}R_{ui}$, since it also reveals the quality of the recommendation and more closely resembles the deployment performance of $\Ebb_{P_f}R(f)$. We have access to $R_{ui}$ during simulation, and on real-data analysis, we settle to the $Y_{ui}$ in testing data as an approximation. The evaluation results are provided in Table \ref{tab:main}.

\textbf{Result analysis}. From Table \ref{tab:main}, we first observe that the proposed methods achieves the best performances in both the benchmark and simulated datasets on all metrics. In particular, the \textbf{DT-X} method significantly improves the performance of the original model \textbf{X}. Also, our solution outperforms all the baselines, including those adapted from causal inference or missing data methods. The rest simulation studies are provided in Appendix \ref{sec:append-results}, and their results admit similar patterns. Finally, we conduct \textbf{ablation studies} on the transportation regularizer and the \textbf{sensitivity analysis} on $\lambda$ (see Figure \ref{fig:main}c). It is clear that without the transportation regularization, the performances degenerate significantly in both the real data and simulation studies. Also, we find that the performance variation of the proposed approach is relatively steady under different $\lambda$.


\textbf{Analysis on $w$}. We specifically analyze the learnt weights since they play a critical role in our approach. On both \emph{MovieLens} and \emph{GoodReads} (Figure \ref{fig:main}a), we find that: 1). the learnt weights are larger on the positive training instances; 2). for the $(u,i)$ that are among $P_f$ (the recommended), the weights $w_{ui}$ are higher and relates positively with $f(u,i)$, while there is no such pattern for the non-recommended. They both suggest the weights are indeed learnt to balance the domains of the positive feedback data and the instances recommended under $f$. In Figure \ref{fig:main}b, we observe from the \emph{Movielens} simulation study that the learnt weights correlate positively to the exposure on the negative training data, and are distributed more evenly on the positive data. It means that $w_{ui}$ learns to emphasize the instances that were more likely to be exposed but not clicked, which usually have strong negative signals, and the positive instances. It is valuable for training under implicit feedback because the algorithm learns to locate the strong negative instances. The remaining weight analysis, ablation studies, and sensitivty analysis are provided in Appendix \ref{sec:append-results}.

\textbf{Real-world IR system.} We deploy our solution to
a real-world IR system held by a major e-commerce platform in the U.S. We deploy our transportation-regularized solution, and the online A/B/C testing against the standard and IPW-based solutions for an item recommendation task. Due to the space limitation, we defer the details and analysis to Appendix \ref{sec:append-industrial}.

\section{Conclusion and Discussion}
\label{sec:discussion}

In this paper, we study from the critical intervention perspective of recommendation and summarize a domain transportation view supported by the realistic working mechanism of IR systems. We propose a principled transportation-constraint risk minimization for optimizing recommendation, and illustrate by both theories and experiments how we find it more favorable than directly importing the causal inference and missing data solutions. However, we point out that our work does not aim to replace those solutions since we focus exclusively on optimizing recommendation. For instance, we do not readily incorporate the known causal mechanisms (which might be available for certain applications) to answer questions that are relevant to the decision making in IR. It will be interesting to explore how the presumed or discovered causal structures can interact with domain transportation and improve our solution. Towards this end, our transportation-constraint risk minimization can be viewed as a modular toolkit that allows researchers to apply further insights to solve more complex IR problems. Finally, we hope our work raises the awareness of developing IR-oriented solutions with a proper understanding of the system's mechanism. The novel perspectives, solutions, and theoretical results in our paper may also open the door to future research.


\bibliographystyle{abbrvnat}
\bibliography{references}

\begin{thebibliography}{64}
\providecommand{\natexlab}[1]{#1}
\providecommand{\url}[1]{\texttt{#1}}
\expandafter\ifx\csname urlstyle\endcsname\relax
  \providecommand{\doi}[1]{doi: #1}\else
  \providecommand{\doi}{doi: \begingroup \urlstyle{rm}\Url}\fi

\bibitem[Agarwal et~al.(2019)Agarwal, Takatsu, Zaitsev, and
  Joachims]{agarwal2019general}
A.~Agarwal, K.~Takatsu, I.~Zaitsev, and T.~Joachims.
\newblock A general framework for counterfactual learning-to-rank.
\newblock In \emph{Proceedings of the 42nd International ACM SIGIR Conference
  on Research and Development in Information Retrieval}, pages 5--14, 2019.

\bibitem[Anthony and Bartlett(2009)]{anthony2009neural}
M.~Anthony and P.~L. Bartlett.
\newblock \emph{Neural network learning: Theoretical foundations}.
\newblock cambridge university press, 2009.

\bibitem[Auer et~al.(2002)Auer, Cesa-Bianchi, Freund, and
  Schapire]{auer2002nonstochastic}
P.~Auer, N.~Cesa-Bianchi, Y.~Freund, and R.~E. Schapire.
\newblock The nonstochastic multiarmed bandit problem.
\newblock \emph{SIAM journal on computing}, 32\penalty0 (1):\penalty0 48--77,
  2002.

\bibitem[Austin and Stuart(2015)]{austin2015moving}
P.~C. Austin and E.~A. Stuart.
\newblock Moving towards best practice when using inverse probability of
  treatment weighting (iptw) using the propensity score to estimate causal
  treatment effects in observational studies.
\newblock \emph{Statistics in medicine}, 34\penalty0 (28):\penalty0 3661--3679,
  2015.

\bibitem[Ben-David and Urner(2012)]{ben2012hardness}
S.~Ben-David and R.~Urner.
\newblock On the hardness of domain adaptation and the utility of unlabeled
  target samples.
\newblock In \emph{International Conference on Algorithmic Learning Theory},
  pages 139--153. Springer, 2012.

\bibitem[Bonner and Vasile(2018)]{bonner2018causal}
S.~Bonner and F.~Vasile.
\newblock Causal embeddings for recommendation.
\newblock In \emph{Proceedings of the 12th ACM conference on recommender
  systems}, pages 104--112, 2018.

\bibitem[Bottou et~al.(2013)Bottou, Peters, Qui{\~n}onero-Candela, Charles,
  Chickering, Portugaly, Ray, Simard, and Snelson]{bottou2013counterfactual}
L.~Bottou, J.~Peters, J.~Qui{\~n}onero-Candela, D.~X. Charles, D.~M.
  Chickering, E.~Portugaly, D.~Ray, P.~Simard, and E.~Snelson.
\newblock Counterfactual reasoning and learning systems: The example of
  computational advertising.
\newblock \emph{Journal of Machine Learning Research}, 14\penalty0 (11), 2013.

\bibitem[Byrd and Lipton(2019)]{byrd2019effect}
J.~Byrd and Z.~Lipton.
\newblock What is the effect of importance weighting in deep learning?
\newblock In \emph{International Conference on Machine Learning}, pages
  872--881. PMLR, 2019.

\bibitem[Chen et~al.(2020)Chen, Dong, Wang, Feng, Wang, and He]{chen2020bias}
J.~Chen, H.~Dong, X.~Wang, F.~Feng, M.~Wang, and X.~He.
\newblock Bias and debias in recommender system: A survey and future
  directions.
\newblock \emph{arXiv preprint arXiv:2010.03240}, 2020.

\bibitem[Cheng et~al.(2016)Cheng, Koc, Harmsen, Shaked, Chandra, Aradhye,
  Anderson, Corrado, Chai, Ispir, et~al.]{cheng2016wide}
H.-T. Cheng, L.~Koc, J.~Harmsen, T.~Shaked, T.~Chandra, H.~Aradhye,
  G.~Anderson, G.~Corrado, W.~Chai, M.~Ispir, et~al.
\newblock Wide \& deep learning for recommender systems.
\newblock In \emph{Proceedings of the 1st workshop on deep learning for
  recommender systems}, pages 7--10, 2016.

\bibitem[Cortes et~al.(2010)Cortes, Mansour, and Mohri]{cortes2010learning}
C.~Cortes, Y.~Mansour, and M.~Mohri.
\newblock Learning bounds for importance weighting.
\newblock In \emph{Nips}, volume~10, pages 442--450. Citeseer, 2010.

\bibitem[Courty et~al.(2016)Courty, Flamary, Tuia, and
  Rakotomamonjy]{courty2016optimal}
N.~Courty, R.~Flamary, D.~Tuia, and A.~Rakotomamonjy.
\newblock Optimal transport for domain adaptation.
\newblock \emph{IEEE transactions on pattern analysis and machine
  intelligence}, 39\penalty0 (9):\penalty0 1853--1865, 2016.

\bibitem[Daskalakis and Panageas(2018)]{daskalakis2018limit}
C.~Daskalakis and I.~Panageas.
\newblock The limit points of (optimistic) gradient descent in min-max
  optimization.
\newblock In \emph{Proceedings of the 32nd International Conference on Neural
  Information Processing Systems}, page 9256–9266, 2018.

\bibitem[David et~al.(2010)David, Lu, Luu, and P{\'a}l]{david2010impossibility}
S.~B. David, T.~Lu, T.~Luu, and D.~P{\'a}l.
\newblock Impossibility theorems for domain adaptation.
\newblock In \emph{Proceedings of the Thirteenth International Conference on
  Artificial Intelligence and Statistics}, pages 129--136. JMLR Workshop and
  Conference Proceedings, 2010.

\bibitem[Dudley(2018)]{dudley2018real}
R.~M. Dudley.
\newblock \emph{Real analysis and probability}.
\newblock CRC Press, 2018.

\bibitem[Faury et~al.(2020)Faury, Tanielian, Dohmatob, Smirnova, and
  Vasile]{faury2020distributionally}
L.~Faury, U.~Tanielian, E.~Dohmatob, E.~Smirnova, and F.~Vasile.
\newblock Distributionally robust counterfactual risk minimization.
\newblock In \emph{Proceedings of the AAAI Conference on Artificial
  Intelligence}, volume~34, pages 3850--3857, 2020.

\bibitem[Fazlyab et~al.(2019)Fazlyab, Robey, Hassani, Morari, and
  Pappas]{fazlyab2019efficient}
M.~Fazlyab, A.~Robey, H.~Hassani, M.~Morari, and G.~Pappas.
\newblock Efficient and accurate estimation of lipschitz constants for deep
  neural networks.
\newblock \emph{Advances in Neural Information Processing Systems},
  32:\penalty0 11427--11438, 2019.

\bibitem[Gao and Kleywegt(2016)]{gao2016distributionally}
R.~Gao and A.~J. Kleywegt.
\newblock Distributionally robust stochastic optimization with wasserstein
  distance.
\newblock \emph{arXiv preprint arXiv:1604.02199}, 2016.

\bibitem[Germain et~al.(2009)Germain, Lacasse, Laviolette, and
  Marchand]{germain2009pac}
P.~Germain, A.~Lacasse, F.~Laviolette, and M.~Marchand.
\newblock Pac-bayesian learning of linear classifiers.
\newblock In \emph{Proceedings of the 26th Annual International Conference on
  Machine Learning}, pages 353--360, 2009.

\bibitem[Germain et~al.(2015)Germain, Lacasse, Laviolette, March, and
  Roy]{germain2015risk}
P.~Germain, A.~Lacasse, F.~Laviolette, M.~March, and J.-F. Roy.
\newblock Risk bounds for the majority vote: From a pac-bayesian analysis to a
  learning algorithm.
\newblock \emph{Journal of Machine Learning Research}, 16\penalty0
  (26):\penalty0 787--860, 2015.

\bibitem[Globerson and Tishby(2003)]{globerson2003sufficient}
A.~Globerson and N.~Tishby.
\newblock Sufficient dimensionality reduction.
\newblock \emph{Journal of Machine Learning Research}, 3\penalty0
  (Mar):\penalty0 1307--1331, 2003.

\bibitem[Gulrajani et~al.(2017)Gulrajani, Ahmed, Arjovsky, Dumoulin, and
  Courville]{gulrajani2017improved}
I.~Gulrajani, F.~Ahmed, M.~Arjovsky, V.~Dumoulin, and A.~Courville.
\newblock Improved training of wasserstein gans.
\newblock In \emph{Proceedings of the 31st International Conference on Neural
  Information Processing Systems}, pages 5769--5779, 2017.

\bibitem[He et~al.(2017)He, Liao, Zhang, Nie, Hu, and Chua]{he2017neural}
X.~He, L.~Liao, H.~Zhang, L.~Nie, X.~Hu, and T.-S. Chua.
\newblock Neural collaborative filtering.
\newblock In \emph{Proceedings of the 26th international conference on world
  wide web}, pages 173--182, 2017.

\bibitem[Jin et~al.(2020)Jin, Netrapalli, and Jordan]{jin2020local}
C.~Jin, P.~Netrapalli, and M.~Jordan.
\newblock What is local optimality in nonconvex-nonconcave minimax
  optimization?
\newblock In \emph{International Conference on Machine Learning}, pages
  4880--4889. PMLR, 2020.

\bibitem[Joachims et~al.(2017)Joachims, Swaminathan, and
  Schnabel]{joachims2017unbiased}
T.~Joachims, A.~Swaminathan, and T.~Schnabel.
\newblock Unbiased learning-to-rank with biased feedback.
\newblock In \emph{Proceedings of the Tenth ACM International Conference on Web
  Search and Data Mining}, pages 781--789, 2017.

\bibitem[Johansson et~al.(2019)Johansson, Sontag, and
  Ranganath]{johansson2019support}
F.~D. Johansson, D.~Sontag, and R.~Ranganath.
\newblock Support and invertibility in domain-invariant representations.
\newblock In \emph{The 22nd International Conference on Artificial Intelligence
  and Statistics}, pages 527--536. PMLR, 2019.

\bibitem[Johansson et~al.(2020)Johansson, Shalit, Kallus, and
  Sontag]{johansson2020generalization}
F.~D. Johansson, U.~Shalit, N.~Kallus, and D.~Sontag.
\newblock Generalization bounds and representation learning for estimation of
  potential outcomes and causal effects.
\newblock \emph{arXiv preprint arXiv:2001.07426}, 2020.

\bibitem[Kang and McAuley(2018)]{kang2018self}
W.-C. Kang and J.~McAuley.
\newblock Self-attentive sequential recommendation.
\newblock In \emph{2018 IEEE International Conference on Data Mining (ICDM)},
  pages 197--206. IEEE, 2018.

\bibitem[Koltchinskii et~al.(2002)Koltchinskii, Panchenko,
  et~al.]{koltchinskii2002empirical}
V.~Koltchinskii, D.~Panchenko, et~al.
\newblock Empirical margin distributions and bounding the generalization error
  of combined classifiers.
\newblock \emph{Annals of statistics}, 30\penalty0 (1):\penalty0 1--50, 2002.

\bibitem[Krichene and Rendle(2020)]{krichene2020sampled}
W.~Krichene and S.~Rendle.
\newblock On sampled metrics for item recommendation.
\newblock In \emph{Proceedings of the 26th ACM SIGKDD International Conference
  on Knowledge Discovery \& Data Mining}, pages 1748--1757, 2020.

\bibitem[Kuhn et~al.(2019)Kuhn, Esfahani, Nguyen, and
  Shafieezadeh-Abadeh]{kuhn2019wasserstein}
D.~Kuhn, P.~M. Esfahani, V.~A. Nguyen, and S.~Shafieezadeh-Abadeh.
\newblock Wasserstein distributionally robust optimization: Theory and
  applications in machine learning.
\newblock In \emph{Operations Research \& Management Science in the Age of
  Analytics}, pages 130--166. INFORMS, 2019.

\bibitem[Liang et~al.(2016)Liang, Charlin, and Blei]{liang2016causal}
D.~Liang, L.~Charlin, and D.~M. Blei.
\newblock Causal inference for recommendation.
\newblock In \emph{Causation: Foundation to Application, Workshop at UAI.
  AUAI}, 2016.

\bibitem[Lin et~al.(2020)Lin, Jin, and Jordan]{lin2020gradient}
T.~Lin, C.~Jin, and M.~Jordan.
\newblock On gradient descent ascent for nonconvex-concave minimax problems.
\newblock In \emph{International Conference on Machine Learning}, pages
  6083--6093. PMLR, 2020.

\bibitem[Little et~al.(2002)Little, Rubin, and Rubin]{little2002statistical}
R.~J. Little, D.~B. Rubin, and D.~B. Rubin.
\newblock Statistical analysis with missing data: Wiley series in probability
  and statistics.
\newblock 2002.

\bibitem[Liu et~al.(2020)Liu, Cheng, Dong, He, Pan, and Ming]{liu2020general}
D.~Liu, P.~Cheng, Z.~Dong, X.~He, W.~Pan, and Z.~Ming.
\newblock A general knowledge distillation framework for counterfactual
  recommendation via uniform data.
\newblock In \emph{Proceedings of the 43rd International ACM SIGIR Conference
  on Research and Development in Information Retrieval}, pages 831--840, 2020.

\bibitem[Luenberger(1997)]{luenberger1997optimization}
D.~G. Luenberger.
\newblock \emph{Optimization by vector space methods}.
\newblock John Wiley \& Sons, 1997.

\bibitem[Maurer and Pontil(2009)]{maurer2009empirical}
A.~Maurer and M.~Pontil.
\newblock Empirical bernstein bounds and sample variance penalization.
\newblock \emph{arXiv preprint arXiv:0907.3740}, 2009.

\bibitem[Mikolov et~al.(2013)Mikolov, Chen, Corrado, and
  Dean]{mikolov2013efficient}
T.~Mikolov, K.~Chen, G.~Corrado, and J.~Dean.
\newblock Efficient estimation of word representations in vector space.
\newblock \emph{arXiv preprint arXiv:1301.3781}, 2013.

\bibitem[Mohri et~al.(2018)Mohri, Rostamizadeh, and
  Talwalkar]{mohri2018foundations}
M.~Mohri, A.~Rostamizadeh, and A.~Talwalkar.
\newblock \emph{Foundations of machine learning}.
\newblock MIT press, 2018.

\bibitem[Rachev(1985)]{rachev1985monge}
S.~T. Rachev.
\newblock The monge--kantorovich mass transference problem and its stochastic
  applications.
\newblock \emph{Theory of Probability \& Its Applications}, 29\penalty0
  (4):\penalty0 647--676, 1985.

\bibitem[Redko et~al.(2017)Redko, Habrard, and Sebban]{redko2017theoretical}
I.~Redko, A.~Habrard, and M.~Sebban.
\newblock Theoretical analysis of domain adaptation with optimal transport.
\newblock In \emph{Joint European Conference on Machine Learning and Knowledge
  Discovery in Databases}, pages 737--753. Springer, 2017.

\bibitem[Rendle et~al.(2020)Rendle, Krichene, Zhang, and
  Anderson]{rendle2020neural}
S.~Rendle, W.~Krichene, L.~Zhang, and J.~Anderson.
\newblock Neural collaborative filtering vs. matrix factorization revisited.
\newblock In \emph{Fourteenth ACM Conference on Recommender Systems}, pages
  240--248, 2020.

\bibitem[Rockafellar and Wets(2009)]{rockafellar2009variational}
R.~T. Rockafellar and R.~J.-B. Wets.
\newblock \emph{Variational analysis}, volume 317.
\newblock Springer Science \& Business Media, 2009.

\bibitem[Rosenbaum and Rubin(1983)]{rosenbaum1983central}
P.~R. Rosenbaum and D.~B. Rubin.
\newblock The central role of the propensity score in observational studies for
  causal effects.
\newblock \emph{Biometrika}, 70\penalty0 (1):\penalty0 41--55, 1983.

\bibitem[Rubin(2005)]{rubin2005causal}
D.~B. Rubin.
\newblock Causal inference using potential outcomes: Design, modeling,
  decisions.
\newblock \emph{Journal of the American Statistical Association}, 100\penalty0
  (469):\penalty0 322--331, 2005.

\bibitem[Sachdeva et~al.(2020)Sachdeva, Su, and Joachims]{sachdeva2020off}
N.~Sachdeva, Y.~Su, and T.~Joachims.
\newblock Off-policy bandits with deficient support.
\newblock In \emph{Proceedings of the 26th ACM SIGKDD International Conference
  on Knowledge Discovery \& Data Mining}, pages 965--975, 2020.

\bibitem[Saito(2020)]{saito2020asymmetric}
Y.~Saito.
\newblock Asymmetric tri-training for debiasing missing-not-at-random explicit
  feedback.
\newblock In \emph{Proceedings of the 43rd International ACM SIGIR Conference
  on Research and Development in Information Retrieval}, pages 309--318, 2020.

\bibitem[Saito et~al.(2020)Saito, Yaginuma, Nishino, Sakata, and
  Nakata]{saito2020unbiased}
Y.~Saito, S.~Yaginuma, Y.~Nishino, H.~Sakata, and K.~Nakata.
\newblock Unbiased recommender learning from missing-not-at-random implicit
  feedback.
\newblock In \emph{Proceedings of the 13th International Conference on Web
  Search and Data Mining}, pages 501--509, 2020.

\bibitem[Schnabel et~al.(2016)Schnabel, Swaminathan, Singh, Chandak, and
  Joachims]{schnabel2016recommendations}
T.~Schnabel, A.~Swaminathan, A.~Singh, N.~Chandak, and T.~Joachims.
\newblock Recommendations as treatments: Debiasing learning and evaluation.
\newblock In \emph{international conference on machine learning}, pages
  1670--1679. PMLR, 2016.

\bibitem[Shalit et~al.(2017)Shalit, Johansson, and
  Sontag]{shalit2017estimating}
U.~Shalit, F.~D. Johansson, and D.~Sontag.
\newblock Estimating individual treatment effect: generalization bounds and
  algorithms.
\newblock In \emph{International Conference on Machine Learning}, pages
  3076--3085. PMLR, 2017.

\bibitem[Si et~al.(2020)Si, Zhang, Zhou, and Blanchet]{si2020distributionally}
N.~Si, F.~Zhang, Z.~Zhou, and J.~Blanchet.
\newblock Distributionally robust policy evaluation and learning in offline
  contextual bandits.
\newblock In \emph{International Conference on Machine Learning}, pages
  8884--8894. PMLR, 2020.

\bibitem[Sriperumbudur et~al.(2009)Sriperumbudur, Fukumizu, Gretton,
  Sch{\"o}lkopf, and Lanckriet]{sriperumbudur2009integral}
B.~K. Sriperumbudur, K.~Fukumizu, A.~Gretton, B.~Sch{\"o}lkopf, and G.~R.
  Lanckriet.
\newblock On integral probability metrics,$\backslash$phi-divergences and
  binary classification.
\newblock \emph{arXiv preprint arXiv:0901.2698}, 2009.

\bibitem[Steck(2010)]{steck2010training}
H.~Steck.
\newblock Training and testing of recommender systems on data missing not at
  random.
\newblock In \emph{Proceedings of the 16th ACM SIGKDD international conference
  on Knowledge discovery and data mining}, pages 713--722, 2010.

\bibitem[Sutton and Barto(1998)]{sutton1998reinforcement}
R.~S. Sutton and A.~G. Barto.
\newblock Reinforcement learning: An introduction.
\newblock 1998.

\bibitem[Swaminathan and
  Joachims(2015{\natexlab{a}})]{swaminathan2015counterfactual}
A.~Swaminathan and T.~Joachims.
\newblock Counterfactual risk minimization: Learning from logged bandit
  feedback.
\newblock In \emph{International Conference on Machine Learning}, pages
  814--823. PMLR, 2015{\natexlab{a}}.

\bibitem[Swaminathan and Joachims(2015{\natexlab{b}})]{swaminathan2015self}
A.~Swaminathan and T.~Joachims.
\newblock The self-normalized estimator for counterfactual learning.
\newblock \emph{advances in neural information processing systems}, 28,
  2015{\natexlab{b}}.

\bibitem[Wang et~al.(2019)Wang, Zhang, Sun, and Qi]{wang2019doubly}
X.~Wang, R.~Zhang, Y.~Sun, and J.~Qi.
\newblock Doubly robust joint learning for recommendation on data missing not
  at random.
\newblock In \emph{International Conference on Machine Learning}, pages
  6638--6647. PMLR, 2019.

\bibitem[Wang et~al.(2020)Wang, Long, Wang, and Jordan]{wang2020transfer}
X.~Wang, M.~Long, J.~Wang, and M.~I. Jordan.
\newblock Transferable calibration with lower bias and variance in domain
  adaptation.
\newblock In H.~Larochelle, M.~Ranzato, R.~Hadsell, M.~Balcan, and H.~Lin,
  editors, \emph{Advances in Neural Information Processing Systems 33: Annual
  Conference on Neural Information Processing Systems 2020, NeurIPS 2020,
  December 6-12, 2020, virtual}, 2020.

\bibitem[Xiao and Wang(2021)]{xiao2021general}
T.~Xiao and D.~Wang.
\newblock A general offline reinforcement learning framework for interactive
  recommendation.
\newblock In \emph{The Thirty-Fifth AAAI Conference on Artificial Intelligence,
  AAAI 2021}, 2021.

\bibitem[Xu et~al.(2020)Xu, Ruan, Korpeoglu, Kumar, and
  Achan]{xu2020adversarial}
D.~Xu, C.~Ruan, E.~Korpeoglu, S.~Kumar, and K.~Achan.
\newblock Adversarial counterfactual learning and evaluation for recommender
  system.
\newblock \emph{Advances in Neural Information Processing Systems}, 33, 2020.

\bibitem[Xu et~al.(2021{\natexlab{a}})Xu, Ruan, Korpeoglu, Kumar, and
  Achan]{xu2021theoretical}
D.~Xu, C.~Ruan, E.~Korpeoglu, S.~Kumar, and K.~Achan.
\newblock Theoretical understandings of product embedding for e-commerce
  machine learning.
\newblock In \emph{Proceedings of the 14th ACM International Conference on Web
  Search and Data Mining}, pages 256--264, 2021{\natexlab{a}}.

\bibitem[Xu et~al.(2022)Xu, Ye, Ruan, and Yang]{xu2022towards}
D.~Xu, Y.~Ye, C.~Ruan, and B.~Yang.
\newblock Towards robust off-policy learning for runtime uncertainty.
\newblock \emph{arXiv preprint arXiv:2202.13337}, 2022.

\bibitem[Xu et~al.(2021{\natexlab{b}})Xu, Zhang, Li, Du, Kawarabayashi, and
  Jegelka]{xu2021how}
K.~Xu, M.~Zhang, J.~Li, S.~S. Du, K.~Kawarabayashi, and S.~Jegelka.
\newblock How neural networks extrapolate: From feedforward to graph neural
  networks.
\newblock In \emph{9th International Conference on Learning Representations,
  {ICLR} 2021, Virtual Event, Austria, May 3-7, 2021}. OpenReview.net,
  2021{\natexlab{b}}.

\bibitem[Yang et~al.(2018)Yang, Cui, Xuan, Wang, Belongie, and
  Estrin]{yang2018unbiased}
L.~Yang, Y.~Cui, Y.~Xuan, C.~Wang, S.~Belongie, and D.~Estrin.
\newblock Unbiased offline recommender evaluation for missing-not-at-random
  implicit feedback.
\newblock In \emph{Proceedings of the 12th ACM Conference on Recommender
  Systems}, pages 279--287, 2018.

\end{thebibliography}

\appendix
\setcounter{equation}{0}
\renewcommand{\theequation}{A.\arabic{equation}}
\setcounter{figure}{0}
\renewcommand{\thefigure}{A.\arabic{figure}}
\setcounter{table}{0}
\renewcommand{\thetable}{A.\arabic{table}}
\setcounter{claim}{0}
\renewcommand{\theclaim}{A.\arabic{claim}}
\setcounter{proposition}{0}
\renewcommand{\theproposition}{A.\arabic{proposition}}
\setcounter{lemma}{0}
\renewcommand{\thelemma}{A.\arabic{lemma}}
\setcounter{theorem}{0}
\renewcommand{\thetheorem}{A.\arabic{theorem}}

\section{Appendix on Background and Main Theoretical Results}
\label{sec:append-background}

We provide the supplementary background introduction, proofs for the claim, proposition and theorem, as well as the detailed experiment settings and complete numerical results in this part of the paper.

\subsection{Recommender with exploration policy, and learning from bandit feedback data}
\label{sec:append-bandit}
Using multi-armed bandits for exploration has attracted increasing attention from the recommendation literature in recent years. The major advantage is that the interactive process between users and recommender is explicitly characterized in an online fashion. Exploration also helps alleviate the exposure bias, popularity bias, position bias, etc, via more efficient exploitation of the feedback data. A critical difference between the bandit recommender and the deterministic recommender we referred to in our paper is that an item is exposed to the user with randomness, and the exploration policy controls the randomness in the exposure.  

Mathematically, we continue with $\mathcal{X}$ as the context (feature) space, $\mathcal{I}=\{1,\ldots,k\}$ be the action (item) space. The exploration-exploitation strategy can be viewed as a sequential game between the user and recommender. For notation simplicity, we introduce the stationary bandit setting where the exploration policy does not depend on the history. In each round, an item is sampled according to the exploration policy $\pi(u,i) := \pi(i \,|\, \xbf_u)$, which is a mapping between the item space and the probability space, and expose it to the user. Then the user, whose feature is $\xbf_u$, reveal the regret (or reward) action $i \in \mathcal{I}$, which corresponds to the $Y_{u,i}$ in our setting. In the context of delivering recommendation, clearly we have: $\Pbb(\text{reco}(u,i;f) =1) =  \pi(u,i)$ during exploration, and in the context of learning from the logged feedback data, we have $\Pbb(O_{ui}=1) = \pi(u,i)$ since it characterizes the probability of exposure.

Similar to the deterministic recommender, the challenge in bandit offline learning is largely due to its partial-observation nature as we do not observe the complete reward for all items. However, unlike the deterministic recommender, the random exposure mechanism allows efficient estimation of the \emph{counterfactual} regret, i.e. what the regret would have become if an alternative exploration policy was used. The IPW estimator fits particularly well to this purpose, since it provides the unbiased estimation of the counterfactual regret: 
\[
\Ebb_{P_{\pi}} \frac{f(u,i)}{\pi(u,i)} Y_{ui} = \Ebb_{P_f} Y_{ui}.
\]
\citet{swaminathan2015counterfactual} refers to the IPW-reweighted risk, which essentially replace the regret $Y_{ui}$ by the corresponding risk under $f$, as the \emph{counterfactual risk} for learning from the logged bandit feedback data. It exactly corresponds to setting the base measure $P$ with $P_{\text{IPW}}$, so it holds: $\Ebb_{P_f}R(f) = \sum_{(u,i)\in\Dcal}\frac{f(u,i)}{\pi(u,i)} \ell(f(u,i), y_{ui})$.

\subsection{Proof and Background for the IW and DA bounds in Proposition \ref{prop:IPW}}
\label{sec:append-IPW}
We first prove the upper bound for importance weighting under weak overlap .
\begin{proof}
Our proof leverages the classical "double sampling" technique from \cite{anthony2009neural}. We use $\vec{\zbf} = [\zbf_1,\ldots,\zbf_n]$ to denote the observed samples, and $\vec{\zbf}' = [\zbf'_1,\ldots,\zbf'_n]$ to denote an i.i.d copy of $\vec{\zbf}$. We first define by: 
\[
UB_1(f, \vec{\zbf}, t) = \frac{1}{n}\sum_{i=1}^n w_i \ell_f(\zbf_i) + \frac{3Mt}{n} + \sqrt{\frac{2d(P\|Q)t}{n}},
\]
and 
\[
UB_2(f, \vec{\zbf}, t) = \frac{1}{n}\sum_{i=1}^n w_i \ell_f(\zbf_i) + \frac{9Mt}{n} + \sqrt{\frac{18d(P\|Q)t}{n}}.
\]
Given $f\in\Fcal$, let $A := \Ebb_{Q}R(f) + \frac{6Mt}{n} + \sqrt{\frac{8d(P\|Q)t}{n}} $it holds:
\begin{equation*}
\begin{split}
    \Pbb\big( UB_2(f, \vec{\zbf}', t) \leq UB_1(f, \vec{\zbf}, t) \big) & \leq \Pbb\big(UB_2(f, \vec{\zbf}', t) \leq A \big) + \Pbb\big(UB_1(f, \vec{\zbf}, t) \geq A \big) \\
    &\leq 2\Pbb\Big( \big| \Ebb_{Q}R(f) - \frac{1}{n}\sum w_i\ell_f(\zbf_i) \big| \geq \frac{3Mt}{n} + \sqrt{\frac{2d(P\| Q)t}{n}} \Big) \\
    &\leq 4e^{-t},
\end{split}
\end{equation*}
where the last line follows from Lemma \ref{lemma:append-lemma-importance-weighting}. 
Next, we define $\Ccal(\epsilon, \ell\circ\Fcal, \ell_1(P_{n,w}))$ be the $\epsilon$-cover of $\ell\circ\Fcal$ with the empirical $\ell_1$ norm under $P_{n,w}$ such that for any $f\in\ell\circ\Fcal$, there exists $\tilde{f}$ in $\Ccal(\epsilon, \ell\circ\Fcal, \ell^n_1)$: $\big|\frac{1}{n}\sum w_i f(\zbf_i) -  \frac{1}{n}\sum w_i \tilde{f}(\zbf_i) \big| \leq \epsilon$, for $(\zbf_1,\ldots,\zbf_n)$ sampled i.i.d from $P$.
It then holds:
\begin{equation*}
\begin{split}
    &\Pbb\big(\exists f\in\Fcal: \Ebb_{Q}R(f) \geq UB_2(f, \vec{\zbf}, t) + \epsilon \big) \\
    &= \Ebb_{\vec{\zbf}} \sup_{f\in\Fcal} I[\Ebb_Q R(f) \geq UB_2(f, \vec{\zbf}, t) + \epsilon ] \\
    &\overset{(a)}{\leq} \Ebb_{\vec{\zbf}} \sup_{f\in\Fcal} I[\Ebb_Q R(f) \geq UB_2(f, \vec{\zbf}, t) + \epsilon ]\cdot 2\Ebb_{\vec{\zbf}'}I[UB_1(f, \vec{\zbf}', t) \geq \Ebb_{Q}R(f)] \\
    &\leq 2 \Ebb_{\vec{\zbf}, \vec{\zbf}'} \sup_{f\in\Fcal}I[UB_1(f, \vec{\zbf}', t) \geq UB_2(f, \vec{\zbf}, t) + \epsilon] \\
    &\overset{(b)}{\leq} 2 \Pbb_{\sigma(\vec{\zbf}, \vec{\zbf}')}\big(\exists \tilde{f} \in \Ccal(\epsilon, \ell\circ\Fcal, \ell_1(P_{n,w})):\, UB_1(f, \sigma(\vec{\zbf}, \vec{\zbf}'), t) \geq UB_2(f, \sigma(\vec{\zbf}, \vec{\zbf}'), t) \big) \\
    &\leq 8\Ncal(\epsilon, \ell\circ\Fcal,\ell_1(P_{n,w})) \cdot e^{-t},
\end{split}
\end{equation*}
where (a) follows from the fact that $\Ebb_{\vec{\zbf}'}I[UB_1(f, \vec{\zbf}', t) \geq \Ebb_{Q}R(f)]\geq \frac{1}{2}$ as suggested by Lemma \ref{lemma:append-lemma-importance-weighting}, and in step (b) we let $\sigma(\vec{\zbf}, \vec{\zbf}')_i$ takes the value of $\zbf_i,\zbf'_i$ with equal probability, and the inequality follows from the definition of the $\epsilon$ cover. Notice that $\Ncal(\epsilon, \ell\circ\Fcal,\ell_1(P_{n,w})) \leq \Ncal(\epsilon/M, \ell\circ\Fcal, \ell^n_{\infty})$. We take $\epsilon = \frac{1}{n}$, which solves for $t = c\log\frac{1}{\delta} + \log\Ncal(\epsilon/M, \ell\circ\Fcal, \ell^n_{\infty})$ for some constant $c$ and leads to the desired result after arranging terms.

The corresponding lower bound follows directly from Lemma \ref{lemma:append-lower-bound}, where we make a mild assumption that there exists a hypothesis $f\in\Fcal$ such that $\Ebb_{P_n}R(f) = 1$, so it holds that $\Ebb_{P_{n,w}}R(f) = d(P\|Q)-1$ suggested by Lemma 1 of \cite{cortes2010learning}. 
\end{proof}

We then show the PAC-Bayesian result that stated for the partial-overlap case. Our proof relies on the existing results in the PAC-Bayesian literature as in \citet{germain2015risk}.
\begin{proof}
We use $\ell$ to denote the $0-1$ loss for brevity. The PAC-Bayesian approach concerns with the Gibbs risk of $\Ebb_{f\sim\tau}\Ebb_{(u,i)\sim Q}\ell(f(u,i),y_{ui})$, where $\tau$ is some posterior distribution on the hypothesis space $\Fcal$. Clearly, when there are mismatch in the supports of $\Scal_P$ and $\Scal_Q$, the Gibbs risk for the mismatched part is given by $\Ebb_{f\sim\tau}\Ebb_{(u,i)\in\Scal_Q/\Scal_P}R(f)$. We now focus on the overlapped part. Using the classical decomposition result from \citet{germain2015risk}, which states that:
\[
\Ebb_{f\sim\tau}\Ebb_{Q}R(f) = \frac{1}{2}d_{Q}(\tau) + e_{Q}(\tau),
\]
where $d_{\Dcal}(\tau) = \Ebb_{f, f'\sim\tau} \Ebb_{Q} \ell(f(u,i), f'(u,i))$ is a discrepancy term, and $e_{Q}(\tau) = 2\Ebb_{f, f'\sim\tau} \Ebb_{Q} \ell(f(u,i), y_{ui}) \ell(f'(u,i), y_{ui})$ is the Gibbs risk for the joint hypothesis. Therefore, we have:
\begin{equation*}
\begin{split}
    \Ebb_{f\sim\tau}\Ebb_{Q}R(f) - \Ebb_{f\sim\tau}\Ebb_{P_w}R(f) &= \big(\frac{1}{2}d_{Q}(\tau) - \frac{1}{2}d_{P_w}(\tau) \big) + \big(e_{Q}(\tau) - e_{P}(\tau) \big) \\
    & \leq \frac{1}{2}\underbrace{\big|d_{Q}(\tau) - d_{P_w}(\tau) \big| + \big|e_{Q}(\tau) - e_{P_w}(\tau) \big|}_{D(\tau)} ,
\end{split}
\end{equation*}
where $D(\tau)$ essentially characterizes the disagreement of $Q$ and $P_w$ in terms of the posterior $\tau$ on $\Fcal$. Therefore, for any posterior distribution $\tau$, we have: $\Ebb_{f\sim\tau}\Ebb_{Q}R(f) \leq \Ebb_{f\sim\tau}\Ebb_{P_w}R(f) + D(\tau)$, so the next step is to bound the Gibbs risk of $\Ebb_{f\sim\tau}\Ebb_{P_w}R(f)$. Our strategy follows that of \citet{germain2015risk}, that we employ the convex function $\Delta:[0,M]\times[0,M]\to \Rbb$. It then holds that:
\begin{equation*}
\begin{split}
    & n\cdot\Delta\Big(\Ebb_{f\sim\tau}\Ebb_{P_{n,w}} R(f), \Ebb_{f\sim\tau}\Ebb_{P_{w}} R(f) \Big)\\
    & \leq n\Ebb_{f\sim\tau}\Delta\Big(\Ebb_{P_{n,w}} R(f), \Ebb_{P_{w}} R(f) \Big) \\
    & \leq d_{KL}(\tau\|\pi) + \ln \Ebb_{f\sim\pi}\exp\Big(n\Delta\Big(\Ebb_{P_{n,w}} R(f), \Ebb_{P_{w}} R(f) \Big)\Big) \quad \text{by Lemma \ref{lemma:append-change-of-measure}} \\
    & \overset{\text{w.p. }1-\delta}{\leq} d_{KL}(\tau\|\pi) + \ln\frac{1}{\delta}\Ebb_{f\sim\pi}\Ebb_{\zbf_1,\ldots,\zbf_n \sim P} \exp\Big( n\Delta\Big(\frac{1}{n}\sum_k w(\zbf_k) \ell_f(\zbf_k), \Ebb_{\zbf\sim P}w(\zbf)\ell_{f}(\zbf) \Big)\Big),
\end{split}
\end{equation*}
where the last line comes from the Markov inequality, and $w(\zbf_k)$ is a shorthand for $w(u,i)$ for $\zbf_k = (u,i,y_{ui})$. We then focus on bounding: $\Ebb_{f\sim\pi}\Ebb_{\zbf_k \sim P} \exp\Big( n\Delta\Big(\frac{1}{n}\sum_k w(\zbf_k) \ell_f(\zbf_k), \Ebb_{\zbf\sim P}w(\zbf)\ell_{f}(\zbf) \Big)\Big)$. Since $\ell_f$ is the $0-1$ loss, the term inside $\Delta$ (setting aside $w(\zbf_i)$) are essentially binomial variable of $n$ trials with a success rate of $\Ebb_{P} R(f)$. Since $\Delta$ is convex in both arguments and $w(u,i)$ is given by $Q(u,i)/P(u,i)$, it is easy to verify that: 
\begin{equation*}
\begin{split}
    & \Ebb_{\zbf_k \sim P}\exp\Big(n\Delta\Big(\frac{1}{n}\sum_k w(\zbf_k) \ell_f(\zbf_k), \Ebb_{\zbf\sim P}w(\zbf)\ell_{f}(\zbf) \Big)\Big) \\ 
    & \quad \leq d(P,Q)\cdot\Ebb_{\zbf_k \sim P}\exp\Big(n\Delta\Big(\frac{1}{n}\sum_k \ell_f(\zbf_k), \Ebb_{P}R(f) \Big)\Big),
\end{split}
\end{equation*}
where $d(P,Q) = \int_{\Scal_Q} (dP/dQ)dP$ as we mentioned in the statement. It is then obvious that:
\[
\Ebb_{\zbf_k \sim P}\exp\Big(n\Delta\Big(\frac{1}{n}\sum_k \ell_f(\zbf_k), \Ebb_{P}R(f) \Big)\Big) \leq \sup_{r\in[0,1]}\Big\{\sum_{k=0}^n \text{Bin}(k;n;r) \exp\big(n\Delta(\frac{k}{n}, r) \big)\Big\},
\]
where we use $r\in [0,1]$ to replace the role of $\Ebb_{P}R(f)$, since the prior $\pi$ is arbitrary, and $\text{Bin}(k;n;r)$ is probability of having $k$ success in $n$ rounds of Bernoulli trail where the success rate is $r$.  It is shown by \citet{germain2009pac} that the RHS is $\Ocal(1)$ by choose a particular $\Delta$, so we obtain the desired result by putting together the above arguments.
\end{proof}

Now we briefly introduce the skip-gram negative sampling algorithm and prove its associated rersults in Proposition \ref{prop:IPW}.
It have been shown by the recent work of \cite{xu2021theoretical} that the user/item embeddings optimized via the skip-gram negative sampling (SGNS) algorithm are the sufficient dimension reduction of a specific co-occurrence statistics. In particular, the SGNS objective function is given by:

\begin{equation}
\label{eqn:append-SGNS-setup}
\begin{split}
    & \Pbb(O_{ui}=1 | u,i) = \sigma\big(\Phi(i)^{\intercal}\Phi(u)\big) =  \frac{1}{1+\exp\big(-\Phi(i)^{\intercal}\Phi(u)\big)}, \\
    & \Pbb(O_{ui}=0 | u,i) = \sigma\big(-\Phi(i)^{\intercal}\Phi(u)\big) = \frac{1}{1+\exp\big(\Phi(i)^{\intercal}\Phi(u)\big)}, \\
    \ell_{u,i} &= -\log \Pbb(O_{ui}=1 | u,i) + k\cdot \Ebb_{k\sim \text{Neg}(\Ical)} \big[\log  \Pbb(O_{ui}=0 | u,i) \big],
\end{split}
\end{equation}
where $\Phi(u)$, $\Phi(i)$ are the embeddings for the user $u$ and item $i$, $k$ is the number of negative samples, and $\text{Neg}(\Ical)$ is the uniform negative-sampling distribution on the item set. The risk objective is thus give by: $\Ebb_{P}R(\Phi) = \sum_{(u,i)\in\Dcal}\ell_{u,i}$. The SGNS algorithm has been widely applied to train user and item embeddings for various recommendation tasks. 

The \emph{sufficient dimension reduction} (SDR), on the other hand, is inspired by the sufficient statistics of exponential family. Broadly speaking, when predicting $Y$ with $X$, if all the information in $X$ about $Y$ can be compressed into a dimension reduction $f(X)$, i.e. $Y \perp X \, |\, f(X)$ or $p(Y|X)$ and $p\big(Y|f(X)\big)$ are the same, then $f(X)$ is a sufficient dimension reduction (SDR). According to \cite{globerson2003sufficient}, finding the SDR $f(X)$ is equivalent to solving a optimization problem:
\begin{equation}
\label{eqn:append-sdr-def}
    \min_{q(y, f(x))} d_{KL}\big( p(Y,X) \, \| \, q(Y,f(X))   \big),
\end{equation}
where $d_{KL}$ is the Kullback-Leibler divergence.

\begin{proof}
Let $R_{ui} := \frac{N_{ui}}{N_{u}N_i}$ be the relatedness measure defined in \citet{xu2021theoretical}, where $N_{ui}$ is the co-count of $(u,i)$, and $N_u$, $N_i$ are the individual counts of $u$ and $i$. The co-occurrence statistics is given by $C_{ui} := O_{ui} \,|\, R_{ui}$, which is exposure $O_{ui}$ condition on the relatedness measure. It is shown in the Claim 1 of \citet{xu2021theoretical} that:
\begin{equation}
\label{eqn:append-SDR}
    \min_{\Phi(u),\Phi(i)\in\Rbb^d}\Ebb_{P}R(\Phi) = \min_{\Phi(u),\Phi(i)\in\Rbb^d} \frac{1}{|\Dcal|}\sum_{(u,i)\in\Dcal} d_{KL}\big(p(O_{ui} \,|\, \Phi(u), \Phi(i)) \,\big\|\, p(O_{ui} \, |\, R_{ui} ) \big),
\end{equation}
so $\Phi$ gives the SDR of $C_{ui}$ in $\Rbb^d$, according to the definition of (\ref{eqn:append-sdr-def}). Note that the terms in the RHS of (\ref{eqn:append-SDR}) are all binary KL divergence, and it easy to verify that binary KL divergence satisfies: $d_{KL}(a\| c) + d_{KL}(c\| b) \geq d_{KL}(a\| b) + (c-a)\log\frac{c(1-b)}{(1-c)b}$. Recall that $C_{ui}(P)$ and $C_{ui}(Q)$ are the co-occurrence statistics under $P$ and $Q$, respectively. Therefore, if we let $\Phi^{(P)}$ and $\Phi^{(Q)}$ be the solution to (\ref{eqn:append-SDR}) under $P$ and $Q$, by assuming $\Scal_P = \Scal_Q$, it holds that:
\begin{equation*}
\begin{split}
    \inf_{\Phi}\big(\Ebb_{P}R(\Phi) + \Ebb_{Q}R(\Phi)\big) & \geq \inf_{\Phi}\Ebb_{P}R(\Phi) + \inf_{\Phi}\Ebb_{P}R(\Phi) \\
    & \geq \frac{1}{|\Scal_P|}\sum_{(u,i)\in\Scal_P}d_{KL}\big(p(O_{ui} \,|\, \Phi^{(P)}(u), \Phi^{(P)}(i)) \,\big\|\, C_{ui}(P) ) \big) + \\ 
    &\quad d_{KL}\big(p(O_{ui} \,|\, \Phi^{(Q)}(u), \Phi^{(Q)}(i)) \,\big\|\, C_{ui}(Q) ) \\
    & =  \frac{1}{|\Scal_P|}\sum_{(u,i)\in\Scal_P} d_{KL}(C_{ui}(P) \,\|\, C_{ui}(Q)) + C,
\end{split}
\end{equation*}
and $C = \inf_{\Phi}\big(\Ebb_{P}R(\Phi) + \Ebb_{Q}R(\Phi)\big)$ when $P=Q$ a.s.
\end{proof}

\subsection{Background on the Wasserstein Distance}
\label{sec:append-distance-background}
Perhaps the two most common discrepancy measures on probabilities are $\phi$-divergence and integral probability metric (IPM). Obviously, $\phi$-divergence does not satisfy our need of characterizing domain transportation because it requires $\Scal_P \subseteq \Scal_Q$ (weak overlap) to hold in practice. IPM, on the other hand, provides the flexibility and is defined as 
\[d_{\Gcal}(P,Q) = \sup_{g\in\Gcal}\big|\int_{} g dP - \int_{} g dQ \big|,\]
where $\Gcal$ is a class of real-valued bounded measurable function. In this direction, we find the Wasserstein distance from the \emph{optimal transportation} (OT) theory \citet{rachev1985monge} a good fit to our goal. 
Although the maximum mean discrepancy distance is also valid, its computation and related theoretical results rely on the specification of a reproducing kernel Hilbert space, whose usage is less common in the IR literature.

The origin of the Wasserstein distance is to give the optimal cost of transporting mass from one marginal distribution to another:
\[
d_{w,c}(P,Q) = \argmin_{\pi \in \Pi(P,Q)}\int_{\Dcal \times \Dcal} c(\zbf, \zbf') d\pi(P,Q),
\]
where $\Pi(P,Q)$ is the set of all joint probability measure whose marginals are $P$ and $Q$, and $c(\cdot,\cdot)$ is the cost function. Clearly, $d_{w,c}(P,Q)$ relies on the geometry of $P$ and $Q$ if the cost function is well-chosen. For recommender system, $\zbf$ is often given by $(\xbf_u, \xbf_i)$ or $(\Phi(u), \Phi(i))$, so it is natural to use the $\|\cdot\|_2$ distance: $c(\zbf,\zbf') = \|\zbf - \zbf'\|_2$. We denote by $d_W$ for this particular cost function, which is also referred to as the $1$-Wasserstein distance. The classical Kantorovich-Rubinstein theorem \citet{dudley2018real} connects the $d_W$ to IPM by proving the duality result of:
\begin{equation}
\label{eqn:wasserstein}
d_W = \sup_{g: \{\|g\|_L\leq 1\}} \int_{\Dcal} g dP - \int_{\Dcal} g dQ.
\end{equation}
In the above formulation, the constant of $1$ under $\sup$ is less critical since we can always rescale $g$. The class of Lipschitz-bounded function covers a wide selection of models including neural networks as discussed in \citet{fazlyab2019efficient}.  

\subsection{Proof for Theorem 1}
\label{sec:append-proof-thm1}

We first state a proposition for the following proofs.

\begin{proposition}
\label{prop:append-duality}
Suppose $\ell_f$ is semi-continuous for all $f\in\Fcal$, $P$ is any distribution supported on $\Zcal$, and $c(\zbf, \zbf') = \|\zbf - \zbf'\|_2$. It holds:
\[
\min_{Q: d_W(Q,P)\leq\rho} \Ebb_{Q}R(f) = \sup_{\lambda\geq 0}\big\{-\lambda\rho + \Ebb_{\zbf\sim P}\big[\inf_{\tilde{\zbf}\in\Zcal} \{\ell_f(\tilde{\zbf}) + \lambda \cdot c(\zbf, \tilde{\zbf}) \} \big] \big\}.
\]
\end{proposition}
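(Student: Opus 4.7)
The plan is to derive the stated equality as the strong dual of the Wasserstein-constrained minimization, following the by-now-standard strategy for distributionally robust optimization (in the spirit of Blanchet--Murthy and Sinha et al.). At a high level, I would introduce a Lagrange multiplier for the transport constraint, rewrite the inner problem via the coupling formulation of $d_W$, apply a measurable-selection interchange of infimum and integral, and finally invoke a strong-duality theorem to justify swapping the outer $\inf$ and $\sup$.

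Concretely, the first step is to replace $d_W(Q,P)$ by its primal transport definition $d_W(Q,P) = \inf_{\pi\in\Pi(Q,P)}\int c(\zbf,\tilde{\zbf})\,d\pi(\zbf,\tilde{\zbf})$, and jointly optimize over $Q$ and $\pi$. Since $Q$ is determined as the first marginal of $\pi$, the constraint set becomes the collection of couplings $\pi$ whose second marginal equals $P$. Writing the Lagrangian for the constraint $\int c\,d\pi\leq \rho$ gives
\begin{equation*}
\min_{Q: d_W(Q,P)\leq\rho}\Ebb_Q R(f) \;=\; \sup_{\lambda\geq 0}\Big\{-\lambda\rho + \inf_{\pi}\int\big[\ell_f(\tilde{\zbf}) + \lambda\, c(\zbf,\tilde{\zbf})\big]\,d\pi(\zbf,\tilde{\zbf})\Big\},
\end{equation*}
provided strong duality holds. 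The second step is to disintegrate $\pi(d\tilde{\zbf},d\zbf) = P(d\zbf)\,K(d\tilde{\zbf}\mid \zbf)$ for a regular conditional kernel $K$, so that the inner infimum becomes $\int_{\Zcal}\big[\int_{\Zcal}\{\ell_f(\tilde{\zbf}) + \lambda c(\zbf,\tilde{\zbf})\}\,K(d\tilde{\zbf}\mid\zbf)\big]\,P(d\zbf)$. A standard measurable-selection argument (using semi-continuity of $\ell_f$ and continuity of $c(\zbf,\cdot) = \|\zbf-\cdot\|_2$ on the Polish space $\Zcal\subseteq\Rbb^{d_Z}$) then permits pushing the infimum inside the expectation, yielding $\Ebb_{\zbf\sim P}\big[\inf_{\tilde{\zbf}\in\Zcal}\{\ell_f(\tilde{\zbf}) + \lambda c(\zbf,\tilde{\zbf})\}\big]$, which is precisely the RHS integrand of the claim.

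The third and most delicate step is establishing strong duality, i.e., justifying the $\sup\inf = \inf\sup$ exchange. A direct Sion-type minimax argument does not apply cleanly because $\lambda$ ranges over the unbounded set $[0,\infty)$ and the primal feasible set of couplings is not compact in any useful topology. The clean route is to invoke the Blanchet--Murthy strong-duality theorem for Wasserstein DRO, whose hypotheses -- a Polish sample space, a nonnegative lower semi-continuous transport cost, and a lower semi-continuous loss -- are all satisfied here (the second via $c(\zbf,\tilde{\zbf}) = \|\zbf-\tilde{\zbf}\|_2$ and the third by assumption on $\ell_f$). Equivalently, one can run a Fenchel-type conjugate duality argument in the space of signed measures with total-variation topology, noting that the primal is a convex program in $Q$ with a convex transport constraint, and that the qualification condition is satisfied because the feasible set has nonempty interior as soon as $\rho>0$ (the delta mass $Q=P$ gives $d_W(Q,P)=0<\rho$).

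The hard part will be the strong-duality step together with verifying the measurable-selection interchange under only semi-continuity; no convexity of $\ell_f$ is assumed, so the minimizing $\tilde{\zbf}$ in the inner problem need not be unique, and one must appeal to a Kuratowski--Ryll-Nardzewski-type selection result to guarantee the infimum is attained by a measurable map. Once these two technical pieces are in place, chaining the identities produces the claimed duality. Finally, a short remark will note that the outer $\sup$ over $\lambda\geq 0$ is in fact attained whenever the dual value is finite, since the map $\lambda\mapsto -\lambda\rho + \Ebb_P[\inf_{\tilde{\zbf}}\{\ell_f(\tilde{\zbf}) + \lambda c(\zbf,\tilde{\zbf})\}]$ is concave and upper semi-continuous in $\lambda$ and tends to $-\infty$ as $\lambda\to\infty$ whenever $\rho>0$.
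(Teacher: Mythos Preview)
Your proposal is correct. Both your argument and the paper's ultimately rest on the Wasserstein DRO strong-duality theorem, but the paper takes a much shorter path: it simply cites the \emph{maximization} version of the duality from \citet{gao2016distributionally},
\[
\max_{Q:\,d_W(Q,P)\leq\rho}\Ebb_Q R(f)=\inf_{\lambda\geq 0}\Big\{\lambda\rho+\Ebb_{\zbf\sim P}\big[\sup_{\tilde{\zbf}\in\Zcal}\{\ell_f(\tilde{\zbf})-\lambda\, c(\zbf,\tilde{\zbf})\}\big]\Big\},
\]
and then obtains the minimization statement in four lines by applying this to $-\ell_f$ and flipping signs. You instead re-derive the duality directly for the minimization problem via the Lagrangian over couplings, disintegration, a measurable-selection interchange, and a strong-duality/Slater argument (or, equivalently, an appeal to Blanchet--Murthy). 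Your route is more self-contained and makes explicit where semi-continuity of $\ell_f$ and the strict feasibility $Q=P$ for $\rho>0$ are used; the paper's sign-flip trick is quicker but outsources all of the analytic content to the cited reference.
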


\begin{proof}
It is shown by \citet{gao2016distributionally} that:
\[
\max_{Q: d_W(Q,P)\leq\rho} \Ebb_{Q}R(f) = \inf_{\lambda\geq 0}\big\{\lambda\rho + \Ebb_{\zbf\sim P}\big[\sup_{\tilde{\zbf}\in\Zcal} \{\ell_f(\tilde{\zbf}) - \lambda \cdot c(\zbf, \tilde{\zbf}) \} \big] \big\}.
\]
Following the above result, we immediately have:
\begin{equation*}
\begin{split}
    \min_{Q: d_W(Q,P)\leq\rho} \Ebb_{Q}R(f) &= -\max_{Q: d_W(Q,P)\leq\rho} -\Ebb_{Q}R(f) \\
    &= - \inf_{\lambda\geq 0}\big\{\lambda\rho + \Ebb_{\zbf\sim P}\big[\sup_{\tilde{\zbf}\in\Zcal} \{-\ell_f(\tilde{\zbf}) - \lambda \cdot c(\zbf, \tilde{\zbf}) \} \big] \big\} \\
    &= \sup_{\lambda\geq 0}\big\{-\lambda\rho - \Ebb_{\zbf\sim P}\big[\sup_{\tilde{\zbf}\in\Zcal} \{-\ell_f(\tilde{\zbf}) - \lambda \cdot c(\zbf, \tilde{\zbf}) \} \big] \big\} \\
    &= \sup_{\lambda\geq 0}\big\{-\lambda\rho + \Ebb_{\zbf\sim P}\big[\inf_{\tilde{\zbf}\in\Zcal} \{\ell_f(\tilde{\zbf}) + \lambda \cdot c(\zbf, \tilde{\zbf}) \} \big] \big\}.
\end{split}
\end{equation*}
\end{proof}

Now we prove the result in Theorem 1.

\begin{proof}
We first define $\Hcal_1 = \{w\in\Hcal: d_W(P_w, P_f)\leq\rho\}$, and $\Hcal_1 = \{w\in\Hcal: d_W(P_w, P_f)\leq\rho + \frac{c_w}{n^{1/(d+1)}} + 8M\sqrt{\frac{\log\frac{4}{\delta}}{n}}\}$, where $c_w$ is the constant stated in Lemma \ref{lemma:append-wasserstein-lemma}. 

We then omit the $w\in\Hcal$ in the following proofs for brevity. By Lemma \ref{lemma:append-wasserstein-lemma}, for any $\delta>0$, it holds with probability at least $1-\delta$ that $\Hcal_1 \subseteq \Hcal_2$, which suggests that for any given $f\in\Fcal$:
\[
\min_{w: D(w, f)\leq \rho} \Ebb_{P_w}R(f) \geq \min_{w: D_n(w,f)\leq\delta + B} \Ebb_{P_w}R(f)
\]
with probability at least $1-\delta$, where $B=\frac{c_w}{n^{1/(d+1)}} + 8M\sqrt{\frac{\log\frac{4}{\delta}}{n}}$.

Now we define $\tilde{w} = \argmin_{w: D_n(w,f)\leq\delta + B} \Ebb_{P_w} R(f)$. Then using the result in Lemma \ref{lemma:append-bernstein}, for any given $f\in\Fcal $, it holds with probability at least $1-\frac{\delta}{4}$ that:
\begin{equation*}
\begin{split}
    \min_{w: D_n(w,f)\leq\delta + B} \Ebb_{P_w}R(f) &= \Ebb_{P_{\tilde{w}}}R(f) \\
    &\geq \Ebb_{P_{n,\tilde{w}}}R(f) - \frac{7M\log\frac{1}{\delta}}{3n} - \sqrt{\frac{2R_{\tilde{w}}(f)^2\log\frac{1}{\delta}}{n}} \\
    &\geq \min_{w: D_n(w,f)\leq\delta + B} \Ebb_{P_{n,w}}R(f) - \frac{7M\log\frac{1}{\delta}}{3n} - \sqrt{\frac{2R_{\tilde{w}}(f)^2\log\frac{1}{\delta}}{n}},
\end{split}
\end{equation*}
where the last line follows from the definition of $\tilde{w}$.

Using Proposition \ref{prop:append-duality}, we have:
\begin{equation*}
\begin{split}
    \min_{w: D_n(w,f)\leq\delta + B} \Ebb_{P_{m,w}}R(f) &= \sup_{\lambda\geq 0}\big\{-\lambda(\rho+B) + \Ebb_{\zbf\sim P_{n,w}}\big[\inf_{\tilde{\zbf}\in\Zcal} \{\ell_f(\tilde{\zbf}) + \lambda \cdot c(\zbf, \tilde{\zbf}) \} \big] \big\} \\
    &\geq \sup_{\lambda\geq 0}\big\{-\lambda\rho + \Ebb_{\zbf\sim P_{n,w}}\big[\inf_{\tilde{\zbf}\in\Zcal} \{\ell_f(\tilde{\zbf}) + \lambda \cdot c(\zbf, \tilde{\zbf}) \} \big] \big\} - L\cdot B \\
    &= \min_{w: D_n(w,f)\leq\delta } \Ebb_{P_{n,w}}R(f) - L\cdot B,
\end{split}
\end{equation*}
where the second line uses the result in Lemma \ref{lemma:append-lambda-L} which bounds the $\lambda^*$ that achieves the supremum with the Lipscitize constant $L$. Combining the above results, it holds with probability at least $1-\frac{\delta}{2}$ that:
\[
\min_{w: D(w, f)\leq \rho} \Ebb_{P_w}R(f) \geq \min_{w: D_n(w,f)\leq\delta } \Ebb_{P_{n,w}}R(f) -  \frac{7M\log\frac{1}{\delta}}{3n} - \sqrt{\frac{2R_{\tilde{w}}(f)^2\log\frac{1}{\delta}}{n}} - L\cdot B.
\]
The upper bound can be shown in the same manner, by using $\bar{w} = \argmin_{w: D_n(w,f)\leq\delta - B} \Ebb_{P_w} R(f)$ to replace $\tilde{w}$.
Combining the upper and lower bounds, we obtain the result stated in Theorem 1.

\end{proof}

\subsection{Proof for Theorem 2}
\label{sec:append-proof-thm2}

Suppose $\ell$ is some bounded loss function, then $M_{\Fcal}$ stated in Theorem 2 must be finite for $\lambda\in[0,\frac{L}{\rho}]$, otherwise it can be shown from Proposition \ref{prop:append-duality} and Lemma \ref{lemma:append-lambda-L} that:
\[
\min_{d_W(Q,P)\leq\rho} \Ebb_Q R(f) = \sup_{\lambda\geq 0} \big\{ -\lambda\rho + \Ebb_{\zbf\sim P}[\inf_{\tilde{\zbf}\in\Zcal}\ell_f(\tilde{\zbf}) + \lambda \cdot c(\zbf, \tilde{\zbf})]\big\}
\]
is unbounded for finite $\rho$, which contradicts that $\ell$ is bounded. 

We first prove the generalization error bound.

\begin{proof}
We use the shorthand $\phi_{f,\lambda}(\zbf) := \Ebb_{\zbf\sim P_f}[\inf_{\tilde{\zbf}}\ell_f(\tilde{\zbf}) + \lambda c(\tilde{\zbf},\zbf)]$. Since we do not make any parametric assumption on the weighting function $w$ here (as the consequence of using $w\in\Hcal$ has been stated in Theorem 1), we have:
\begin{equation*}
\begin{split}
    \min_{d_W(P_w, P_f)\leq\rho}\Ebb_Q R(f) &= \sup_{\lambda\geq 0}\big\{-\lambda\rho + \Ebb_{\zbf\sim P_f} \phi_{f,\lambda}(\zbf) \big\} \quad \text{By Proposition A.1} \\
    &= \sup_{\lambda\geq 0} \big\{-\lambda\rho + \Ebb_{\zbf\sim P_{n,f}} \phi_{f,\lambda}(\zbf) + \Ebb_{\zbf\sim P_f} \phi_{f,\lambda}(\zbf) - \Ebb_{\zbf\sim P_{n,f}} \phi_{f,\lambda}(\zbf) \big\} \\
    &\leq \sup_{\lambda\geq 0} \big\{ -\lambda\rho + \Ebb_{\zbf\sim P_{n,f}} + \underbrace{\sup_{f\in\Fcal} \big\{ 
\Ebb_{P_f}\phi_{f,\lambda}(\zbf) - \Ebb_{P_{n,f}}\phi_{f,\lambda}(\zbf)\big\}}_{\psi_{\lambda}}\big\}
\end{split}
\end{equation*}
Using the standard symmetrization argument, we can upper-bound $\Ebb\psi_{\lambda} = \frac{1}{n}\sup_{f\in\Fcal} \big\{ \Ebb_{P_f}\phi_{f,\lambda}(\zbf) - \phi_{f,\lambda}(\zbf_i)\big\}$ with the empirical Rademarcher complexity:
\[
\Ebb\psi_{\lambda} \leq 2\Ebb\sup_{f\in\Fcal}\frac{1}{n}\sum_{i=1}^n\epsilon_i\phi_{f,\lambda}(\zbf_i),
\]
where the expectation on LHS is taken with respect to the random variables of $\zbf_1,\ldots,\zbf_n$. Clearly, $\psi(\lambda)$ is bounded by $M_{\Fcal}$ that we illustrated in the beginning, so from McDiarmid's inequality we have:
\begin{equation}
    \Pbb\big(\psi(\lambda) \geq \Ebb \psi(\lambda) + \frac{Mt}{\sqrt{n}} \big) \leq \exp(-2t^2).
\end{equation}
Also, note that $\sqrt{\frac{1}{n}}\epsilon_i \phi_{f,\lambda}(\zbf_i)$ is zero-mean and 1-subgaussian with respect to the $\|\cdot\|_{\infty}$ norm on $\Fcal$, so for any $f_1,f_2\in\Fcal$, it holds that:
\begin{equation}
\begin{split}
    & \Ebb\big[\exp\big(\frac{t}{n}\sum_{i=1}^n \epsilon_i(\phi_{f_1,\lambda}(\zbf_i) - \phi_{f_2,\lambda}(\zbf_i)) \big) \big] \\
    & \leq \Ebb \big[\exp\big(\frac{t}{\sqrt{n}}\epsilon \cdot \big( \inf_{\bar{\zbf}\in\Zcal}\{f_1(\bar{\zbf}) + \lambda c(\zbf_1,\bar{\zbf})\} + \sup_{\tilde{\zbf}\in\Zcal}\{-f_2(\tilde{\zbf}) - \lambda c(\zbf_1,\tilde{\zbf})\} \big)\big) \big]^n \\
    & \leq \Ebb\big[\exp\big(\frac{t}{\sqrt{n}}\epsilon \sup_{\zbf\in\Zcal}\{f_1(\zbf) - f_2(\zbf) \} \big) \big]^n \\
    &\leq \exp \Big(\frac{t^2\|f_1 - f_2\|_{\infty}}{2} \Big).
\end{split}
\end{equation}
Using the entropy integral bound on the empirical Rademarcher complexity term, we have:
\[
\Ebb \psi(\lambda) \leq \frac{c}{\sqrt{n}} \int_{0}^{\infty} \sqrt{\log\Ncal(\epsilon, \Fcal, \ell_{\infty}(P_n))} d\epsilon,
\]
since $\ell$ is Lipscitz (and we absorb the Lipscitz constant into $c$). Combining the above results, for any given $\lambda\geq 0$, it holds with probability at least $1-\delta$ that:
\begin{equation*}
\begin{split}
\Pbb\Big(\exists f\in\Fcal: \min_{d_W(P_w, P_f)\leq\rho}\Ebb_Q R(f) \geq -\lambda\rho & + \Ebb_{\zbf\sim P_{n,f}}\big[\inf_{\tilde{\zbf}\in\Zcal}\ell_f(\tilde{\zbf}) + \lambda c(\zbf, \tilde{\zbf}) \big] + \\
& \frac{c \int_{0}^{\infty} \sqrt{\log\Ncal(\epsilon, \Fcal, \ell_{\infty}^n)} d\epsilon}{n} + \frac{M_{\Fcal}t}{n}\Big) \leq 2\exp(-2t^2).
\end{split}
\end{equation*}
Then we apply the same union-bound argument from Theorem 1 \citet{koltchinskii2002empirical} to extend the above result to any $\lambda>0$. In particular, we pick the sequence of $\lambda_k=k$ and $t_k = t+\sqrt{\log k}$, $k=1,2,\ldots$. By simple calculations and the sum of infinite series arguments, we reach the conclusino that for any $f\in\Fcal$, it holds with probability at least $1-\delta$ that:
\begin{equation*}
\begin{split}
\min_{d_W(P_w, P_f)\leq\rho}\Ebb_Q R(f) \geq \min_{\lambda\geq 0} \Big\{ & -\lambda\rho + \Ebb_{\zbf\sim P_{n,f}}\big[\inf_{\tilde{\zbf}\in\Zcal}\ell_f(\tilde{\zbf}) + \lambda c(\zbf, \tilde{\zbf}) \big] + \\
& \frac{c \int_{0}^{\infty} \sqrt{\log\Ncal(\epsilon, \Fcal, \ell_{\infty}^n)} d\epsilon}{n} + \frac{M_{\Fcal}\sqrt{\log\frac{1}{2\delta}}}{n} + \sqrt{\frac{\log(\lambda+1)}{n}} \Big\}
\end{split}
\end{equation*}
\end{proof}

Now we prove the excessive risk bound.
\begin{proof}
We start by defining: $\hat{f} = \arg\min_{f\in\Fcal}\min_{D_n(w,f)\leq\rho} \Ebb_{P_{n,w}}R(f)$, and $f^* = \arg\min_{f\in\Fcal}\min_{D(w,f)\leq\rho} \Ebb_{P_{n,w}}R(f)$. Using the definition of $\hat{f}$, we have:
\begin{equation*}
\begin{split}
& \min_{D(w,\hat{f})\leq\rho}\Ebb_{P_w}R(\hat{f}) - \min_{D(w,f^*)\leq\rho}\Ebb_{P_w} R(f^*) \\
& \leq \min_{D(w,\hat{f})\leq\rho}\Ebb_{P_w}R(\hat{f}) - \min_{D_n(w,\hat{f})\leq\rho}\Ebb_{P_w}R(\hat{f}) + \min_{D_n(w,f^*)\leq\rho}\Ebb_{P_w} R(f^*) - \min_{D(w,f^*)\leq\rho}\Ebb_{P_w} R(f^*).
\end{split}
\end{equation*}
Let $\hat{\lambda} = \argmax_{\lambda\geq 0}\big\{-\lambda\rho + \Ebb_{\zbf\sim P_{\hat{f}}}[\inf_{\tilde{\zbf}\in\Zcal}\ell_{\hat{f}}(\tilde{\zbf}) + \lambda\cdot c(\zbf, \tilde{\zbf})] \big\}$, and from Lemma \ref{lemma:append-lambda-L} we know that $\hat{\lambda}\leq L$. It holds that:
\begin{equation*}
\begin{split}
    & \min_{D(w,\hat{f})\leq\rho}\Ebb_{P_w}R(\hat{f}) - \min_{D_n(w,\hat{f})\leq\rho}\Ebb_{P_w}R(\hat{f}) \\
    & = \Big(-\hat{\lambda}\rho + \Ebb_{\zbf\sim P_f}[\inf_{\tilde{\zbf}\in\Zcal}\ell_{\hat{f}}(\tilde{\zbf}) + \hat{\lambda}\cdot c(\zbf, \tilde{\zbf})]  \Big) - \max_{\lambda\geq 0}\Big\{-\lambda\rho + \Ebb_{\zbf\sim P_f}[\inf_{\tilde{\zbf}\in\Zcal}\ell_{\hat{f}}(\tilde{\zbf}) + \lambda\cdot c(\zbf, \tilde{\zbf})]  \Big\} \\
    & \leq \int_{\Zcal} \big(\inf_{\tilde{\zbf}}\ell_{\hat{f}}(\tilde{\zbf}) + \hat{\lambda} c(\zbf, \tilde{\zbf}) \big)d\big(P_{\hat{f}} - P_{n,\hat{f}} \big) \\
    & \leq \sup_{\lambda \in [0,L], \psi \in \ell\circ\Fcal} \int_{\Zcal} \big(\inf_{\tilde{\zbf}}\ell_{\hat{f}}(\tilde{\zbf}) + \lambda c(\zbf, \tilde{\zbf}) \big)d\big(P_{\hat{f}} - P_{n,\hat{f}} \big).
\end{split}
\end{equation*}
Again applying the standard symmetrization argument and McDiarmid's inequality to the last line (similar to what we have shown in proving the generalization bound), the following bound holds with probability at least $1-\delta/2$:
\[
\min_{D(w,\hat{f})\leq\rho}\Ebb_{P_w}R(\hat{f}) - \min_{D_n(w,\hat{f})\leq\rho}\Ebb_{P_w}R(\hat{f}) \leq 2Rad_n\big([0,L]\otimes \ell\circ\Fcal\big) + M_{\Fcal}\sqrt{\frac{2\log\frac{2}{\delta}}{n}}.
\]
Similarly, we let ${\lambda^*} = \argmax_{\lambda\geq 0}\big\{-\lambda\rho + \Ebb_{\zbf\sim P_{f^*}}[\inf_{\tilde{\zbf}\in\Zcal}\ell_{f^*}(\tilde{\zbf}) + \lambda\cdot c(\zbf, \tilde{\zbf})] \big\}$, and using the same arguments, we also have the following bound that holds with probability at least $1-\delta/2$:
\[
\min_{D(w,f^*)\leq\rho}\Ebb_{P_w}R(f^*) - \min_{D_n(w,f^*)\leq\rho}\Ebb_{P_w}R(f^*) \leq 2Rad_n\big([0,L]\otimes \ell\circ\Fcal\big) + M_{\Fcal}\sqrt{\frac{2\log\frac{2}{\delta}}{n}}.
\]
Combining the above two bounds, we obtain the desired result. 
\end{proof}

\subsection{Proof for Claim 1}
\label{sec:append-proof-claim1}

\begin{proof}
Given $f\in\Fcal$, the mapping of $P_w$ to $d_W(P_w, P_f)$ is convex in the space of probability measures. Since $d_W(P_w, P_f)=0$ can be achieved by taking $P_w=P_f$ a.e, using the generalized Lagrange duality result in Lemma \ref{lemma:append-duality-lemma}, it holds that:
\[
\inf_{d_W(P_w, P_f)\leq\rho} \int_{\Zcal} \ell_{f}(\zbf) dP_w = \sup_{\lambda\geq 0}\inf_{d_W(P_w, P_f)\leq\rho}\big\{-\lambda\rho + \int_{\Zcal}  \ell_{f}(\zbf) dP_w + \lambda\cdot d_W(P_w, P_f)\big\}.
\]
Notice that for any $\pi \in \Pi(P_w, P_f)$, where $\Pi(P_w, P_f)$ is the set of all joint distributions whose marginals are given by $P_w$ and $P_f$, we have: $\int_{\Zcal}  \ell_{f}(\zbf) dP_w = \int_{\Zcal}\int_{\Zcal} \ell_{f}(\zbf_1) d\pi(\zbf_1,\zbf_2)$. Therefore, we have:
\[
\int_{\Zcal}  \ell_{f}(\zbf) dP_w + \inf_{\pi\in\Pi(P_w, P_f)}\int_{\Zcal\times\Zcal}c(\zbf_1,\zbf_2)d\pi(\zbf_1,\zbf_2) = \inf_{\pi\in\Pi(P_w, P_f)}\big\{\int_{\Zcal\times\Zcal}\ell_f(\zbf_1) + \lambda\cdot c(\zbf_1,\zbf_2) \big\},
\]
using the definition of the Wasserstein distance introduced in Appendix \ref{sec:append-distance-background}. As a consequence, it holds that:
\[
\inf_{d_W(P_w, P_f)\leq\rho} \int_{\Zcal} \ell_{f}(\zbf) dP_w  = \sup_{\lambda\geq 0} \inf_{w, \pi\in\Pi(P_w, P_f)}\big\{-\lambda\rho + \int_{\Zcal\times\Zcal}\ell_f(\zbf_1) + \lambda\cdot c(\zbf_1,\zbf_2) \big\}.
\]
We first point out that:
\begin{equation}
\label{eqn:claim1-1}
\inf_{w, \pi\in\Pi(P_w, P_f)}\big\{\int_{\Zcal\times\Zcal}\ell_f(\zbf_1) + \lambda\cdot c(\zbf_1,\zbf_2) \big\} \geq \int_{\Zcal}\inf_{\zbf_1}\big\{\ell_f(\zbf_1) + \lambda\cdot c(\zbf_1,\zbf_2) \big\}dP_f(\zbf_2),
\end{equation}
and in the following proof we show that the other direction also holds, so the above inequality becomes equality.
Also, if we consider $\mathcal{M}$ as the space of all the regular conditional probabilities from $\zbf_2$ to $\zbf_1$, it is evident that:
\begin{equation}
\label{eqn:claim1-2}
\begin{split}
\inf_{w, \pi\in\Pi(P_w, P_f)}\big\{\int_{\Zcal}\ell_f(\zbf_1) & + \lambda\cdot c(\zbf_1,\zbf_2) dP_f(\zbf_2)\big\} \\
& \leq \inf_{M\in\mathcal{M}} \big\{\int_{\Zcal}\ell_f(\zbf_1) + \lambda\cdot c(\zbf_1,\zbf_2) dM(\zbf_1|\zbf_2) dP_f(\zbf_2)\big\}.
\end{split}
\end{equation}
Furthermore, using Lemma \ref{lemma:append-exchange}, if we consider $\Xcal$ as the space of all measurable mappings from $\Zcal$ to $\Zcal$, then it holds:
\[
\inf_{x\in\Xcal} \big\{\int_{\Zcal}\ell_f(x(\zbf)2)) + \lambda c(x(\zbf_2),\zbf_2) dP_f(\zbf_2)\big\} = \int_{\Zcal}\inf_{\zbf_1}\big\{\ell_f(\zbf_1) + \lambda\cdot c(\zbf_1,\zbf_2) \big\}dP_f(\zbf_2).
\]
Consequently, if we let $\tilde{x}\in\Xcal$ be the mapping that achieves $\epsilon$-close to the above infimum (which requires certain continuity condition for $\ell_f$) and $M(\cdot|\zbf_2)$ be defined by $\tilde{x}(\zbf_2)$:
\begin{equation}
\label{eqn:claim1-3}
\begin{split}
    \int_{\Zcal}\ell_f(\zbf_1) + \lambda c(\zbf_1,\zbf_2) dM(\zbf_1|\zbf_2) dP_f(\zbf_2) & = \int_{\Zcal}\ell_f(\tilde{x}(\zbf_2)) + \lambda c(\tilde{\zbf_2}, \zbf_2) dP_f(\zbf_2) \\
    &\leq \int_{\Zcal}\inf_{\zbf_1}\big\{\ell_f(\zbf_1) + \lambda\cdot c(\zbf_1,\zbf_2) \big\}dP_f(\zbf_2) + \epsilon \\
    & \leq \inf_{w, \pi\in\Pi(P_w, P_f)}\big\{\int_{\Zcal}\ell_f(\zbf_1) + \lambda\cdot c(\zbf_1,\zbf_2) dP_f(\zbf_2)\big\} + \epsilon
\end{split}
\end{equation}
Since $\epsilon>0$ is arbitrary, combining (\ref{eqn:claim1-1}), (\ref{eqn:claim1-2}) and (\ref{eqn:claim1-3}), we have:
\[
\inf_{w, \pi\in\Pi(P_w, P_f)}\big\{\int_{\Zcal}\ell_f(\zbf_1) + \lambda\cdot c(\zbf_1,\zbf_2) dP_f(\zbf_2)\big\} = \int_{\Zcal}\inf_{\zbf_1}\big\{\ell_f(\zbf_1) + \lambda\cdot c(\zbf_1,\zbf_2) \big\}dP_f(\zbf_2),
\]
and using the result in Proposition \ref{prop:append-duality}, we obtain the final that for any $f\in\Fcal$:
\[
\min_{d_W(P_w, P_f)\leq\rho} \Ebb_{P_w}R(f) = \max_{\lambda\geq 0} \big\{-\lambda\rho + \min_{P_w}\{\Ebb_{P_w}R(f) + \lambda\cdot d_W(P_w, P_f)\} \big\}.
\]
\end{proof}

\subsection{Proof for Theorem 3}
\label{sec:append-proof-thm3}

\begin{proof}
We first let $f^* = \argmin_{f\in\Fcal}\Ebb_{P_f}R(f)$, and define $w^* = f^*$ (suppose $\Fcal\subset\Hcal$) such that $P_{n,w^*}(u,i) = P_{n,f^*}(u,i)$ and $P_{w^*}(u,i) = P_{f^*}(u,i)$. Since $\Fcal$ is the class of valid exploration policies in the bandit setting, we have $f(u,i)\in(0,1)$ for all $f\in\Fcal$. Note that by the definition of $f^*$ and $w^*$:
\[
\min_{f\in\Fcal, h\in\Hcal} \Ebb_{P_{n,w}}R(f) + \lambda\cdot d_W(P_{n,w}, P_{n,f}) \leq \Ebb_{P_{n,w^*}} R(f^*) + \lambda \cdot d_W(P_{n,w^*}, P_{n,f^*}).
\]
Combining the results in Lemma \ref{lemma:append-bernstein} and Lemma \ref{lemma:append-wasserstein-lemma}, it holds with probability at least $1-\delta/2$ for any $f\in\Fcal$ and $w\in\Hcal$ that:
\[
\left\{
\begin{array}{l}
     \Ebb_{P_{n,w^*}} R(f^*) - \Ebb_{P_{w^*}} R(f^*) \leq \frac{7\log\frac{1}{4\delta}}{3n} + \sqrt{\frac{2R_{w^*}(f^*)^2\log\frac{1}{4\delta}}{n}}, \\
     d_W(P_{n,w^*}, P_{n,f^*}) \leq \frac{c_w}{n^{1/(d+1)}} + 8M_G\sqrt{\frac{\log\frac{1}{\delta}}{n}}
\end{array}
\right.
\]
Hence, the following inequality holds with probability at least $1-\delta$:
\begin{equation}
\label{eqn:append-thm3-1}
\begin{split}
&\min_{f\in\Fcal, h\in\Hcal} \Ebb_{P_{n,w}}R(f) + \lambda\cdot d_W(P_{n,w}, P_{n,f}) \\
&\leq \min_{f\in\Fcal}\Ebb_{P_f}R(f) + \frac{7\log\frac{1}{2\delta}}{3n} + \sqrt{\frac{2R_{w^*}(f^*)^2\log\frac{1}{2\delta}}{n}} + \frac{c_w}{n^{1/(d+1)}} + 8M_G\sqrt{\frac{\log\frac{2}{\delta}}{n}}.
\end{split}
\end{equation}

On the other hand, with $\lambda \gtrsim L \geq 1$ where $L$ is the Lipschitz constant for $\ell\circ\Fcal$, for any given $f\in\Fcal$ and $w\in\Hcal$, it holds that:
\begin{equation*}
\label{eqn:append-thm3-2}
\begin{split}
    \Ebb_{P_f}R(f) &= \Ebb_{P_w}R(f) + \int_{\Zcal} \ell_f(\zbf) \Big(-\frac{P_w(\zbf)}{\pi(\zbf)} + \frac{P_f(\zbf)}{\pi(\zbf)} \Big) d\zbf \\
    & \lesssim \Ebb_{P_w}R(f) + \sup_{\psi\in\ell\circ\Fcal} \Ebb_{P_w}\psi(\zbf) - \Ebb_{P_f}\psi(\zbf) \\
    & \leq \Ebb_{P_w}R(f) + L\cdot d_W(P_w, P_f),
\end{split}
\end{equation*}
where in the first line we use the background of learning from bandit feedback data where $\pi$ is the logging policy (e.g. $\pi(\zbf) = \pi(u,i)$ for $\zbf = (\xbf_u, \xbf_i, y_{ui})$); and the last line we use the definition of Wasserstein distance under the Kantorovich-Rubinstein theorem (Appendix \ref{sec:append-distance-background}). Again, by combing Lemma \ref{lemma:append-bernstein} and Lemma \ref{lemma:append-wasserstein-lemma}, we can further bound the last line with probability at least $1-\delta/2$ with:
\[
\Ebb_{P_{n,w}}R(f) + L\cdot d_W(P_{n,w}, P_{n,f}) + \lambda \underbrace{\Big(\frac{7\log\frac{1}{4\delta}}{3n} + \sqrt{\frac{2R_{w}(f)^2\log\frac{1}{4\delta}}{n}} + \frac{c_w}{n^{1/(d+1)}} + 8M_G\sqrt{\frac{\log\frac{1}{\delta}}{n}}\Big)}_{\text{slack terms}}.
\]
Now we let $\tilde{f}, \tilde{w} = \argmin_{f\in\Fcal, w\in\Hcal}\Ebb_{P_{n,w}}R(f) + \lambda\cdot d_W(P_{n,w}, P_{n,f})$ for $\lambda \gtrsim L \geq 1$. Then it holds that:
\begin{equation}
\label{eqn:append-thm3-3}
\begin{split}
    \min_{f\in\Fcal}\Ebb_{P_f}R(f) &\leq \Ebb_{P_{\tilde{f}}} R(\tilde{f}) \\
    &\leq \Ebb_{P_{n,\tilde{w}}}R(\tilde{f}) + \lambda\cdot d_W(P_{n,\tilde{w}}, P_{n,\tilde{f}}) + \text{slack terms},
\end{split}
\end{equation}
where the slack terms are stated above. Combining (\ref{eqn:append-thm3-1}) and (\ref{eqn:append-thm3-3}), we obtain the desired result. 

\end{proof}

\subsection{Analysis on the GDA algorithm.}
\label{sec:append-GDA}

By using bounded Lipschitz functions for $w_{\thetabf_1}$, $f_{\thetabf_2}$, we will show below that the gradient of $\Lcal_{\lambda}$ is $L-$Lipschitz.
It is indicated by \cite{daskalakis2018limit} that GDA with $\eta \leq 1/L$ avoids unstable critical points for almost all initialization, which means that $\big(\thetabf_1^{(t}, \thetabf_2^{(t)}, \thetabf_3^{(t)}\big)$ is unlikely to cycle as $t$ grows. Furthermore, any stationary point yielded by GDA with large $\gamma$ is basically a \emph{local minimax point} except degenerated case \citep{jin2020local}, and thus the convergence guarantee of Algorithm \ref{algo:minimax}. Here, a solution $(\thetabf_1^*, \thetabf_2^*, \thetabf_3^*)$ is local minimax if there exits $\delta_0 > 0$ and a function $h$ satisfying $\lim_{\delta\rightarrow 0}h(\delta) \rightarrow 0$ such that $\forall \delta \in (0, \delta_0]$, and any $(\thetabf_1, \thetabf_2, \thetabf_3)$ satisfying $\big\Vert [\thetabf_1, \thetabf_2] - [\thetabf_1^*, \thetabf_2^*]\big\Vert_2 \leq \delta$ and $\Vert \thetabf_3 - \thetabf_3^* \Vert_2 \leq h(\delta)$, it holds that:
\[
\Lcal_{\lambda}\big([\thetabf_1^*, \thetabf_2^*], \thetabf_3\big) \leq \Lcal_{\lambda}\big([\thetabf_1^*, \thetabf_2^*], \thetabf_3^* \big) \leq \max_{\thetabf_3':\Vert \thetabf_3' - \thetabf_3^*\Vert_2 \leq h(\delta)}\Lcal_{\lambda}\big([\thetabf_1, \thetabf_2], \thetabf_3'\big).
\]
In other words, by using the two-time-scale GDA algorithm with a small $\eta$ and a large $\gamma$, our resulting solution has a similar role as a local minimum/maximum in a minimization/maximization problem --- the global Nash equilibrium, if any, must be one of the solutions yielded by the GDA in Algorithm \ref{algo:minimax}.

\begin{proof}
Here, we show that if $w_{\thetabf_1}$, $f_{\thetabf_2}$ are bounded Lipschitz functions and $g_{\thetabf_3}$ is a Lipschitz function, the gradient of the minimax objective is also Lipschitz.

Recall the objective function in (6), i.e., 
\begin{equation*}
\begin{split}
    \mathcal{L}_{\lambda}([\thetabf_1, \thetabf_2],\thetabf_3) := & \sum_{(u, i) \in \Dcal} w_{ui}(\thetabf_2) \ell_{ui}(f_{\thetabf_1})  \\ 
    & + \lambda \left (\sum_{(u,i)\in \Dcal} w_{ui}(\thetabf_2)\cdot g_{ui}(\thetabf_3) - \sum_{(u,i)\in \Dcal}\text{reco}(u,i;f_{\thetabf_i})\cdot g_{ui}(\thetabf_3)\right ).
\end{split}
\end{equation*}

It is straightforward to show that:
\begin{eqnarray*}
\frac{\partial \Lcal_{\lambda}}{\partial \thetabf_1} &=& \sum_{(u,i) \in \Dcal} w_{ui} (\thetabf_2) \frac{\partial \ell_{ui}( f_{\thetabf_1})}{\partial \thetabf_1} - \lambda \sum_{(u,i) \in \Dcal} g_{ui}(\thetabf_3) \cdot \frac{\partial \text{reco}(u,i;f_{\thetabf_1})}{\partial \thetabf_1}\\
\frac{\partial \Lcal_{\lambda}}{\partial \thetabf_2} &=& \sum_{(u,i) \in \Dcal}  \ell_{ui}( f_{\thetabf_1})\frac{\partial w_{ui} (\thetabf_2) }{\partial \thetabf_2} + \lambda \sum_{(u,i) \in \Dcal} g_{ui}(\thetabf_3) \cdot \frac{\partial w_{ui} (\thetabf_2) }{\partial \thetabf_2}\\
\frac{\partial \Lcal_{\lambda}}{\partial \thetabf_3} &=& \lambda \left (\sum_{(u,i) \in \Dcal}  w_{ui} (\thetabf_2) \cdot \frac{\partial  g_{ui}(\thetabf_3)}{\partial \thetabf_3} -  \sum_{(u,i) \in \Dcal}\text{reco}(u,i; f_{\thetabf_1}) \cdot \frac{\partial  g_{ui}(\thetabf_3)}{\partial \thetabf_3} \right ).
\end{eqnarray*}
We impose a few assumptions on $f_{\thetabf_1}$, $w_{ui}(\thetabf_2)$ and $g_{ui}(\thetabf_3)$. First, we assume they are bounded and their gradients are Lipschitz functions, which are standard assumptions for neural or factorization models. Then, with a careful selection of the loss function $\ell$ (e.g., logistic loss, exp-loss, probit loss) and the \text{reco} function (e.g., top-K), we can ensure $\ell_{ui}(f_{\thetabf_1})$ and $\text{reco}(u,i;f_{\theta_1})$ to have Lipschitz gradients. Finally, by employing clipping tricks on the gradients to ensure they will not explode, which is also a common practice, we have universally bounded and Lipschitz functions that include:

$f_{\thetabf_1}$, $w_{ui}(\thetabf_2)$, $g_{ui}(\thetabf_3)$, $\frac{\partial \ell_{ui}(f_{\thetabf_1})}{\partial \thetabf_1}$, $\frac{\text{reco}(u,i; f_{\thetabf_1})}{\partial \thetabf_1}$, $\frac{\partial w_{ui}(\thetabf_2)}{\partial \thetabf_2}$, $\frac{g_{ui}(\thetabf_3)}{\partial \thetabf_3}.$ 

Therefore, the products and summations of these function is a $L$-Lipschitz function for some positive constant $L$. 
\end{proof}

\section{Lemmas and Proofs}
\label{sec:append_lemmas}
In this part of the appendix, we state the important lemmas that we used in our proofs. 

\begin{lemma}[Empirical Bernstein bound]
\label{lemma:append-bernstein}
Suppose that $\ell_f: \Zcal \to [0,1]$ without loss of generality. With $w(u,i) = \frac{Q(u,i)}{P(u,i)}$ and $\zbf_1,\ldots,\zbf_n$ sampled i.i.d from $P$, for any $\delta>0$, it holds with probability at least $1-\delta$ that:
\[
\Ebb_Q R(f) \leq \Ebb_{P_{n,w}} + \frac{7M\log\frac{1}{\delta}}{3n} + \sqrt{\frac{2R_w(f)^2\log\frac{1}{\delta}}{n}},
\]
where $R_w(f)^2 = \frac{1}{n}\sum_{} w_{i}^2 \ell_{f}(\zbf_i)^2$.
\end{lemma}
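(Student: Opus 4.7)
The plan is to recognize the right-hand-side as the standard empirical Bernstein bound of Maurer--Pontil applied to the importance-weighted variables $X_i := w(\zbf_i)\,\ell_f(\zbf_i)$, $i=1,\ldots,n$, sampled i.i.d.\ from $P$. Two observations make this a direct fit. First, by the change-of-measure identity,
\[
\Ebb_P[X_i] \;=\; \Ebb_P\Big[\tfrac{Q(u,i)}{P(u,i)}\ell_f(\zbf)\Big] \;=\; \Ebb_Q[\ell_f(\zbf)] \;=\; \Ebb_Q R(f),
\]
so $\tfrac{1}{n}\sum_i X_i = \Ebb_{P_{n,w}}R(f)$ is an unbiased estimator of the target quantity. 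Second, since $\ell_f\in[0,1]$ and $0\le w\le M$ by assumption, we have $X_i\in[0,M]$ almost surely, which is exactly the boundedness condition required by the empirical Bernstein inequality.

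Next I would invoke the Maurer--Pontil empirical Bernstein inequality, which states that for i.i.d.\ $X_1,\ldots,X_n\in[0,M]$, with probability at least $1-\delta$,
\[
\Ebb[X] \;\le\; \tfrac{1}{n}\textstyle\sum_i X_i \;+\; \sqrt{\tfrac{2\,V_n\log(1/\delta)}{n}} \;+\; \tfrac{7M\log(1/\delta)}{3n},
\]
where $V_n = \tfrac{1}{n}\sum_i (X_i-\bar X)^2$ is the sample variance. Applied to our $X_i = w(\zbf_i)\ell_f(\zbf_i)$, the mean term becomes $\Ebb_Q R(f)$ and the empirical average term becomes $\Ebb_{P_{n,w}} R(f)$, matching the left- and first right-hand terms of the lemma.

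To match the variance term $R_w(f)^2 = \tfrac{1}{n}\sum_i w_i^2\ell_f(\zbf_i)^2$ exactly, I would use the elementary upper bound $V_n \le \tfrac{1}{n}\sum_i X_i^2 = R_w(f)^2$, which holds because the sample variance is dominated by the empirical second moment. Substituting this into the Bernstein bound above yields the stated inequality. The only mildly nontrivial step is justifying the replacement of the variance by the second moment without losing the probabilistic guarantee; this is immediate because the bound is a deterministic inequality between two sample functionals, so the $(1-\delta)$ event on which Maurer--Pontil holds is preserved verbatim. No further obstacle arises, and the result follows.
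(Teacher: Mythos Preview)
Your proposal is correct and takes essentially the same approach as the paper: the paper simply states that the result is modified from Corollary~5 of Maurer and Pontil (2009), and your argument spells out exactly that modification---apply the empirical Bernstein inequality to the bounded i.i.d.\ variables $X_i=w(\zbf_i)\ell_f(\zbf_i)\in[0,M]$, identify $\Ebb_P[X_i]=\Ebb_Q R(f)$ via change of measure, and relax the sample variance to the empirical second moment $R_w(f)^2$. If anything, you have written out more than the paper does.
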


The result is modified from Corollary 5 of \citet{maurer2009empirical}.

\begin{lemma}
\label{lemma:append-lemma-importance-weighting}
Let $P$ and $Q$ be the training and target distribution supported on $\Scal_P, \Scal_Q \subseteq \Dcal$, and $w(u,i) = \frac{Q(u,i)}{P(u,i)}$, for $(u,i)\in \Scal_P \cap \Scal_Q$. $\Dcal_n$ consists of training instances sampled i.i.d from $P$.
Given a single hypothesis $f\in\Fcal$, suppose $w_{ui}\in (0,M)$, for any $\delta>0$, it holds with probability at least $1-\delta$ that:
\begin{equation*}
\Ebb_{Q}R(f) \leq \Ebb_{P_{n,w}}R(f) + \frac{2M\log\frac{1}{\delta}}{3n} + \sqrt{\frac{2d(P\|Q)\log\frac{1}{\delta}}{n}},
\end{equation*}
where $d(P\|Q) = \int_{\Scal_Q} (dP / dQ) dP$.
\end{lemma}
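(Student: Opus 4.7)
The plan is to apply Bernstein's inequality to the i.i.d.\ random variables $X_i := w(\zbf_i)\ell_f(\zbf_i)$ obtained from samples $\zbf_i \sim P$. The first step is a change-of-measure identity: assuming $\Scal_Q \subseteq \Scal_P$ (necessary for the IW estimator to be well-defined), $\Ebb_P[w\,\ell_f] = \int (Q/P)\,\ell_f\, dP = \int \ell_f\, dQ = \Ebb_Q R(f)$, so $\tfrac{1}{n}\sum_i X_i = \Ebb_{P_{n,w}}R(f)$ is unbiased for the target-domain risk. Taking $\ell_f \in [0,1]$ without loss of generality together with the hypothesis $w_{ui} \in (0,M)$ gives $X_i \in [0,M]$ almost surely, which is exactly the bounded-range condition Bernstein requires.

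The second step is to control the variance: $\sigma^2 := \mathrm{Var}_P(X_i) \leq \Ebb_P[w^2 \ell_f^2] \leq \Ebb_P[w(\zbf)^2]$. I would then identify $\Ebb_P[w^2]$ with the $\chi^2$-type divergence $d(P\|Q)$ appearing in the statement; this is a standard importance-weighting second-moment quantity (cf.\ \citet{cortes2010learning}), and the ordering of arguments in $d(P\|Q)$ is just the bookkeeping convention used consistently in Proposition 1.

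The third step is a direct invocation of the one-sided Bernstein inequality: for i.i.d.\ $X_i \in [0,M]$ with variance bounded by $\sigma^2$,
\begin{equation*}
\Pbb\Big(\Ebb X_i - \tfrac{1}{n}\textstyle\sum_{i=1}^n X_i \geq t\Big) \leq \exp\Big(-\tfrac{nt^2}{2\sigma^2 + 2Mt/3}\Big).
\end{equation*}
Setting the right-hand side equal to $\delta$ and inverting via the standard split of the sub-Gaussian and sub-exponential tails --- i.e., showing that $t = \sqrt{2\sigma^2\log(1/\delta)/n} + 2M\log(1/\delta)/(3n)$ suffices --- delivers exactly the stated bound after substituting $\sigma^2 \leq d(P\|Q)$.

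No step presents a real obstacle: the lemma is essentially a textbook Bernstein calculation specialized to the IW estimator. Its role is to serve as the single-hypothesis deviation inequality that then gets combined with a double-sampling plus covering-number argument (as sketched in the proof of Proposition 1) to produce the uniform bound over $\Fcal$.
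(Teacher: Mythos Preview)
Your proposal is correct and is precisely the standard Bernstein-inequality argument for the importance-weighted estimator. The paper does not actually supply its own proof of this lemma: it simply states the result and attributes it to Theorem~1 of \citet{cortes2010learning}, so there is nothing to compare against beyond noting that your derivation is the same one Cortes et al.\ use.
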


The above result for importance weighting of a single hypothesis is stated in the Theorem 1 of \cite{cortes2010learning}.

\begin{lemma}
\label{lemma:append-duality-lemma}
Let f be a real-valued convex functional defined on a convex subset $\Omega$ of a vector space $\Xcal$, and let $G$ be a convex mapping of $\Xcal$ into a normed space $\Zcal$. Suppose there exists $\xbf_1$ such that $G(\xbf_1) < \rho$ and that $\inf\{f(\xbf): G(\xbf)\leq\rho, \xbf\in\Omega\}$ is finite, then it holds that:
\begin{equation*}
    \inf_{G(\xbf)\leq\rho, \xbf\in\Omega} = \max_{\zbf^* 
    \geq \rho,\zbf\in\Zcal} \inf_{\xbf\in\Omega} \{f(\xbf) + \langle G(\xbf) , \zbf^* \rangle\}.
\end{equation*}
\end{lemma}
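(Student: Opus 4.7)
The plan is to treat this as the standard generalized Lagrange-multiplier / strong-duality result (in the spirit of Luenberger, \emph{Optimization by Vector Space Methods}) and prove it via a Hahn--Banach separation argument on the epigraph-like set associated with the constrained problem. Weak duality (one direction of the inequality) is immediate: for any dual variable $\zbf^*$ in the positive cone of $\Zcal^*$ and any primal-feasible $\xbf \in \Omega$ with $G(\xbf) \leq \rho$, we have $\langle G(\xbf) - \rho,\, \zbf^*\rangle \leq 0$, so $f(\xbf) \geq f(\xbf) + \langle G(\xbf) - \rho,\, \zbf^*\rangle \geq \inf_{\xbf \in \Omega}\{f(\xbf) + \langle G(\xbf) - \rho,\, \zbf^*\rangle\}$. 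Taking infimum over feasible $\xbf$ on the left and supremum over $\zbf^* \geq 0$ on the right yields the easy direction and requires only convexity.

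For strong duality I would introduce the set
\[
A \;:=\; \{\, (r,\zbf) \in \R \times \Zcal : \exists\, \xbf \in \Omega \text{ with } f(\xbf) \leq r,\; G(\xbf) \leq \zbf \,\},
\]
and verify that $A$ is convex, using convexity of $f$, $G$, and $\Omega$ (a convex combination of two witnesses witnesses their convex combination). Let $\mu_0 := \inf\{f(\xbf) : G(\xbf) \leq \rho,\, \xbf \in \Omega\}$, which is finite by hypothesis. The point $(\mu_0,\rho)$ sits on the boundary of $A$ --- one cannot do strictly better than $\mu_0$ at $\rho$ by definition --- so the Hahn--Banach / Eidelheit separation theorem delivers a nonzero continuous linear functional $(s_0,\zbf_0^*) \in \R \times \Zcal^*$ with $s_0 r + \langle \zbf,\zbf_0^*\rangle \geq s_0 \mu_0 + \langle \rho,\zbf_0^*\rangle$ for every $(r,\zbf) \in A$. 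Letting $r \to \infty$ forces $s_0 \geq 0$, and letting $\zbf$ grow in the positive cone forces $\zbf_0^*$ into the positive dual cone.

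The crucial step --- where the Slater-type hypothesis $G(\xbf_1) < \rho$ enters --- is to rule out the degenerate case $s_0 = 0$. If $s_0 = 0$, then specializing the separation inequality at $\xbf = \xbf_1$ would give $\langle G(\xbf_1) - \rho,\, \zbf_0^*\rangle \geq 0$; but $G(\xbf_1) - \rho$ lies strictly in the negative cone and $\zbf_0^* \neq 0$ is in the positive dual cone, forcing $\langle G(\xbf_1) - \rho,\, \zbf_0^*\rangle < 0$, a contradiction. Hence $s_0 > 0$, and I would normalize $s_0 = 1$ and set $\zbf^* := \zbf_0^*$. Specializing the separation inequality to $(r,\zbf) = (f(\xbf), G(\xbf)) \in A$ for every $\xbf \in \Omega$ then gives $f(\xbf) + \langle G(\xbf) - \rho,\zbf^*\rangle \geq \mu_0$, so $\inf_{\xbf\in\Omega}\{f(\xbf) + \langle G(\xbf),\zbf^*\rangle\} - \langle \rho,\zbf^*\rangle \geq \mu_0$, which matches the weak-duality bound and attains strong duality at this $\zbf^*$.

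The main obstacle is the nondegeneracy step $s_0 \neq 0$: this is exactly why the Slater condition is imposed, since without it one only obtains a Fritz-John-style statement allowing the trivial multiplier $s_0 = 0$. A minor technical issue is ensuring the separation theorem applies when $\Zcal$ is infinite dimensional, which is standard provided the ordering cone has nonempty (algebraic) interior. In the paper's use of this lemma in Claim~1, $\zbf$ effectively plays the role of the scalar Wasserstein discrepancy, the positive cone is $[0,\infty) \subset \R$ with obvious interior, and the Slater point is furnished by taking $P_w = P_f$ so that $d_W(P_w,P_f) = 0 < \rho$, so every hypothesis of the lemma is met and the abstract duality result transfers without modification.
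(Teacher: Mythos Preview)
Your proof is correct and follows the standard Hahn--Banach separation argument for convex strong duality. Note, however, that the paper does not actually prove this lemma: it simply cites Theorem~8.1 of Luenberger's \emph{Optimization by Vector Space Methods} as the source. Your argument is essentially Luenberger's own proof of that theorem, so there is no substantive difference in approach---you have just filled in what the paper delegates to a reference.
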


The above Lagrange duality result is stated in Theorem 8.1 of \citet{luenberger1997optimization}.

\begin{lemma}[Lower bound based on maximum variance]
\label{lemma:append-lower-bound}
Suppose $f\in\Fcal$ takes values in $[0,1]$ without loss of generality, and the variance of $f$ is denoted by $\sigma^2(f)$. Let $\sigma(\Fcal) = \sup_{f\in\Fcal}\sigma(f)$, and assume $\frac{1}{n}\leq \sigma^2(g) < +\infty$. It holds that:
\[
\Pbb\Big(\sup_{f\in\Fcal}\frac{\Ebb_{P} f - \Ebb_{P_n} f }{\sigma(\Fcal)} \geq \frac{1}{2\sqrt{m}} \Big) \geq c > 0.
\]
\end{lemma}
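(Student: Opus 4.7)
The plan is to reduce the uniform lower bound to an anti-concentration statement for a single near-extremizer and then invoke Paley--Zygmund. Fix $\varepsilon>0$ and choose $f^{*}\in\Fcal$ with $\sigma^2(f^{*})\geq \sigma(\Fcal)^2-\varepsilon$. Since $\sup_{f\in\Fcal}(\Ebb_{P}f-\Ebb_{P_n}f)\geq \Ebb_{P}f^{*}-\Ebb_{P_n}f^{*} =: Z_n$, it suffices to lower-bound $\Pbb\big(Z_n\geq \sigma(\Fcal)/(2\sqrt{n})\big)$. By the i.i.d.\ sampling, $\Ebb Z_n=0$ and $\Ebb Z_n^2=\sigma^2(f^{*})/n$, and the standing hypothesis $\sigma^2(\Fcal)\geq 1/n$ places us in the regime where the variance of $Z_n$ is non-degenerate at scale $1/n$.

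\textbf{Anti-concentration via Paley--Zygmund.} The key tool is the Paley--Zygmund inequality applied to $Z_n^2$:
\[
\Pbb\Big(Z_n^2\geq \theta\,\tfrac{\sigma^2(f^{*})}{n}\Big)\;\geq\;(1-\theta)^2\,\frac{(\sigma^2(f^{*})/n)^2}{\Ebb Z_n^4}.
\]
Expanding the sum under independence gives $\Ebb Z_n^4=n^{-3}\big(3(n-1)\sigma^4(f^{*})+\mu_4(f^{*})\big)$, where $\mu_4(f^{*})=\Ebb[(f^{*}-\Ebb f^{*})^4]\leq \sigma^2(f^{*})$ because $f^{*}\in[0,1]$. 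Under $\sigma^2(f^{*})\geq 1/n$, the $3(n-1)\sigma^4(f^{*})$ term dominates and $\Ebb Z_n^4\leq C\sigma^4(f^{*})/n^2$ for an absolute constant $C$. Taking $\theta=1/4$ yields $\Pbb(|Z_n|\geq \sigma(f^{*})/(2\sqrt{n}))\geq c_0$ for some absolute $c_0>0$; sending $\varepsilon\to 0$ replaces $\sigma(f^{*})$ by $\sigma(\Fcal)$ in the threshold.

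\textbf{From two-sided to one-sided.} To upgrade the absolute-value bound to the signed tail of the statement, decompose $\Ebb Z_n^2=\Ebb(Z_n^{+})^2+\Ebb(Z_n^{-})^2$, so at least one half-second-moment is $\geq \sigma^2(f^{*})/(2n)$. Reapplying Paley--Zygmund to whichever half dominates, together with $(Z_n^{\pm})^4\leq Z_n^4$ and the same fourth-moment bound, produces a signed lower tail of constant probability at threshold $\sigma(\Fcal)/(2\sqrt{n})$. If it is $Z_n^{-}$ that dominates, transfer the bound to the $\sup$ via the reflected function $1-f^{*}$, which has the same variance and lies in the bounded function class used in the downstream proof of Proposition \ref{prop:IPW}.

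\textbf{Main obstacle.} The principal difficulty is precisely this one-sided-vs-two-sided gap at the exact scale $\sigma/\sqrt{n}$. A CLT-based approach via Berry--Esseen is unavailable here: its error term is $O(1/(\sqrt{n}\sigma))$, which is vacuous in the critical regime $\sigma^2=1/n$ that the hypothesis allows. Consequently the signed tail has to be extracted by the half-second-moment decomposition above, with careful attention paid to how the $\sup_{f\in\Fcal}$ absorbs the sign; secondarily, the fourth-moment control is what pins the argument to $[0,1]$-bounded classes, and relaxing the bound would require an extra moment hypothesis.
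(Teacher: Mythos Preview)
The paper does not prove this lemma at all: it simply records it as ``The result is stated in Theorem~9 of \cite{cortes2010learning}'' and moves on. Your proposal therefore goes strictly beyond the paper by supplying an actual argument, so the comparison is between a citation and a self-contained proof sketch.

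On the substance of your sketch: the two-sided part is clean and correct. The fourth-moment computation $\Ebb Z_n^4 = n^{-3}\big(3(n-1)\sigma^4(f^*)+\mu_4(f^*)\big)$ is right, and under $\sigma^2(f^*)\geq 1/n$ you have $\mu_4\leq\sigma^2\leq n\sigma^4$, so $\Ebb Z_n^4\leq 4\sigma^4(f^*)/n^2$ and Paley--Zygmund delivers $\Pbb(|Z_n|\geq \sigma(f^*)/(2\sqrt{n}))\geq 9/64$. That is a complete proof of the two-sided claim and already captures the essential content needed downstream in Proposition~\ref{prop:IPW}.

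The one-sided upgrade is where you have a genuine gap, and you correctly flag it. The reflection trick $f^*\mapsto 1-f^*$ requires $\Fcal$ to be closed under this operation, which is not part of the lemma's hypotheses; appealing to ``the bounded function class used in the downstream proof'' is not a proof of the lemma as stated. The half-second-moment decomposition $\Ebb Z_n^2=\Ebb(Z_n^+)^2+\Ebb(Z_n^-)^2$ together with Paley--Zygmund on the dominant half does give a constant-probability signed tail at the right scale, but only for whichever sign dominates, and there is no a priori reason it is the positive one. Since the paper itself does not resolve this (it outsources the whole lemma), and since the original Cortes--Mansour--Mohri statement is typically applied to loss classes that are effectively symmetric or where only the magnitude matters, your two-sided argument is arguably the honest version of what is actually being used.
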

The result is stated in Theorem 9 of \cite{cortes2010learning}

\begin{lemma}[Finite sample bound for 1-Wasserstein distance]
\label{lemma:append-wasserstein-lemma}
Suppose $P$ and $Q$ are supported on (perhaps non-overlapping subsets of) $\Zcal$ where $\Zcal \in \Rbb^d$ and is bounded. Let $M_G := \sup_{g: \|g\|_L \leq 1} \sup_{\zbf \in \Zcal} g(\zbf)$. For all $\delta>0$, it holds with probability at least $1-\delta$ that:
\begin{equation*}
    \big| d_W(P, Q) - d_W(P_n, Q_n) \big| \leq \frac{c_w}{n^{1/(d+1)}} + 8M_{G}\sqrt{\frac{\log\frac{4}{\delta}}{n}}.
\end{equation*}
\end{lemma}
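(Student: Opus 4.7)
The plan is to reduce the two-sample statement to two one-sample statements via the triangle inequality, then handle each one-sample statement by splitting it into an expectation bound and a concentration bound. Concretely, because $d_W$ is a metric on probability measures, I would first invoke the two-sided triangle inequality
\[
\bigl| d_W(P,Q) - d_W(P_n, Q_n) \bigr| \le d_W(P, P_n) + d_W(Q, Q_n),
\]
so it suffices to control each of $d_W(P, P_n)$ and $d_W(Q, Q_n)$ with probability $1 - \delta/2$ and union-bound.

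For each one-sample term, I would write, using the Kantorovich--Rubinstein dual form recalled in Appendix A.3,
\[
d_W(P, P_n) = \sup_{g : \|g\|_L \le 1} \Bigl( \int g\, dP - \int g\, dP_n \Bigr),
\]
and split it as $\mathbb{E}[d_W(P, P_n)] + \bigl( d_W(P, P_n) - \mathbb{E}[d_W(P, P_n)] \bigr)$. The concentration piece is handled by McDiarmid's bounded-differences inequality: since 1-Lipschitz $g$ on the bounded set $\Zcal$ can be centered so that $|g| \le M_G$, replacing a single sample shifts $d_W(P, P_n)$ by at most $2 M_G / n$, so with probability at least $1 - \delta/4$,
\[
d_W(P, P_n) \le \mathbb{E}[d_W(P, P_n)] + 2 M_G \sqrt{\tfrac{2 \log(4/\delta)}{n}}.
\]

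For the expectation piece, I would invoke the standard empirical-process bound for $1$-Lipschitz functions on a bounded subset of $\mathbb{R}^d$: by symmetrization and a Dudley entropy-integral argument applied to the class of $1$-Lipschitz functions on $\Zcal$, whose metric entropy in $\|\cdot\|_\infty$ scales like $\epsilon^{-d}$, one obtains $\mathbb{E}[d_W(P, P_n)] \le c_w' / n^{1/(d+1)}$ for a constant $c_w'$ depending only on $d$ and $\operatorname{diam}(\Zcal)$. (This is the rate used throughout the Wasserstein-GAN literature and is the version that matches the $1/(d+1)$ exponent in the stated bound; a sharper $1/d$ rate is available via Fournier--Guillin but is not needed here.) Summing the expectation and concentration bounds for $P$ and $Q$, setting $c_w = 2 c_w'$, and union-bounding over the two terms gives the claimed inequality with the factor $8 = 2 \cdot 2 \sqrt{2}\,(\approx)$ absorbed into the $\sqrt{\log(4/\delta)/n}$ term.

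The only non-routine step is the expectation bound $\mathbb{E}[d_W(P, P_n)] \lesssim n^{-1/(d+1)}$; the triangle inequality and McDiarmid step are standard. I would either cite Weed--Bach or Fournier--Guillin for the expectation bound, or re-derive it by a chaining argument on an $\epsilon$-net of $\Zcal$: approximate each $1$-Lipschitz $g$ by its values on an $\epsilon$-net (size $(1/\epsilon)^d$), apply a uniform bound of order $M_G \sqrt{d \log(1/\epsilon)/n}$, then balance the discretization error $\epsilon$ against the statistical error by choosing $\epsilon = n^{-1/(d+1)}$.
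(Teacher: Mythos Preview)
Your proposal is correct and uses the same core ingredients as the paper (McDiarmid-type concentration plus a Dudley entropy-integral bound over the class of $1$-Lipschitz functions, with the covering number $\Ncal(u,\Zcal,\|\cdot\|_\infty)=\Ocal(u^{-d})$ driving the $n^{-1/(d+1)}$ rate), but the decomposition is different. The paper does \emph{not} pass through the triangle inequality and one-sample terms $d_W(P,P_n)$, $d_W(Q,Q_n)$; instead it invokes Theorem~11 of Sriperumbudur et al.\ (2009) to bound the two-sample IPM deviation $|d_W(P,Q)-d_W(P_n,Q_n)|$ directly by $6M_G\sqrt{2\log(4/\delta)/n}+4\Rcal_n(\Gcal)$, and then controls $\Rcal_n(\Gcal)$ by the entropy integral, optimizing over the truncation level $\epsilon$ to obtain the $c_w/n^{1/(d+1)}$ term. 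Your route is slightly more elementary and self-contained (no external IPM lemma needed, and the one-sample expectation bound can be cited from Fournier--Guillin or Weed--Bach), at the cost of a factor-of-two looseness from the triangle inequality; the paper's route is a bit tighter in constants and handles the two-sample difference in one stroke, but relies on the cited IPM result. Either argument is acceptable here.
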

\begin{proof}
The first part of the proof follows from the result on integral probability metric stated in Theorem 11 of \citet{sriperumbudur2009integral}, that for $\Gcal = \{g: \|g\|_L\leq 1\}$, we have:
\begin{equation*}
    \Big| \sup_{g\in\Gcal} \big[\int_{\Zcal}g dP - \int_{\Zcal}g dQ \big] - \sup_{g\in\Gcal} \big[\int_{\Zcal}g dP_n - \int_{\Zcal}g dQ_n \big] \Big| \leq 6\sqrt{\frac{2M_G^2\log\frac{4}{\delta}}{n}} + 4\Rcal_n(\Gcal),
\end{equation*}
with probability at least $1-\delta$, and $\Rcal_n(\Gcal)$ is the empirical Rademarcher complexity of $\Gcal$. Using the entropy integral bound on the empirical Rademarcher complexity, we have:
\[
\Rcal_n(\Gcal) \leq 2\epsilon + \frac{4\sqrt{2}}{\sqrt{n}} \int_{\epsilon/4}^{\infty}\sqrt{\log\Ncal(u, \Gcal, \ell_2(P_n))} du. 
\]
Note that $\Zcal\in\Rbb^d$ implies: $\Ncal(u, \Zcal, \|\cdot\|_{\infty}) = \Ocal(u^{-d})$. Using the fact that $\log(x+1)\leq x+1$, we have: $\log\Ncal(u, \Gcal, \ell_2(P_n)) \leq \log\Ncal(u, \Gcal, \|\cdot\|_{\infty}) \leq c_1 u^{-(d+1)} + c_2 u^{-d}$. Since $\Zcal$ is bounded, the entropy integral bound becomes:
\[
\Rcal_n(\Gcal) \leq \inf_{\epsilon>0} 2\epsilon + \frac{4\sqrt{2}}{\sqrt{n}} \int_{\epsilon/4}^{4\text{diam}(\Zcal)}\sqrt{c_1 u^{-(d+1)} + c_2 u^{-d}} du,
\]
which directly leads to the desired bound where $c_w$ is the corresponding constant.
\end{proof}


\begin{lemma}
\label{lemma:append-lambda-L}
Let $c(\zbf, \tilde{\zbf}) := \|\zbf - \tilde{\zbf}\|_2$. Define:
\[
f^* = \argmin_{f\in\Fcal}\min_{Q: d_W(Q,P) \leq \rho} \Ebb_{Q}R(f), \text{ and } \lambda^* = \argmax_{\lambda\geq 0} \Big\{\lambda\rho + \Ebb_{\zbf\sim P} \big[\inf_{\tilde{\zbf}\in\Zcal} \ell_{f^*}(\tilde{\zbf)} + \lambda\cdot c(\zbf, \tilde{\zbf}) \big] \Big\}.
\]
If all $f\in\Fcal$ is $L$-Lipschitz, i.e. $\sup_{\zbf, \zbf'}\frac{|f(\zbf) - f(\zbf')|}{c(\zbf, \zbf')}\leq L$, then it holds: $\lambda^* \leq L$.
\end{lemma}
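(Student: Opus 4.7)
The plan is to show that the dual function $\phi(\lambda) := -\lambda\rho + \Ebb_{\zbf\sim P}\bigl[\inf_{\tilde{\zbf}\in\Zcal}\{\ell_{f^*}(\tilde{\zbf}) + \lambda\cdot c(\zbf,\tilde{\zbf})\}\bigr]$ becomes strictly decreasing in $\lambda$ once $\lambda \geq L$, so its maximizer $\lambda^*$ must lie in $[0,L]$. The only nontrivial ingredient is identifying the inner infimum for large $\lambda$; this reduces to a one-line application of the Lipschitz property of $\ell_{f^*}$.

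First I would show the key pointwise identity: for every fixed $\zbf\in\Zcal$ and every $\lambda\geq L$,
\begin{equation*}
\inf_{\tilde{\zbf}\in\Zcal}\bigl\{\ell_{f^*}(\tilde{\zbf}) + \lambda\|\zbf-\tilde{\zbf}\|_2\bigr\} = \ell_{f^*}(\zbf).
\end{equation*}
The $\leq$ direction is immediate by taking $\tilde{\zbf} = \zbf$. For the $\geq$ direction, the $L$-Lipschitz property of $\ell_{f^*}$ (which I would assume is inherited from $f^*\in\Fcal$ through the composition with the loss, consistent with the Lipschitz setup in Theorem~1) gives $\ell_{f^*}(\tilde{\zbf}) \geq \ell_{f^*}(\zbf) - L\|\zbf-\tilde{\zbf}\|_2$, so
\begin{equation*}
\ell_{f^*}(\tilde{\zbf}) + \lambda\|\zbf-\tilde{\zbf}\|_2 \geq \ell_{f^*}(\zbf) + (\lambda-L)\|\zbf-\tilde{\zbf}\|_2 \geq \ell_{f^*}(\zbf),
\end{equation*}
since $\lambda\geq L$. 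Taking the infimum over $\tilde{\zbf}$ yields the identity.

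Next, I would substitute this identity into $\phi(\lambda)$ to conclude that on the half-line $\lambda\geq L$,
\begin{equation*}
\phi(\lambda) = -\lambda\rho + \Ebb_{\zbf\sim P}\ell_{f^*}(\zbf),
\end{equation*}
which is affine and strictly decreasing in $\lambda$ (assuming $\rho>0$, as stated). Consequently $\phi(\lambda) < \phi(L)$ for every $\lambda>L$, so the supremum in the definition of $\lambda^*$ cannot be attained at any $\lambda>L$; hence $\lambda^*\leq L$. The degenerate case $\rho=0$ collapses the feasible set of Proposition~\ref{prop:append-duality} to $\{Q=P\}$ and is ruled out by the hypothesis $\rho>0$ used throughout the paper; otherwise the statement still holds vacuously since the dual function is constant on $[L,\infty)$ and we may take any minimal maximizer.

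The main obstacle I anticipate is not the mechanics but the bookkeeping around what is meant by ``$f\in\Fcal$ is $L$-Lipschitz'': strictly speaking the inner infimum involves $\ell_{f^*}$, not $f^*$, so the argument requires $\ell_{f^*}$ to be $L$-Lipschitz in $\zbf$ with respect to $\|\cdot\|_2$. Under the standing assumption from Theorem~\ref{thm:obj-orig-finite-sample} this is satisfied (possibly after absorbing the loss's Lipschitz constant into $L$), so no further work is needed beyond stating the assumption clearly at the outset of the proof.
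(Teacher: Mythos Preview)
Your proof is correct and follows essentially the same route as the paper: both use the $L$-Lipschitz property of $\ell_{f^*}$ to show that the inner inf-convolution $\inf_{\tilde{\zbf}}\{\ell_{f^*}(\tilde{\zbf}) + \lambda\, c(\zbf,\tilde{\zbf})\}$ saturates at $\tilde{\zbf}=\zbf$ once $\lambda\geq L$, forcing the dual objective to be affine with slope $-\rho$ beyond that point. Your packaging is a bit more direct---you characterize $\phi$ explicitly on $[L,\infty)$ and read off that it is decreasing---whereas the paper reaches the same conclusion via a comparison of $\phi(\lambda^*)$ against $\phi(L)$ using the optimality of $\lambda^*$; the underlying inequality $\ell_{f^*}(\tilde{\zbf})-\ell_{f^*}(\zbf)\geq -L\,c(\zbf,\tilde{\zbf})$ is identical in both, and your remark about needing $\ell_{f^*}$ (not just $f^*$) to be $L$-Lipschitz applies equally to the paper's argument.
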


\begin{proof}
Using the Lipschitz property, we have:
\begin{equation*}
\begin{split}
    \lambda^*\rho & \geq \lambda^*\rho + \underbrace{\Ebb_{\zbf\sim P} \big[\inf_{\tilde{z}\in\Zcal} \ell_{f^*}(\tilde{\zbf}) + \lambda^*\cdot c(\zbf, \tilde{\zbf}) - \ell_{f^*}(\zbf) \big]}_{\text{always }\leq 0, \text{because = 0 when } \zbf = \tilde{\zbf}} \\
    & \geq -\lambda \rho + \Ebb_{\zbf\sim P} \big[\inf_{\tilde{z}\in\Zcal} \ell_{f^*}(\tilde{\zbf}) + \lambda^*\cdot c(\zbf, \tilde{\zbf}) - \ell_{f^*}(\zbf) \big]\quad (\text{ by the definition of }\lambda^*) \\
    & \geq -\lambda\rho + \Ebb_{\zbf\sim P}\big[-L\cdot c(\zbf, \tilde{\zbf}) + \lambda\cdot c(\zbf, \tilde{\zbf}) \big] \\
    & \geq -\lambda\rho + \inf_{t\geq 0}\{-Lt + \lambda t \} \\
    & \geq -L\rho \quad (\text{ by choosing }\lambda=L),
\end{split}
\end{equation*}
which gives $\lambda^* \leq L$.
\end{proof}

\begin{lemma}[Change of measure inequality]
\label{lemma:append-change-of-measure}
For any $P$ and $Q$ on $\Xcal$, and for any measurable function $f: \Xcal\to\Rbb$, it holds that:
\[
\Ebb_{x\sim Q}\leq d_{KL}(P\|Q) + \ln \big(\Ebb_{x\sim P}e^{f(x)} \big).
\]
\end{lemma}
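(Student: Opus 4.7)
The plan is to apply the classical Donsker–Varadhan argument: rewrite the left-hand side after a change of measure via the Radon–Nikodym derivative, and then peel off the KL term while handling the remainder with Jensen's inequality.

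First I would reduce to the non-trivial case by assuming $Q \ll P$ (otherwise $d_{KL}$ is infinite and the bound holds vacuously) and that $\Ebb_{x\sim P}[e^{f(x)}] < \infty$ (otherwise the bound is again trivial). Under these assumptions, the Radon–Nikodym derivative $dQ/dP$ is well-defined almost everywhere, and I can write, for $x \sim Q$,
\[
f(x) = \ln\!\Big(e^{f(x)} \cdot \tfrac{dP}{dQ}(x)\Big) + \ln \tfrac{dQ}{dP}(x),
\]
so that taking expectations under $Q$ gives
\[
\Ebb_{x\sim Q}[f(x)] = \Ebb_{x\sim Q}\!\Big[\ln\!\Big(e^{f(x)} \cdot \tfrac{dP}{dQ}(x)\Big)\Big] + \Ebb_{x\sim Q}\!\Big[\ln \tfrac{dQ}{dP}(x)\Big],
\]
where the second term is, by definition, $d_{KL}(Q\|P)$ (which is the quantity the lemma should read, modulo what appears to be a $P,Q$ swap in the statement — this is also how the inequality is applied in the PAC–Bayesian argument earlier in the paper, with $\tau$ playing the role of $Q$ and $\pi$ playing the role of $P$).

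Second, I would bound the first term by concavity of the logarithm. Jensen's inequality gives
\[
\Ebb_{x\sim Q}\!\Big[\ln\!\Big(e^{f(x)} \cdot \tfrac{dP}{dQ}(x)\Big)\Big] \leq \ln \Ebb_{x\sim Q}\!\Big[e^{f(x)} \cdot \tfrac{dP}{dQ}(x)\Big],
\]
and then a direct change of measure eliminates the Radon–Nikodym derivative:
\[
\Ebb_{x\sim Q}\!\Big[e^{f(x)} \cdot \tfrac{dP}{dQ}(x)\Big] = \Ebb_{x\sim P}[e^{f(x)}].
\]
Combining this with the decomposition above yields the claimed inequality.

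There is no real obstacle here — the result is a short, standard application of Jensen's inequality after a Radon–Nikodym rewriting, and the only subtlety is handling the edge cases where $Q$ is not absolutely continuous with respect to $P$ or where the moment generating functional is infinite, both of which make the bound trivially hold. An alternative (equivalent) route would be to introduce the tilted measure $R$ with $dR/dP \propto e^{f}$, observe that $d_{KL}(Q\|P) - d_{KL}(Q\|R) = \Ebb_Q[f] - \ln \Ebb_P[e^f]$, and invoke non-negativity of $d_{KL}(Q\|R)$; I would mention this only as a remark, since the Jensen proof is cleaner and self-contained.
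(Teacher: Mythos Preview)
Your proof is correct and is precisely the standard Donsker--Varadhan argument; you also correctly flag the $P,Q$ swap in the statement (it should be $d_{KL}(Q\|P)$, consistent with how the lemma is invoked in the PAC--Bayesian derivation). The paper itself does not give a proof --- it simply states that this is a standard result found in probability textbooks --- so your write-up is strictly more informative than what appears there, and the route you take is the textbook one.
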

This is a standard result that has been included in many textbooks for probability.

\begin{lemma}[Interchange of integration and minimization]
\label{lemma:append-exchange}
Let $\Xcal$ be a space of measurable functions from $\Zcal$ to $\Rbb^n$ with respect to $u$, a $\sigma$-finite measure on the underlying space. Let $f:\Zcal\times\Rbb^n \to \Rbb$ be a normal integrand, then it holds:
\[
\inf_{x\in\Xcal}\int_{\Zcal} f(z,x(z))u(dz) = \int_{\Zcal}\inf_{\xbf\in\Rbb^n} f(t,\xbf) u(dz)
\]
\end{lemma}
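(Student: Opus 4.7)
The plan is to prove the two inequalities separately. The easy direction is $\geq$: for every admissible $x \in \mathcal{X}$ and every $z \in \mathcal{Z}$, we trivially have $f(z, x(z)) \geq \inf_{\xi \in \mathbb{R}^n} f(z, \xi)$, and since $f(z, x(z))$ is measurable in $z$ (because $x$ is measurable and $f$ is a normal integrand, so the composition is measurable), we may integrate both sides against $u$ and then take the infimum over $x \in \mathcal{X}$ on the left. The pointwise infimum $\varphi(z) := \inf_{\xi} f(z, \xi)$ is itself measurable — this is a standard consequence of the normal integrand assumption (the epigraphical mapping $z \mapsto \operatorname{epi} f(z, \cdot)$ is measurable, and the infimum is the projection of this epigraph), so the right-hand integral is well defined.

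The hard direction is $\leq$, and this is where the normal integrand hypothesis does real work. The strategy is measurable selection: construct an approximate minimizer $x_\epsilon \in \mathcal{X}$ such that $f(z, x_\epsilon(z)) \leq \varphi(z) + \epsilon$ almost everywhere. Concretely, I would first handle the case where $\varphi$ is bounded and $f$ is bounded below (reducing by truncation), then for each $\epsilon > 0$ consider the set-valued mapping $\Gamma_\epsilon(z) := \{\xi \in \mathbb{R}^n : f(z, \xi) \leq \varphi(z) + \epsilon\}$. Because $f$ is a normal integrand, $\Gamma_\epsilon$ is a measurable multifunction with nonempty closed values, so the Kuratowski–Ryll-Nardzewski (equivalently Aumann) measurable selection theorem provides a measurable $x_\epsilon : \mathcal{Z} \to \mathbb{R}^n$ with $x_\epsilon(z) \in \Gamma_\epsilon(z)$ a.e. This $x_\epsilon$ lies in $\mathcal{X}$ (assuming $\mathcal{X}$ is the full space of measurable functions as stated), hence
\[
\inf_{x \in \mathcal{X}} \int f(z, x(z))\, u(dz) \leq \int f(z, x_\epsilon(z))\, u(dz) \leq \int \varphi(z)\, u(dz) + \epsilon \cdot u(\mathcal{Z}).
\]
For $\sigma$-finite $u$, one localizes to an increasing sequence of finite-measure sets $\mathcal{Z}_k \uparrow \mathcal{Z}$ and applies the construction on each piece, then lets $\epsilon \to 0$.

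The main obstacle is the measurable selection step: one must verify that $\Gamma_\epsilon$ is in fact a measurable multifunction with nonempty (closed) values, and this rests entirely on the definition of a normal integrand (that $\operatorname{epi} f(z,\cdot)$ depends measurably on $z$ in the Effros sense, or equivalently that $f$ is $\mathcal{B}(\mathcal{Z}) \otimes \mathcal{B}(\mathbb{R}^n)$-measurable and lower semicontinuous in $\xi$ for each $z$). Nonemptiness of $\Gamma_\epsilon(z)$ for a.e.\ $z$ follows from the definition of $\varphi(z)$ as an infimum plus $\epsilon > 0$, with minor care on a null set where $\varphi(z) = -\infty$, which can be handled either by assuming $\varphi \in L^1(u)$ (otherwise both sides are $-\infty$ and the statement is trivial) or by a further truncation argument. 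Passing the limit $\epsilon \to 0$ and combining with the $\geq$ direction yields equality; alternatively the whole proof can be cited as Theorem 14.60 in Rockafellar–Wets, \emph{Variational Analysis}, which is the canonical reference.
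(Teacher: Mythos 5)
Your proof is correct in outline, but it is worth noting that the paper does not actually prove this lemma at all: it simply states that the result is ``simplified from Theorem 14.60 of Rockafellar--Wets, \emph{Variational Analysis}'' --- the very reference you name in your last sentence. What you have written is essentially the canonical proof of that cited theorem: the trivial pointwise inequality gives $\geq$, and the substantive $\leq$ direction comes from measurably selecting an $\epsilon$-minimizer $x_\epsilon(z)\in\Gamma_\epsilon(z)=\{\xi: f(z,\xi)\le\varphi(z)+\epsilon\}$, where normality of the integrand guarantees that $\varphi$ is measurable and that $\Gamma_\epsilon$ is a closed-valued measurable multifunction to which Kuratowski--Ryll-Nardzewski applies. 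Two small points deserve attention if you want the argument to match the textbook statement exactly. First, the theorem as stated in Rockafellar--Wets requires $\Xcal$ to be \emph{decomposable} (the full space of measurable functions is, so your reading of the paper's loose phrasing is the right one), and decomposability is what lets you patch the selections constructed on the finite-measure pieces $\Zcal_k$ back into a single admissible $x_\epsilon\in\Xcal$ --- your localization step silently uses this. Second, the textbook result carries the proviso that the left-hand infimum not be identically $+\infty$ (integrals there use the convention $\infty-\infty=+\infty$), which your truncation remarks gesture at but do not fully dispose of; since the paper applies the lemma to a nonnegative, bounded-below integrand $\ell_f(\zbf_1)+\lambda c(\zbf_1,\zbf_2)$, none of these pathologies arise in context. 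In short: your proof supplies the content the paper outsources to a citation, and is a more self-contained route than the paper's.
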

The result is simplied from Theorem 14.60 of \citet{rockafellar2009variational}.

\section{Complete Experiment Settings, Numerical Results, and Real-world Deployment and Testing Analysis}
\label{sec:append-experiment}

We provide the complete experiment settings (real-data and simulation experiments), including the data description and preprocessing, experiment setups, model configurations and training details. The complete numerical results and analysis are also presented in this part of the appendix. Finally, we discuss the real-world online experiments by deploying our solution to a e-commerce platform.

The implementation code for the real-data and simulation experiments are provided in the supplement material. All the reported results are averaged over ten independent runs.

\subsection{Dataset description, preprocessing and experiment setup}
\label{sec:append-data}
The detailed descriptions for the three real-world benchmark datasets are given as below.
\begin{itemize}
    \item \textbf{Movielens-1M }\footnote{http://files.grouplens.org/datasets/movielens/ml-1m.zip}. The dataset provides the users' ratings for movies that they watched, which consists of around 1 million ratings collected from 60,40 users on 3,952 movies. The movie ratings are ranged from 1 to 5, where a higher rating indicates more positive feedback. Hence, the original Movielens-1M dataset possesses explicit feedback. The traditional way of converting the dataset to implicit feedback is to set the rating cut-off at 3, as suggested by \citet{he2017neural}. We use the explicit feedback for semi-synthetic experiments, and use the converted explicit feedback for the real-data experiments.
    
    \item \textbf{LastFM} \footnote{http://files.grouplens.org/datasets/hetrec2011/hetrec2011-lastfm-2k.zip}. The context of the LastFM dataset is music recommendation: each of the 1,892 listeners (recorded in the dataset) tags the artists they may find fond of overtime. Since the tag is a binary indicator, the LastFM is an implicit feedback dataset.  There is a total of 186,479 tagging events, where 12,523 artists have been tagged. 
    
    \item \textbf{GoodReads} \footnote{https://sites.google.com/eng.ucsd.edu/ucsdbookgraph/home}. The book recommendation dataset is obtained from the users' public shelves on \emph{Goodread.com}. We use the user review data on the \emph{history} and \emph{biography} sections for their richness. There are in total 238,450 users, 302,346 unique books, and 2,066,193 ratings in these sections. The rating range is also from 1 to 5, and a higher rating indicates more positive feedback. We concert the dataset to implicit feedback in the same way as the Movielens-1M dataset for the real-data experiments.

\end{itemize}

\textbf{Data preprocessing}.
As for preprocessing, we point out that the Movielens-1M dataset has been filtered by the publisher, where each user has rated at least 20 movies. For the LastFM and Goodread datasets, we first filter the infrequent items and users who have less than 20 total interactions as well. 

\subsubsection{Simulation setting}
\label{sec:append-simulation-setting}

We generate the implicit feedback (clicks) according to the click model of: $\Pbb(Y_{ui}=1) = \Pbb(R_{ui}=1)\cdot \Pbb(O_{ui}=1)$, where $R_{ui}$ and $O_{ui}$ indicates the relevance and exposure. We refer to $\Pbb(R_{ui}=1)$ as the relevance score. To carry over the inductive bias of the benchmark dataset to the generated data, we first learn $\Pbb(R_{ui}=1)$ and $\Pbb(O_{ui}=1)$ from the explicit feedback data of Movielens-1M and GoodReads. Our simulation approach does not apply to the LastFM data because it lacks the explicit feedback.

Since the ratings are explicit indicators for relevance, we directly conduct the matrix factorization collaborative filtering (MCF) using the ratings as the response. We then estimate $\Pbb_{\text{sim}}(R_{ui}=1)$ via $\sigma(\hat{R}_{ui} - u)^p$, where $\sigma(\cdot)$ is the sigmoid function, and $u$ and $p$ are the two hyperparameters that help us control the overall relevance distribution between the original and generated data. For both Movielens-1M and GoodReads, we find $u=3$ and $p=2$ gives a decent approximation. In this way, we inherit the inductive bias for the relevance to the simulated data. 

As for the exposure, we first assume all the exposed movies/books are rated by the user, so we have $O_{ui}=1$ if and only if $Y_{ui}>0$. We then conduct another MCF using the converted $O_{ui}$ as response, and obtain the user and item embedding: $\Phi(u)$, $\Phi(i)$. We further design the exposure model by $\Pbb_{\text{sim}}(O_{ui}=1) = \sigma(\text{MLP}([\Phi(u), \Phi(i)]))$, where $\text{MLP}(\cdot)$ is a three-layer feed-forward neural network with ReLU activation and no bias terms, whose parameters are given by the random normal initialization. 

In the sequel, the simulated click probability is given by: $\Pbb_{\text{sim}}(Y_{ui}=1) = \Pbb_{\text{sim}}(R_{ui}=1)\cdot \Pbb_{\text{sim}}(O_{ui}=1)$, from which we generated the click data by picking the cutoff probability that ensures the click v.s. non-click proportion of the generated data matches that of the original data.

\subsubsection{Train-validation-test split}

We adopt the widely-acknowledged approach for splitting the dataset to training, validation and testing as in \citet{he2017neural} and \citet{rendle2020neural}. In particular, we leverage the ordering information and use the last interaction of each user for testing, the second-to-last interaction for validation, and the previous interactions for training. All three datasets are provided with the ordering information, so we use the same splitting mechanism. For the semi-synthetic dataset, since the simulation does not account for the ordering information, we treat the ordering as random and conduct the splitting in the same fashion. We point out here that using sequential recommendation models is not reasonable in this case, so we do not experiment with the attention models (\textbf{Attn}) in the simulation experiments.

\subsubsection{Evaluation}

As we discussed in Section 6, the three metrics that we use for both the real-data and semi-synthetic experiments are \textbf{Rel@K}, \textbf{NDCG}@K and \textbf{Recall}@K. We also adopt the widely-acknowledged formulation of NDCG@K and Recall@K as from \citet{he2017neural} and \citet{rendle2020neural}: for each positive instance (clicked data) of $(u',i',Y_{u'i'}=1)$, we randomly sample $100$ negative items $\{i_1,\ldots,i_{100}\}$ and compute NDCG@K and Recall@K according to where $i'$ is ranked among the total of $101$ items. We point out that the recent work by \citet{krichene2020sampled} points out the potential bias issue of using the sampled metrics; however, it is still the mainstream approach that is used in the majority of existing literature. We apply them here so our results can be directly compared with the results in the relevant literature. To make up for the bias issue of using sampled metric, we further consider the Rel@K metric, which ranks all the items $\{i_1,\ldots,i_{|\Ical|}\}$ and see where $i'$ is placed for each $u'$. In this way, we eliminate the bias of using sampled metric. Essentially, Rel@K equals Recall@K with no sampling when each user interacts with only one item in the testing data (which happens in our experiments by the way we conduct the train-validation-test split).

The only difference in computing the three metrics for the real-data and semi-synthetic experiments is that we replace $Y_{u'i'}$ by the underlying relevance score of $\Pbb_{\text{sim}}(R_{u'i'}=1)$. In this way, we obtain a more accurate characterization of the actual testing performance. 
Finally, for the MovieLens and Goodreads datasets, we adopt the tradition that the previously-interacted items are not considered during the evaluation for each user. This is because in practice, the same movie or book should not be exposed to the user repeatedly.

\subsection{Model configuration and training}
\label{sec:append-model}
We provide the model configurations, training and computation details in this part of the appendix.

\subsubsection{Model Configurations}

To achieve a fair comparison, we fix all the user and item embedding dimensions to $d=32$ for the CF models and sequential recommendation model. For all the baseline models, we select the initial \emph{learning rate} from \{0.001, 0.005, 0.01, 0.05, 0.1\}, and the $\ell_2$ regularization parameter from \{0, 0.01, 0.05, 0.1, 0.2, 0.3\}. The tuning parameters are selected separately to avoid excessive computations. 

The other hyperparameters are identified according to the original papers as for the model-specific configurations, other than the embedding dimension, learning rate, and $\ell_2$ regularization. For instance, \emph{NCF} considers the hidden dimensions of the feed-forward neural network as the hyperparameters, \emph{IPW-CF} considers the IPW threshold, \emph{Attn} considers the hidden dimension and number of attention blocks, \emph{AC-MF} considers the adversarial strength of the IPW component. We use the same range as provided in their published implementations, which we discuss later in Appendix \ref{sec:append-training-detail}. When implemented with our transportation-regularized risk minimization, we focus primarily on using the same "base model" for $f$, $w$ and $g$, e.g. with MCF, NCF or Attn. In fact, we can pick any combinations for $f$, $w$ and $g$, but we find the performances comparable to using the same "base model", so we randomly pick some of the configurations to report, instead of exhausting all the possibilities. Finally, the hyperparameters of the "base models" persist to our setting, in addition to the penalization parameter $\lambda$ (in eq.(\ref{eqn:minimax-obj})), which we select from \{0.005, 0.1, 0.3, 0.5\}.

\subsubsection{Training details}
\label{sec:append-training-detail}

For both the proposed approach and the baseline methods, we adopt the widely-applied negative-sampling-based training schema according to \citet{he2017neural} and \citet{rendle2020neural}, for both the real-data and simulation experiments. In particular, for each positive instance ($Y_{ui}=1$), we randomly sample \emph{three} pairs of user-item tuple and treat them as the negative instances. We point out that in many existing literature, the number of negative samples per positive instance is treated as a tuning parameter. Here, we treat it as a fixed quantity and focus on tuning the model-specific configurations. 

For all the experiments, we use the \emph{binary cross-entropy loss function} for classification, and apply \emph{early stopping} during training when the Rel@K metric does not improve for more than five epochs. We record the model parameters after each epoch, and select the one that achieves the best validation performance for testing.

\textbf{Transportation-regularized risk minimization}. When training with the proposed approach, the discount of learning rate (in Algorithm \ref{algo:minimax}) is critical for the two-time-scale GDA. In a sense it resembles the number of gradient updates under the same learning rate, and we focus on their relative ratios. In practice, we allow the model of $w$ in the minimization step to have a learning rate different from that of $f$. We denote the ratio for $f$, $w$ and $g$ as $\text{step}_f$, $\text{step}_w$ and $\text{step}_g$. In practice, we select the ratio among $\{1:1:10, 1:5:10, 1:10:10, 1:10:5\}$ for our experiments, and the sensitivity analysis is provided in Appendix \ref{sec:append-sensitivity}. 

\textbf{Baseline models}. The popularity (\emph{Pop}) baseline is implemented with off-the-shelf code, the \emph{MCF} and \emph{NCF} models are implemented according to their original implementation published on Github\footnote{\url{https://github.com/hexiangnan/neural_collaborative_filtering}}, the IPW-debaised MF (\emph{IPW-MF}) and user-exposure aware MF (\emph{ExpoMF}) models are implemented using their published code\footnote{\url{https://github.com/usaito/unbiased-implicit-rec-real} for IPW-MF and \url{https://github.com/dawenl/expo-mf} for ExpoMF.} with necessary modifications. The adversarial-counterfactual MF (\emph{AC-MF}) method is also implemented using the published code\footnote{\url{https://github.com/StatsDLMathsRecomSys/Adversarial-Counterfactual-Learning-and-Evaluation-for-Recommender-System}}, which we apply directly to our datasets. The sequential recommendation model with attention mechanism is also implemented according to the original code\footnote{\url{https://github.com/kang205/SASRec}}. We point out that most of the implementation provided by the authors include the set of hyper parameters they found optimal in their experiments. We re-select those hyper parameters in each of our settings. 

{As for the doubly robust joint learning (DR-MF) approach \citep{wang2019doubly}, the method was originally proposed for the bandit feedback setting or when the feedback data is explicit and the propensity scores can be effectively estimated. Since we focus on the implicit feedback setting, we use the clicks to estimate the propensities, and plug it to the DR approach. The causal embedding approach (CE-MF), on the other hand, assumes the access to a uniform exposure dataset. The authors propose a workaround to construct such data from non-uniform-exposure feedback data \citep{bonner2018causal}, however, they take advantage of the explicit feedback (rating). Since we only have implicit feedback, we random subsample the click data and treat them as if they are from the uniform exposure. }

\textbf{Computation}. All the models are implemented with the auto-differentiation framework of \emph{PyTorch}\footnote{\url{https://pytorch.org/}}. The computations are conducted on a Linux cluster with 2 Nvidia V100 GPU machines (each with 32 Gb memory) and 32 CPU with a total memory of 100 Gb. We use the \emph{sparse Adam}\footnote{\url{https://agi.io/2019/02/28/optimization-using-adam-on-sparse-tensors/}} optimizer to update the hidden factors (user and item embeddings), and the usual Adam optimizer to update the remaining parameters. This is because both the user and item embeddings are relatively sparse in the datasets. Therefore, the Adam algorithm, which leverages the momentum of the gradients from the previous batch, may not be ideal for updating the item and user embedding in the current batch. The sparse Adam optimizer address the above issue.

\subsection{Complete Numerical Results}
\label{sec:append-results}

\begin{table}[hbt]
\small
  \centering
      \caption{Additional testing results on MovieLens-1M. The settings of this table follow from Table \ref{tab:main}. \textbf{DT-(M/M/N)}, for instance, indicates using the proposed method with $f$ and MCF, $w$ as NCF and $g$ as NCF. We {highlight} the \textbf{best} results.}    
      \label{tab:append-movielens-1}
    \resizebox{0.77\columnwidth}{!}{
    \begin{tabular}{c|c c c| c c}
        \hline \hline
       & \textbf{DT}-(M/N/N) & \textbf{DT}-(M/M/N) & \textbf{DT}-(N/M/M) & { DR-MF} & { CE-MF} \\
      \hline
        \multicolumn{6}{c}{\textbf{\textit{MovieLens-1M}}} \\
        Rel@10 &  14.55 (.11) & \textbf{14.63} (.10) & 14.52 (.08) & { 14.28 (.12)}  & { 14.17 (.08)} \\
        Hit@10 &  62.34 (.20) & \textbf{63.02} (.21) & 62.18 (.19) & { 61.03 (.16)}  & { 61.22 (.18)} \\
        NDCG@10 & 32.75 (.11) & \textbf{33.21} (.23) & 32.90 (.16) & { 31.97 (.08)}  & { 32.04 (.07)}  \\
        \hline \hline
        \multicolumn{6}{c}{\textbf{\textit{MovieLens Simulation}}} \\
        Rel@10$^*$ &  4.62 (.03) & 4.67 (.02) & \textbf{4.84} (.04) & { 3.89 (.03)}  & { 3.91 (.02)} \\
        Hit@10 & \textbf{79.89} (.17) & 79.80 (.15) & {79.83} (.20) & { 75.99 (.10)}  & { 75.69 (.12)} \\
        NDCG@10 & 39.98 (.10) & 39.94 (.11) & \textbf{40.57} (.11) & { 38.38 (.07)}  & { 38.21 (.06)} \\
        \hline \hline
    \end{tabular}
    }
\end{table}

We present the complete numerical results in this part of the appendix. We start with the testing performance with respect to the methods for MovieLens-1M that are not shown in the main paper due to the space limitation.

\textbf{Additional results for MovieLens-1M and its synthetic data}. In Table \ref{tab:append-movielens-1}, we show the additional results for the MovieLens experiments, including using the proposed domain-transportation approach with different configurations of $f$, $w$ and $g$ (denoted by \textbf{DT-(f/w/g)}). Note that Attn does not apply to the synthetic data, because the ordering in the synthetic data is randomly generated as we mentioned previously. For notation, we use the short hand \textbf{DT-(M/N/N)}, where in the parenthesis are the assignments for $f$, $w$ and $g$, and \textbf{M} is short for \textbf{MCF}, \textbf{N} is short for \textbf{NCF}. { Also, we experiment with the doubly-robust joint learning approach on MCF (denoted by DR-MF) as proposed by \cite{wang2019doubly}, and the causal embedding approach on MCF (denoted by CE-MF) proposed in \cite{bonner2018causal}. The two methods were proposed for the explicit feedback data, and we observe that while they are still able to improve the performance of MCF in our implict feedback setting, their improvements are much less significant compared with the proposed approach. Also, it appears that the causal embedding approach suffers from high variances, which is due to the workaround when constructing the "uniform exposure" dataset from implicit feedback data.}

\textbf{Complete results for the LastFM data}. In Table \ref{tab:append-lastfm-2}, we show the complete results for the LastFM experiments. We do not conduct the synthetic experiments on the LastFM data, because unlike the other two datasets, the original dataset consists only of implicit feedback. Therefore, neither the relevance nor the exposure can be efficiently estimated from the data. The settings, interpretations and notations for our approach are the same as above.


\textbf{Complete results for the Goodreads simulation experiment}. In Table \ref{tab:append-goodreads-1} and Table \ref{tab:append-lastfm-2}, we show the complete results for the Goodreads experiments. Similarly, \emph{Attn} does not apply to the synthetic data for the same reason discussed above. 

\begin{table}[!]
  \centering
      \caption{Simulation experiment results for the Goodreads simulation. The settings and interpretations are the same as Table \ref{tab:main}. We {highlight} the \textbf{best} results.}    
      \label{tab:append-goodreads-1}
    \resizebox{\columnwidth}{!}{%
    \begin{tabular}{c | c c c c | c c | c c |}
        \hline \hline
       & Pop & IPW-MF & ExpoCF & AC-MF & MCF & \textbf{DT}-MCF & NCF & \textbf{DT}-NCF \\
      \hline
        \multicolumn{9}{c}{\textbf{\textit{GoodReads Synthetic}}} \\
        Rel@10 & 4.34 (.02) & 4.62 (.09)  &  4.57 (.04)  & 4.68 (.10)  &  4.56 (.06)  &  \textbf{4.78} (.06) & 4.99 (.05) & \textbf{5.13} (.07) \\
        Hit@10 & 42.75 (.04)  & 49.22 (.34)  &  48.97 (.26)  & 49.30 (.25)  &  48.06 (.25)  &  \textbf{49.94} (.21) & 52.47 (.18) & 52.96 (.21) \\
        NDCG@10 & 23.20 (.02)  & 25.91 (.15)  &  25.76 (.17)  & 26.04 (.13)  &  25.25 (.15)  &  \textbf{26.72} (.12) & 28.17 (.11) & \textbf{29.78} (.13) \\
        \hline \hline
    \end{tabular}
    }
\end{table}

\begin{table}[!]
  \centering
      \caption{Additional numeric results for the variations of the proposed domain transportation methods with different combinations of the $f$, $g$ and $w$ models).}    
      \label{tab:append-lastfm-2}
    \resizebox{\columnwidth}{!}{%
    \begin{tabular}{c |c c | c c | c c |}
        \hline \hline
       & \textbf{DT}-(M/N/N) & \textbf{DT}-(N/M/M) & \textbf{DT}-(M/N/N) & \textbf{DT}-(N/M/M) & \textbf{DT}-(M/N/N) & \textbf{DT}-(N/M/M) \\
      \hline
        & \multicolumn{2}{c}{\textbf{\textit{GoodReads}}} & \multicolumn{2}{c}{\textbf{\textit{GoodReads Simulation}}} & \multicolumn{2}{c}{\textbf{\textit{LastFM}}} \\
        Rel@10 & 5.92 (.04) & \textbf{6.28} (.05) & 4.65 (.04) & 4.95 (.06) & 6.53 (.06) & 6.38 (.07) \\
        Hit@10 & 58.48 (.22) & \textbf{61.64} (.30) & 48.97 (.26)  & 49.30 (.25)  & 81.09 (.33) & 78.96 (.31) \\
        NDCG@10 & 30.32 (.17) & 36.32 (.22) & 25.88 (.14) & 28.61 (.13) & 52.05 (.18) & 51.37 (.20) \\
        \hline \hline
    \end{tabular}
    }
\end{table}

We then present the complete results in correspondence to Figure \ref{fig:main}. In Figure \ref{fig:append-goodreads-weights}, we the analysis of the learnt weights on the \textbf{Goodreads} data and its simulation. In Figure \ref{fig:append-lastfm-weights}, we show the same set of analysis for the \textbf{LastFM} dataset. Finally, in Figure \ref{fig:append-lambda}, we provide the ablation study and sensitivity analysis on $\lambda$, for both \textbf{Goodreads} and \textbf{LastFM} dataset.

\begin{figure}[htb]
    \centering
    \includegraphics[width=0.8\columnwidth]{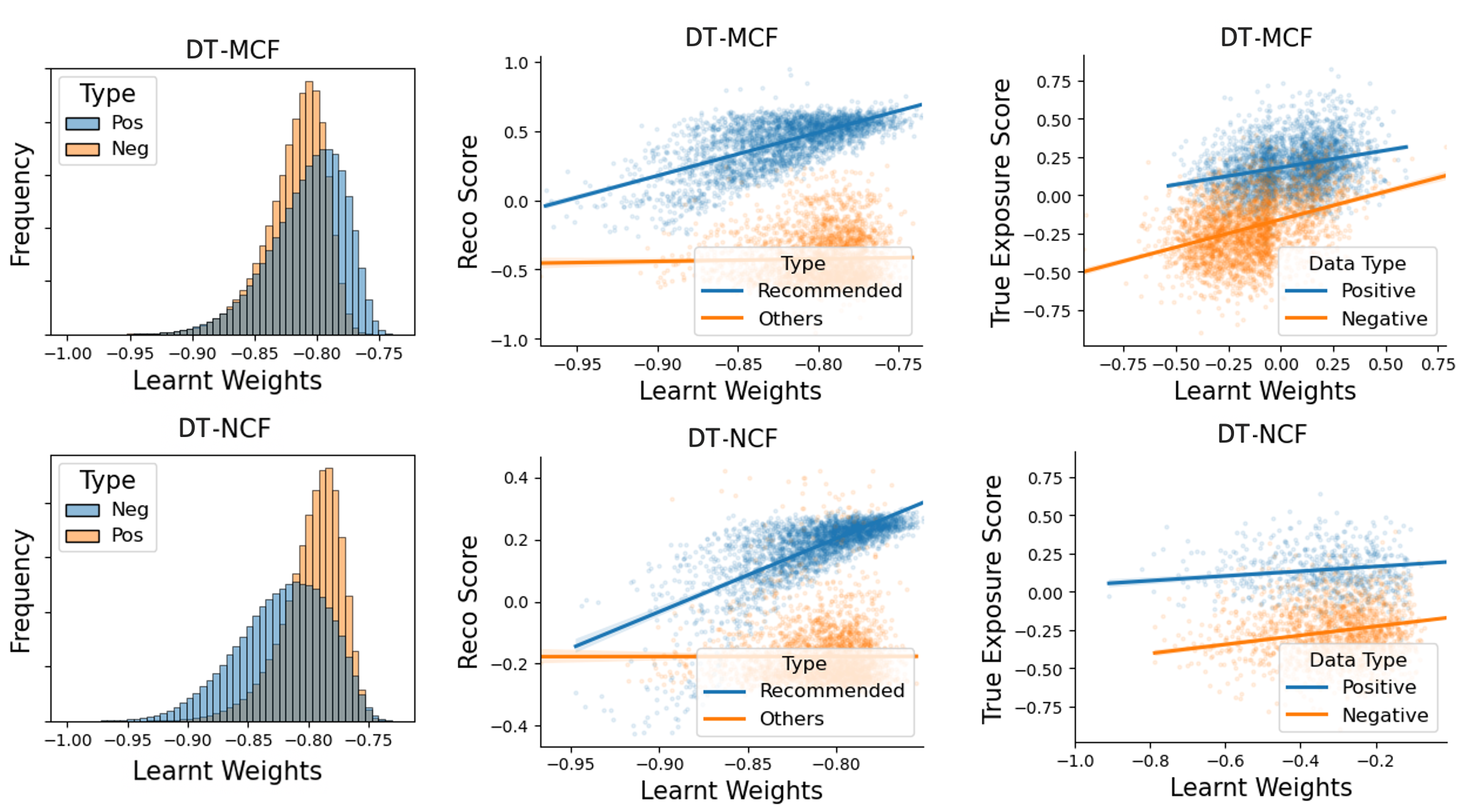}
    \caption{Analysis on the learnt weights for the \textbf{Goodreads} dataset. The setting and interpretation of the figure follows from Fig. \ref{tab:main}a and \ref{tab:main}b in Section \ref{sec:experiment}.}
    \label{fig:append-goodreads-weights}
\end{figure}

\begin{figure}[htb]
    \centering
    \includegraphics[width=0.6\columnwidth]{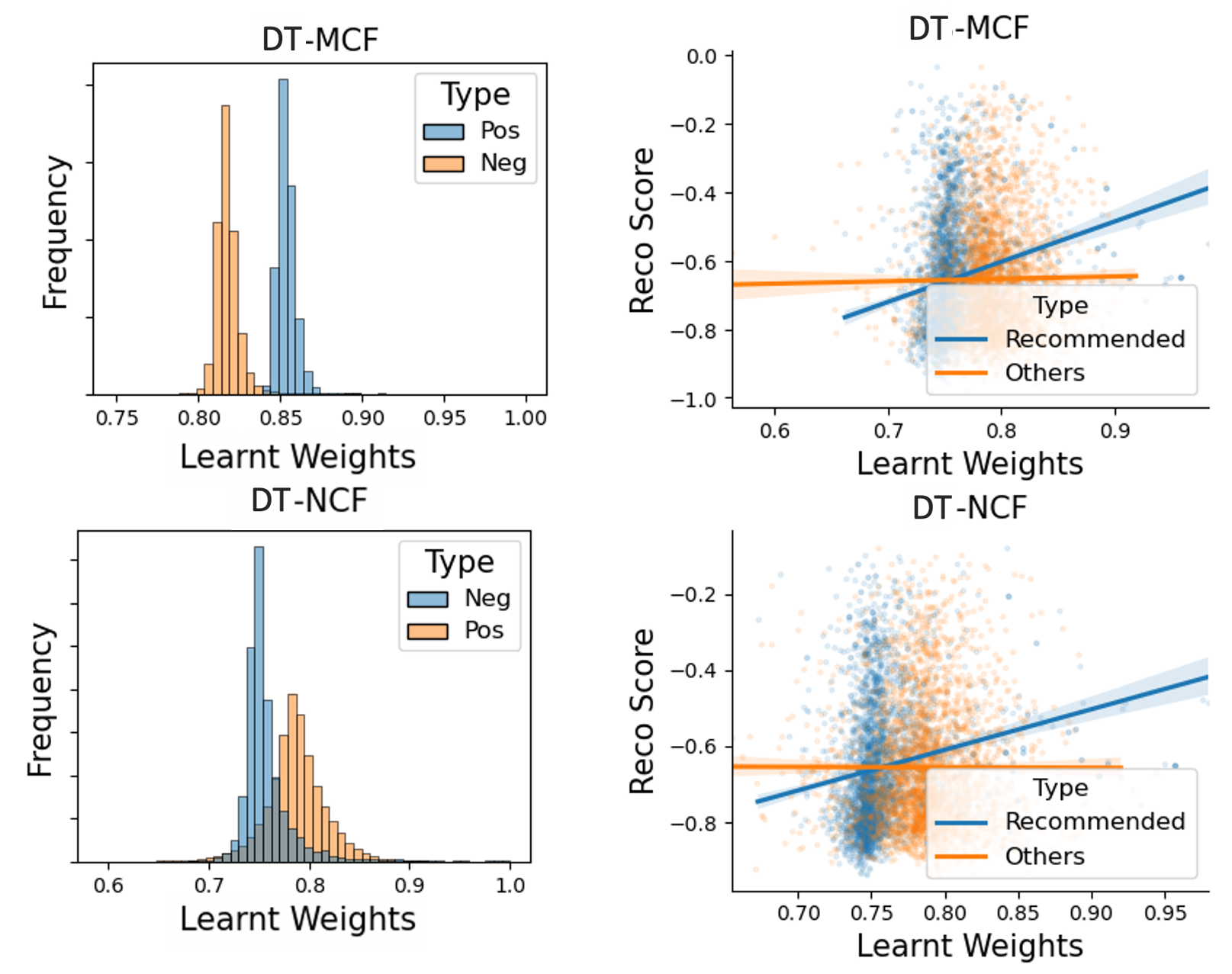}
    \caption{Analysis on the learnt weights for the \textbf{LastFM} dataset. Note that the semi-synthetic experiments are not conducted on the LastFM dataset. The setting and interpretation of the figure follows from Fig. \ref{tab:main}a and \ref{tab:main}b in Section \ref{sec:experiment}.}
    \label{fig:append-lastfm-weights}
\end{figure}

\begin{figure}[htb]
    \centering
    \includegraphics[width=0.7\columnwidth]{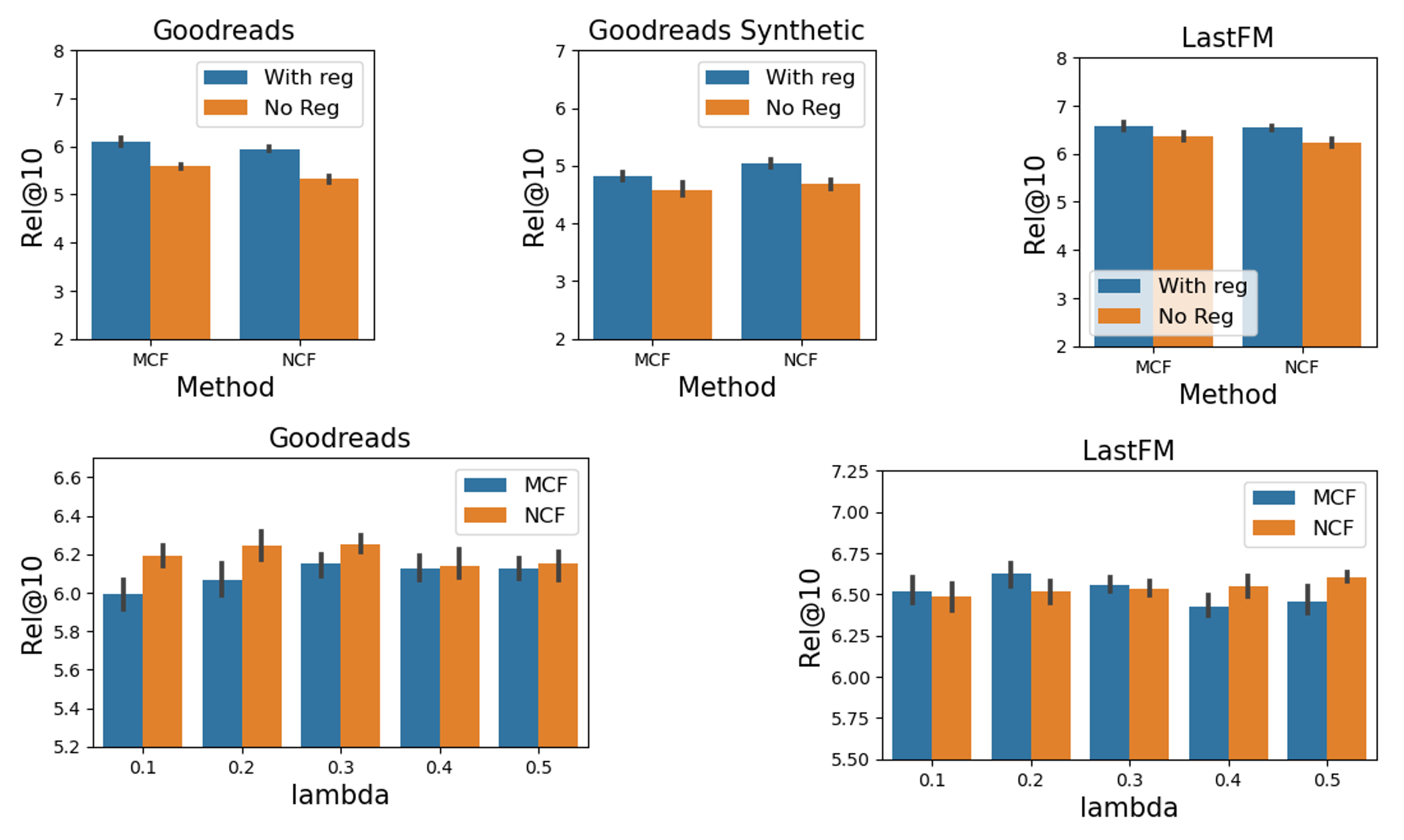}
    \caption{The ablation study on the regularization component, and the sensitivity analysis on $\lambda$, for both the \textbf{Goodreads} and \textbf{LastFM} datset. The setting and interpretation of the figure follows from Fig. \ref{tab:main}d in Section \ref{sec:experiment}.}
    \label{fig:append-lambda}
\end{figure}

\subsubsection{Sensitivity Analysis on the optimization setting { and overlapping}}\label{sec:append-sensitivity}
To ensure the convergence of the minimax optimization without being trapped in a cycle, it is a standard practice to use the two-time-scale GDA algorithm. Our situation is even more involved since we not only have a minimax game, but there are two minimization players. 

Intuitively, $f$ is moved only when $P_w$ has approximated $P_f$ sufficiently well. It means that $w$ has converged in the sense that $d_W(P_f, P_w)$ is sufficiently small. Since we use the Monge and Kantorovich formulation to represent the 1-Wasserstein distance $d_W$, i.e., $d_W(P, Q) = \sup_{g:\Vert g\Vert_L \leq 1} \int_{\Dcal} gdP - \int_{\Dcal}gdQ$, we need to ensure $g$ to converge every time $w$ moves so that the integral difference is approximately equal to $d_W(P_f, P_w)$. In other words, this is a three-layer loops: when $f$ moves one step, we need $w$ to move $\text{step}_w$ steps until convergence; when $w$ moves one step, we need $g$ to move $\text{step}_g$ steps until convergence. 

Figure \ref{fig:sensitivity_w_g_step}a shows that as the $\text{step}_w$ increases when $\text{step}_g$ is fixed, the overall performance of our method is boosted, which aligns with the above claim that $w$ should converge before $f$ moves. We observe similar phenomenon when $\text{step}_g$ increases as $\text{step}_w$ is fixed, which corroborates the claim that $g$ should converge before $w$ moves. These results suggest that we might want to use a large $\text{step}_w$ and a large $\text{step}_g$. However, it can be computationally expensive. From our experience, we can use a relatively large $\text{step}_g$ and a slightly smaller $\text{step}_w$ to achieve a good performance. In addition, appropriate choices of learning rates $lr_f$, $lr_w$ and $lr_g$ can expedite the convergence of $w$ and $g$ so that we can use less steps.

\begin{figure}[ht]
    \centering
    \includegraphics[width=0.9\textwidth]{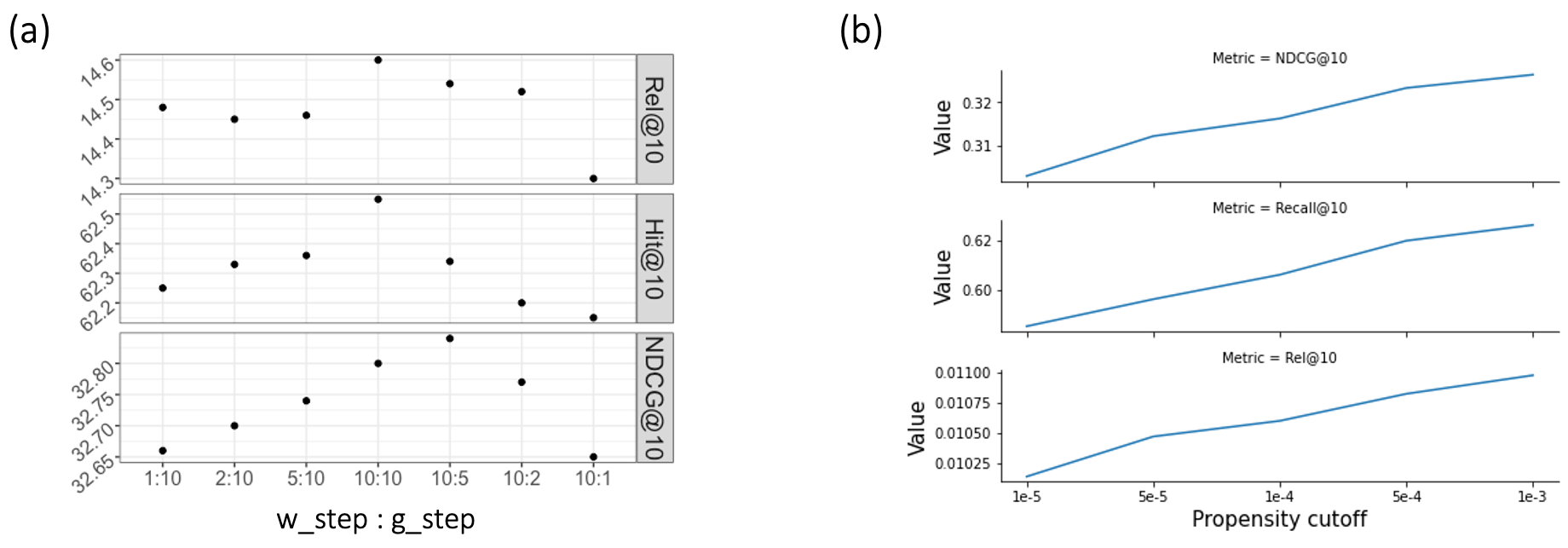}
    \caption{(a).Performance of minimax optimization with different number of optimization steps for updating $w$ and $g$. The analysis is conducted on the \textbf{MovieLens-1M} dataset. { (b). Sensitivity analysis on the threshold of propensity score for IPW-MF. In the orignial paper \citep{saito2020unbiased}, the authors apply a threshold on the smallest possible propensity score as if there was sufficient overlap. Here, we vary the threshold to hypothetically generate the different overlapping scenarios.}}
    \label{fig:sensitivity_w_g_step}
\end{figure}

\subsection{Analysis for the complete results}
\label{sec:append-analysis}

We first observe from Table \ref{tab:append-movielens-1} that the sequential recommendation methods achieve overall better performances than the others. It is expected because the sequential signal in the MovieLens-1M dataset is usually substantial, as point out by \citet{kang2018self}. Similar to the previous experiments, DT-Attn improves the performance of the regular Attn model. As for the variations of our methods, they all improve the performance of the corresponding $f$ model, and even outperforms the DT methods where $f$, $w$ and $g$ use the same "base model". This is a phenomenon that worth future investigation, since the results might indicate the \emph{DT-(f/w/g)} methods combine the strengths of the base models. 

The results for the Goodreads and its simulation experiments in Table \ref{tab:append-goodreads-1} and Table \ref{tab:append-lastfm-2} show the similar patterns. Firstly, using the proposed transportation-regularized approach improves the performances of the baseline methods, and the best results are achieved by our \textbf{DT-X} methods. Secondly, while the other enhanced MF/CF methods also improve upon the baseline MCF model, but their improvements are much less significant than the DT approach. Thirdly, the \emph{DT-(f/w/g)} methods also achieve more significant improvements compared with the DT-f methods. We point out that the sequential recommendation method (\emph{Attn}) does not particularly suit the dataset, since it is outperformed by the non-sequential methods. Still, \emph{DT-Attn} achieves better performances than \emph{Attn}.
Finally, the additional results for the LastFM dataset in Table \ref{tab:append-lastfm-2} have the similar patterns. 

As for the analysis on the learnt weights, we observe from Figure \ref{fig:append-goodreads-weights} and Figure \ref{fig:append-lastfm-weights} that the patterns we discussed in Section 6 for the \textbf{MovieLens} dataset and its simulation holds almost the same for the \textbf{Goodreads} and \textbf{LastFM} datasets. Firstly, the learnt weights for the instances among the feedback data can be distinguished by whether the instance is positive or negative. We observe that the positive instance tends to have larger weights than the negative instances, which means the weighting model is learnt to emphasise the "higher-quality" positive implicit feedback. Secondly, the learnt weights also get larger for the instances that will be recommended under $f$, and the larger the score of $f$, the higher the weights. It means the weighting function works harder on balancing $P_f$ with $P_w$, which conforms to our design. Lastly, for the simulation, the learnt weights tend to be larger for the negative instances with higher exposure probability, which can be considered the strong negative instances. This is because they are more likely to be exposed but are not clicked.

\subsection{ Efficiency analysis}
\label{sec:append-industrial}

\begin{figure}[hbt]
    \centering
    \includegraphics[width=0.9\linewidth]{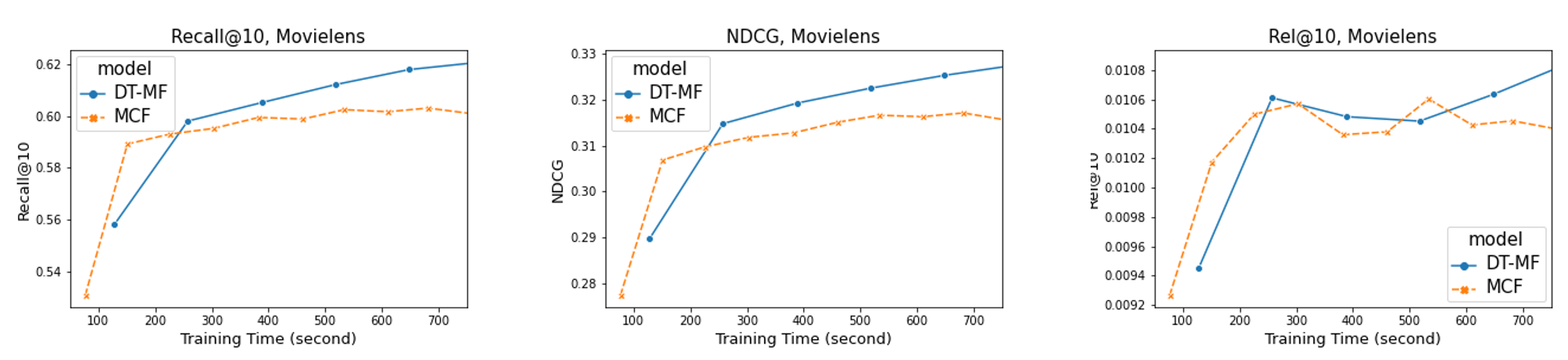}
    \caption{ Analysis on the efficiency of the proposed approach. We report the training time (on the x-axis) and the per-epoch evaluation results (on the y-axis), using the \textbf{Movielens-1M} data as examples. We compare between the original MCF and the DT-MF approach, and when training both models, all the model and data configurations are kept exactly the same.}
    \label{fig:append-efficiency}
\end{figure}

{
In terms of the space complexity, it is evident that our approach need to store the parameters and gradients of the f, w, and g models (if no parameter sharing is considered). For the time complexity, however, the proposed approach is highly efficient due to the following reasons.

1. As it is shown in Figure \ref{fig:append-efficiency}, although the DT-MF approach has a larger per-epoch training time, it makes training progress much faster than the original MCF thanks to the rewieghting and DT regularization. For the Movielens dataset, DT-X catches up and outperforms X in a couple of training epochs even before the X model converges.

2. The update of f, w, and g can be efficiently parallelized.

3. Our approach does not change the inference time since only the $f$ component will be used for evaluation and testing, just like when using $f$ alone.
}

\subsection{Deployment to real-world IR system and online testing}
\label{sec:append-industrial}

An important motivation of our work is to solve the practical challenges of industrial IR systems, and the working mechanism we described in Section \ref{sec:introduction} applies to most settings. In particular, many real-world recommenders are deterministic, and we can only expose a tiny proportion of the catalog to the customers due to the limited slots and business requirements.
For many industrial tasks, the revenue is directly related to the performance of recommender since unqualified recommendations may hurt customer satisfaction severely. Therefore, the feedback data often covers extremely limited support of the product space. It causes a realistic dilemma:
\begin{itemize}
    \item if the candidate recommender $f$ recommends products very differently from the previous exposure, then the feedback data may not provide enough evidence to assert the performance of $f$;
    \item if the candidate recommender $f$ recommends products highly similar to the previous exposure, then $f$ will inevitably inherit the various data bias. 
\end{itemize}

\begin{table}[htb]
\small
    \centering
    { \begin{tabular}{|c|ccc|}
    \hline \hline
         & Recall@20 & NDCG@20 & $\hat{\text{CTR}}^*$  \\ \hline
        D\&W (control) & +0.0 & +0.0 & +0.0 \\
        IPW-D\&W (variation1) & +0.23\% & +0.06\% & +0.05\% \\
        DT-D\&W (variation2) & +0.56\% & +0.09\% & +0.07\% \\ \hline \hline
    \end{tabular}}
    \caption{ Offline evaluation of the three models used in the online A/B/C testing. Due to privacy reasons, we are only able to report the relative lift percentage with respect to the baseline. We use D\&W to represent the original deep\&wide control model, IW-D\&W to represent the approximate propensity-weighted version, and DT-D\&W to represent the proposed domain transportation approach when coupled with the deep\&wide model. $^*$: the $\hat{\text{CTR}}$ is computed using the history exposure log and the past estimated CTR, similar to how we computed the relevance score $Rel$ in the benchmark and simulation experiments.}
    \label{tab:my_label}
\end{table}

From this realistic perspective, our approach aims to find the right balance between the two extremes: the weighting model helps to reshape the feedback data domain (to better fit the objective risk), and the regularizer enforces $f$ to be sufficiently close to the reshaped feedback data domain. To test the competence of our approach for industrial tasks, we apply the proposed domain transportation approach to an industrial recommendation task, and conduct online A/B/C testing to examine the real-world performance.

\begin{figure}[htb]
    \centering
    \includegraphics[width=0.85\columnwidth]{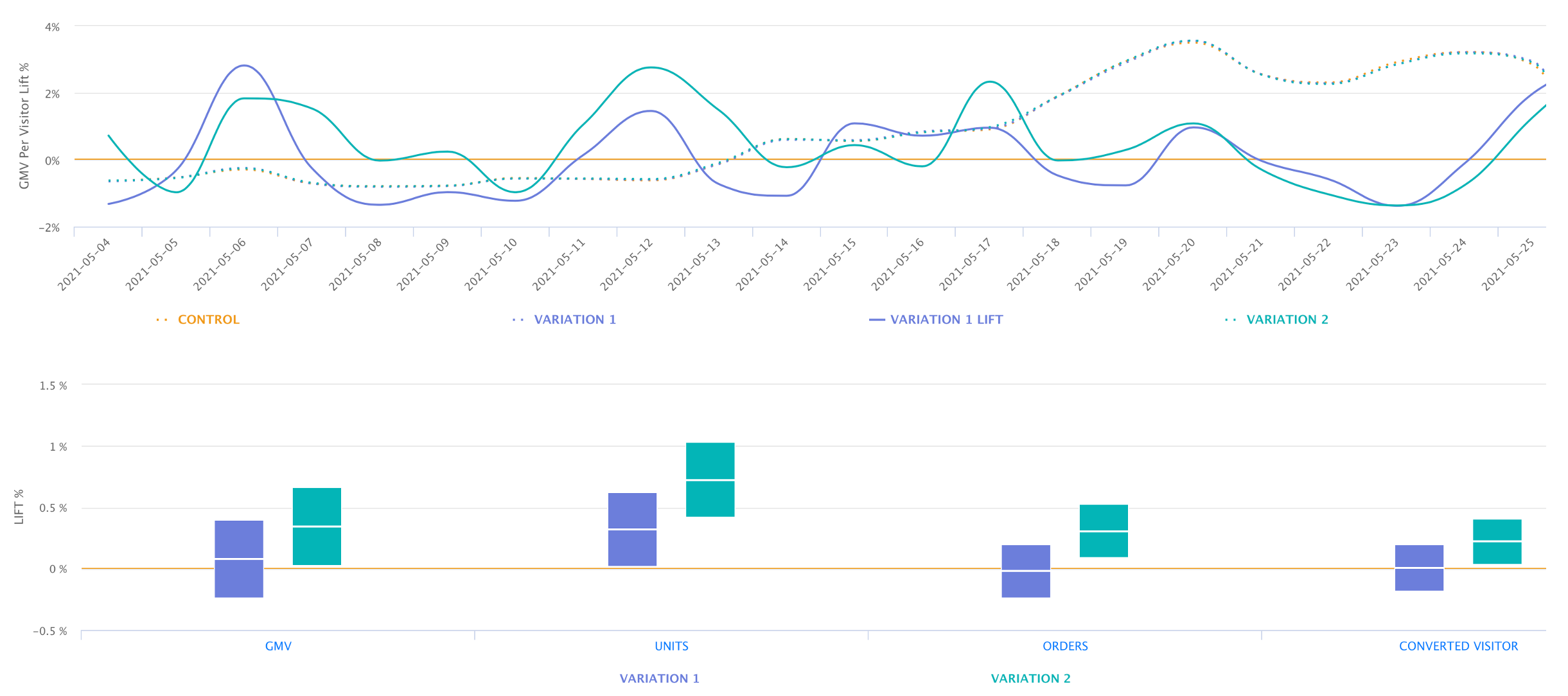}
    \caption{Online A/B/C testing results for the similar-item model under direct training (\textbf{control}), IPW-adjusted training (\textbf{variation 1}) and the proposed transportation-constraint risk minimization (\textbf{variation2}). The first plot shows the daily gross merchandise value (\textbf{GMV}) lift, and the second plot shows the overall lift of \emph{GMV}, number of units per checkout (\emph{UNITS}), number of orders per visit (\emph{ORDERS}), and the total number of converted visitors (\emph{CONVERTED VISITOR}).}
    \label{fig:append-abc-testing}
\end{figure}

The context of our task is to recommend items that are similar to the current webpage's product for a major e-commerce platform in the U.S. The platform hosts more than \emph{a hundred million} products ranging from high-end electronics to daily grocery. Each item has its contextual features, including rating, price, taxonomy, etc. The pre-trained embeddings are also available for the items obtained from a modified version of the SGNS algorithm. When applying the item embedding to the model, we use the pre-trained embeddings as initialization and update the parameters with a lower learning rate than the other model parameters. We employ an architecture that is similar to the classical \emph{Deep-and-wide} model, initially proposed by \citet{cheng2016wide} for recommending YouTube videos, with the difference that we do not use the real-time features here.

We trained three versions of the deep-and-wide model offline:
\begin{itemize}
    \item \textbf{control}: directly training $f$ using the implicit feedback data;
    \item \textbf{variation1}: first training an exposure model using the past exposure log, apply its normalized version (for controlling the variance) as the weighting function $w$, and then train $f$ in a way similar to the baseline \emph{IPW-MF} mentioned in Section \ref{sec:experiment};
    \item \textbf{variation2}: train $f$ using the proposed transportation-regularized risk minimization, where $w$ and $f$ are neural CF models that take the item embeddings as input.
\end{itemize}

All three versions are trained under the same setting. We deploy online A/B/C testing to examine the real-world performances, and the monitored metrics and testing results are provided in Figure \ref{fig:append-abc-testing}. We see that our transportation-regularized risk minimization (variation2) consistently outperforms the original model (control) and the IW-adjusted model (variation1) in term of the gross merchandise value (GMV), which is the most critical metric for industrial recommender system. Also, the proposed approach improves the other monitored metrics (lower panel of Figure \ref{fig:append-abc-testing}) more significantly than the IW-adjusted model. Our deployment results further suggest the capability of the proposed approach for improving the performance of industrial IR systems.

\section{ Extended Literature Review}
\label{sec:append-literature}

{

When viewing our solution from the causal inference perspective, our objective resembles the counterfactual loss proposed by \cite{shalit2017estimating} focusing on individual treatment effect, which was also discussed in the following work \citep{johansson2019support,johansson2020generalization}. Further, \cite{johansson2019support} studies from the domain adaptation perspective the issues of insufficient overlapping. Unlike these works where the target domain of interest is fixed in advance, IR systems get to choose the counterfactual world in which they want to act. It leads to the main difference between our work and that line of research: we can control the domain in which the counterfactual risk is computed. 

Recently, there has been fruitful research progress in off-policy learning and data missing-not-at-random. Again, most existing methods treat the uniform-exposed domain as the target, while we aim at the deployment domain formed by the IR system. Further, most of them require either a sufficiently randomized logging policy or explicit ratings in the feedback, while our approach is primarily designed for implicit feedback under a deterministic policy. For instance, \cite{sachdeva2020off} studied the overlapping issue for learning from bandit feedback. Similarly, the causal embedding approach \citep{bonner2018causal} requires access to a subset of randomly exposed feedback, and many others rely on the explicit ratings to estimate the historical propensities \citep{wang2019doubly,saito2020asymmetric,liu2020general}. On the other hand, the interventional nature of IR systems is also discussed in \cite{xiao2021general}, but their solution is devised from the reinforcement learning setting. 

The Wasserstein distance has also been employed by distribution-robust optimization (DRO) \citep{kuhn2019wasserstein}, and DRO in off-policy learning has received increasing attention lately \citep{xu2022towards,si2020distributionally}. For instance, \cite{si2020distributionally} uses Wasserstein distance to assert certain robustness to the contextual feature distribution during bandit off-policy learning, and \cite{faury2020distributionally} proposes a robust counterfactual learning framework to handle the various uncertainties in bandit feedback data. Our work differs fundamentally from that research venue as we use Wasserstein distance to regularize learnt policy (via constraining its deployment domain) rather than assert its robustness. Further, DRO methods often have a min-max formulation, while our initial objective in (\ref{eqn:obj-orig}) and (\ref{eqn:new-obj}) tackle a min-min problem.

}

\end{document}